\documentclass[12pt,oneside]{article}

\usepackage{amsmath}
\usepackage{amsthm}
\usepackage{amssymb}
\usepackage{graphicx}

\numberwithin{equation}{section}

\newtheorem{theorem}{Theorem}
\newtheorem{lemma}[theorem]{Lemma}
\newtheorem{proposition}[theorem]{Proposition}

\theoremstyle{definition}

\theoremstyle{remark}

\usepackage[utf8]{inputenc}
\usepackage[T1]{fontenc}

\usepackage{hyperref}
\hypersetup{
  colorlinks=true,
  linkcolor=red,        % For equations, sections, etc.
  citecolor=blue,       % For citations
  urlcolor=blue
}

\begin{document}

\title{Chern-Simons deformations of the gauged O(3) Sigma model on compact surfaces}
\author{René I. García-Lara\thanks{E-mail: \texttt{israel.garcia@im.unam.mx}}}
\date{Instituto de Matemáticas,\\
	Universidad Nacional Autónoma de México\\
	Avenida Universidad s/n, 62210,\\
	Cuernavaca, Mexico \\
\today
}

\maketitle

\begin{abstract}
	Existence of solutions to the field equations of the gauged 
	Chern-Simons-O(3)-Sigma model on a compact Riemann surface 
	is proved by a topological method. 
	Existence of a minimal deformation constant \(\kappa_{*} > 0 \) 
	is proved, such that for any prescribed configuration of vortices and antivortices, at least one solution exists for \(|\kappa| \leq
	\kappa_{*} \). 
	For small values of the Chern-Simons deformation parameter $\kappa$, 
	it is proved that the field equations admit multiple solutions, 
	provided the total number of vortices and antivortices are different. 
	The Maxwell limit is computed for solutions of the 
	field equations. 
	In contrast, if the number of vortices equals the number 
	of antivortices, it is proved that the field equations admit at least one solution for any value of $\kappa$ and the limit $\kappa \to \infty$ is 
	proved. Dependence of the fields on the deformation parameter is
	investigated numerically on the sphere. 
\end{abstract}

\small{\textit{Keywords:} Chern-Simons, O(3)-Sigma model, Asymptotic behaviour}

% MSC 
% 35A01 Existence problems for PDEs: global existence, local existence, non-existence
% 35A16 Topological and monotonicity methods applied to PDEs
%70S15  Yang-Mills and other gauge theories in mechanics of particles and systems
%35Q51  Soliton equations 
%35Q60  PDEs in connection with optics and electromagnetic theory
%47H11  Degree theory for nonlinear operators
%58Cxx 	Calculus on manifolds; nonlinear operators

\section{Introduction}\label{sec:intro}

The \(O(3)\)-Sigma model is a well known field theory model in \(2 + 1\) dimensions for a matter field coupled to a gauge field with \(U(1)\) symmetry~\cite{schroersBogomolnyiSolitonsGauged1995b}. What makes this model interesting is that it admits topological soliton solutions, determined by a finite set of core points on space of two different types, the vortex and antivortex positions. The moduli space of vortices-antivortices is the finite dimensional space of gauge equivalence classes of solutions to the field equations determined by the core set. A thorough study of the moduli space can be found in reference~\cite{romaoGeometrySpaceBPS2020}.  
Let \((\Sigma, g) \) be a surface with a Riemannian metric, equipped also with a \( U(1) \to P \to \Sigma \) principal bundle. Let \( \rho : U(1) \to SO(3) \) be a representation of \( U(1) \) as rotations with respect to a fixed axis and let \( \ell = P \times_{\rho} \mathbb{S}^2 \) be a complex line bundle with
\( U(1) \) symmetry. On each fibber \( \ell_p \) we consider a Riemannian metric
compatible with the almost complex structure induced by \( \rho \). The representation induces two smooth sections representing its fixed points, we choose arbitrarily one of them and call it \( n \), the north pole section. 
Extend the metric to the Lorentzian manifold \(M = \mathbb{R} \times \Sigma \) as \(\tilde{g} = -dt^2 + g \), if \( \mathcal{A} \) is a time varying connection, locally represented in an open set \(I \times U \) by a connection form \( a \in \Omega^1(I \times U) \), and \( \phi: I \times U \to \ell \) is a local section, the covariant derivative induced by the connection is, 
\begin{equation}
	D \phi = d\phi - a \otimes X_{\phi},
\end{equation}
where \( X \in \mathfrak{X}(\mathbb{S}^2) \) is the canonical generator of the
representation of \( \mathfrak{u}(1) \) induced by \( \rho \), and \( X_{\phi}
\in T_{\phi}\ell \) is the pullback. Recall the curvature form is the globally defined 2-form \(F = da \), which decomposes as \(F = e\wedge dt + B \), where \(e \) is the electric 1-form and  \(B \) is the magnetic 2-form, both defined globally. 
The \(O(3)\)-Sigma model Lagrangian is,
\begin{equation*}
	\mathcal{L}_{O(3)} = \frac{1}{2} \langle D_{\mu}\phi, D^{\mu}\phi \rangle - \frac{1}{4 q} F_{\mu\nu}F^{\mu\nu} 
	- \frac{q}{2} {\left( 
		 \tau - \langle n, \phi \rangle 
		\right)}^2.
\end{equation*}
Brackets refer to the inner product on \(\ell \) induced by the round metric on \(\mathbb{S}^2\), whereas the parameters satisfy \(q > 0 \) and \(-1 < \tau < 1 \).

Ghosh and Ghosh added a Chern-Simons term to the \(O(3)\)-Sigma model to produce soliton solutions that spin in the internal space~\cite{ghoshTopologicalNontopologicalSolitons1996}. Later, Kimm, Lee and Lee introduced a neutral scalar field \(\mathcal{N}\) in order to make the following Chern-Simons deformation self-dual~\cite{kimmAnyonicBogomolnyiSolitons1996}, 
\begin{equation}
	\mathcal{L} = \frac{1}{2} \langle D_{\mu}\phi, D^{\mu}\phi \rangle - \frac{1}{4 q} F_{\mu\nu}F^{\mu\nu} 
	+ \frac{1}{2q} \partial_{\mu}\mathcal{N} \partial^{\mu}\mathcal{N} + \frac{\kappa}{2} * (a \wedge F) 
	- V(\phi, \mathcal{N}),
\end{equation}
where the potential function \(V \) is 
\begin{equation}
	V(\phi, \mathcal{N}) = \frac{q}{2} {\left( 
		\kappa \mathcal{N} + \tau - \langle n, \phi \rangle 
		\right)}^2 + \frac{1}{2} |\mathcal{N} X_{\phi}|^2.
\end{equation}
 If \(\kappa = 0\), \(\mathcal{L}\) reduces to \(\mathcal{L}_{O(3)}\), whereas for \(\kappa \neq 0\), it introduces a Chern-Simons term. This term is not gauge invariant; however, any two gauge related terms differ by a divergence. 
Variations with respect to \(a_{t} \) yield Gauss law, 
\begin{equation}
	\frac{1}{q} \delta e = \langle D_{t} \phi, X_{\phi} \rangle + \kappa * B, 
\end{equation}
where \(\delta = -*d* \) is the co-differential on \(\Sigma \) with respect to \(g \). The standard Bogomol'nyi argument and the Gauss law yield stationary solutions satisfying the self dual equations (see ref.~\cite{chaeCondensateMultivortexSolutions2001}),
\begin{align}
	\partial_{t}{\mathcal{N}}     & = 0, \label{eq:nf}                \\
	e           & = d\mathcal{N}, \label{eq:e}                \\
	D_t\phi     & = \mathcal{N} X_{\phi}, \label{eq:X}        \\
	\bar{\partial}_{\mathcal{A}}\phi & = 0,\label{eq:hf}        \\
	*B & = -q (\kappa \mathcal{N} + \tau - \langle n, \phi \rangle),\label{eq:B}
\end{align}
where \( \bar{\partial}_{\mathcal{A}} \) is the anti-holomorphic part of \(D \) in 
\( T\ell \otimes \mathbb{C} \). Let \( \phi_3 = \langle n, \phi \rangle \) be the north pole projection of a section, the sets of vortices and antivortices are the preimages \( \phi_{3}^{-1}(1) \) and \(\phi_{3}^{-1}(-1) \) respectively. By equation~\eqref{eq:hf}, these sets are finite ~\cite[~p.~8]{sibnerAbelianGaugeTheory2010}. 
If we consider an holomorphic coordinate system \( \varphi_{1}: \mathbb{C} \to U \subset \Sigma \) about \( p \in \phi_{3}^{-1}(\pm 1) \) and a trivialisation \( \varphi_2: \ell_U\to \mathbb{C} \times \mathbb{S}^2 \), such that 
\begin{equation}
	(\mathrm{id}, \phi') = \varphi_2 \circ \phi \circ \varphi_1: \mathbb{C} \to
	\widehat{\mathbb{C}},
\end{equation}
then the degree of the map \( \phi' \) at \( \varphi_1^{-1}(p) \) is independent
of the holomorphic coordinates chosen. We call this the degree of \( \phi \) at \(p \). 

To study the Bogomol'nyi equations, notice that in the gauge \( a_0 = -\mathcal{N} \), the fields become stationary,
\( \partial_{t}\phi = \partial_{t} a = 0 \). Let \( u:\Sigma \to \overline{\mathbb{R}} \) be the function 
\begin{equation}
	u = \log \left(\frac{1 - \phi_3}{1 + \phi_3} \right). \label{eq:fn-u}
\end{equation}
The Bogomol'nyi equations imply \((u, \mathcal{N}) \) is a solution to the distributional PDEs, 
\begin{equation}
	\begin{aligned}
		\nabla^2 u & = 2q \left(
		\kappa \mathcal{N} + \tau + \frac{e^u - 1}{e^u + 1}
		\right) + 4\pi \sum_{p\in\phi_{3}^{-1}(\pm 1)} \deg(p) \delta_p,\\
		\nabla^2 \mathcal{N} & = \kappa q^2 \left(
		\kappa \mathcal{N} + \tau + \frac{e^u - 1}{e^u + 1}\right) + q \frac{4 e^u}{{(e^u + 1)}^2} \mathcal{N}, 
	\end{aligned} \label{eq:field-theory-system}
\end{equation}
where \(\nabla^2 \) is the Laplace-Beltrami operator and the degree \(\deg(p) \)  is positive for vortices and negative for antivortices. We will denote as 
\[ k_{\pm} = \sum_{p \in \phi_{3}^{-1}(\pm 1)} |\deg(p)|  \]
the total number of vortices or antivortices counted with multiplicity.

System~\eqref{eq:field-theory-system} is well known in the literature when the space is \(\mathbb{R}^2\) or the flat torus, while the more general case of a compact surface is barely studied. 
On the euclidean plane, Han and Nam prove~\cite{hanTopologicalMultivortexSolutions2005} that the field equations admit a solution up to upper and lower bounds for \(\kappa\). If there are only vortices or antivortices, there are solutions for any value of \(\kappa\)~\cite{hanExistenceAsymptoticsTopological2011c}. On the contrary, in a flat torus if \(k_{+} \neq k_{-}\), \(\kappa\) is bounded and there are multiple solutions for the governing elliptic problem~\cite{chaeCondensateMultivortexSolutions2001,chiacchio2007multiple}. 

The common approach to study the elliptic equations is by variational methods, keeping \(\kappa\) fixed and tuning \(q\) to obtain pairs of sub and super solutions~\cite{hanMultiplicitySelfDualCondensate2014,ricciardiMultiplicityNonlinearFourthorder2004,chiacchio2007multiple}. The asymptotics of the solutions when \(\kappa \to 0\) (the maxwell limit) and \(q \to \infty\) (the Chern-Simons limit) are also of interest~\cite{ricciardiAsymptoticsNonlinearElliptic2003,huangTopologicalSolutionsVortices2022}. 
Here our approach is different, rather than varying \(q\), we aim to study Chern-Simons deformations inspired in the continuation method proposed in reference~\cite{flood2018chern} for the Abelian-Higgs model. We start with a known solution to the \(O(3)\)-Sigma model and perform a small initial deformation in \(\kappa\); then, with help of  Leray-Schauder theory~\cite[Chp.~3]{chipotHandbookDifferentialEquations2004}, we extend to a continuous deformation of the parameter. In comparison, in reference~\cite{hanCondensateSolutionsSelfdual2019} the authors used Leray-Schauder degree theory to prove existence of solutions to the elliptic equations for \(k_{+} = 0\), \(\tau= 1\) on a torus. 
The main results of the work are summarized in the following theorems.

\begin{theorem}\label{thm:existence}
	If \(\tau \in (-1, 1)\), \(q > 0\) are constants such that,
	\begin{equation}
		- 1 - \tau <
		\frac{2\pi (k_+ - k_-)}{qVol(\Sigma)}
		< 1 - \tau, \label{eq:bradlow-cond-thm}
	\end{equation}
	the following statements hold.
	\begin{enumerate}
		\item For any \(k \geq 0\), there is a family \((\phi_{\kappa}, \mathcal{A}_{\kappa}, \mathcal{N}_{\kappa})\) of solutions to the Bogomol'nyi equations, such that for any \(\epsilon > 0\) there is \(\delta > 0\) such that \(|\kappa| < \delta\) implies 
		\begin{equation} \|(\phi_{3})_{\kappa} - (\phi_{3})_{0}\|_{C^k(\Sigma)} + \|\mathcal{N}_{\kappa}\|_{C^k(\Sigma)} < \epsilon. \label{eq:top-limit-k-0} 
		\end{equation}
		\item There is a constant \(\kappa_{*} > 0\) independent of core positions, such that whenever \(|\kappa| < \kappa_{*}\), there is a gauge equivalence class of solutions to the Bogomol'nyi equations~\eqref{eq:nf}--\eqref{eq:B}. If \(k_{+} \neq k_{-}\), we can take \(\kappa_{*}\) such that there are two gauge equivalence classes of solutions.
		\item If \(k_{+} \neq k_{-}\), given a neighbourhood \(U\) of the core set, for any \(\epsilon > 0\), there is a \(\delta > 0\) such that if \(|\kappa| < \delta\), there is a solution \((\phi_{\kappa}, \mathcal{A}_{\kappa}, \mathcal{N}_{\kappa})\) to the Bogomol'nyi equations, such that 
		\begin{equation}
			\|(\phi_{3})_{\kappa} \mp 1 \|_{C^1(\Sigma\setminus \bar{U})} + \left\|\kappa\mathcal{N}_{\kappa} + \frac{2\pi (k_{+} - k_{-})}{q Vol(\Sigma)} \mp 1 + \tau\right\|_{C^{1}(\Sigma)} < \epsilon, \label{eq:non-top-limit-unbalance}
		\end{equation}
		where the sign is chosen depending whether \(k_{+} - k_{-}\) is positive or negative.
	\end{enumerate}
\end{theorem}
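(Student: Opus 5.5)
We reduce the Bogomol'nyi equations \eqref{eq:nf}--\eqref{eq:B} to the elliptic system \eqref{eq:field-theory-system}. We remove the Dirac masses with a background function \(u_0\) solving
\[
\nabla^2 u_0 = 4\pi \sum \deg(p)\delta_p - \frac{4\pi(k_+-k_-)}{Vol(\Sigma)},
\]
and write \(u = u_0 + v\). The unknowns are then \((v,\mathcal{N}) \in C^{2,\alpha}(\Sigma)^2\), and the system becomes \(F(v,\mathcal{N},\kappa)=0\) for a smooth map \(F\) into \(C^{0,\alpha}\). The gauge class of \((\phi,\mathcal{A},\mathcal{N})\) is recovered from \((u,\mathcal{N})\) by the standard Taubes-type reconstruction, so all three statements reduce to statements about \(F\).

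For statement 1 we use the implicit function theorem at \(\kappa=0\). When \(\kappa=0\) the second equation reads \(\nabla^2\mathcal{N} = q\frac{4e^u}{(e^u+1)^2}\mathcal{N}\). The coefficient is nonnegative and positive away from the cores, so \(\mathcal{N}=0\). The first equation is then the \(O(3)\) vortex equation. Under \eqref{eq:bradlow-cond-thm} it has a unique solution \(v_0\); we take this existence from the known O(3) theory, with uniqueness following by monotonicity. Its linearization is
\[
\nabla^2 - 2q\frac{2e^u}{(e^u+1)^2},
\]
which is invertible by the maximum principle because the potential is positive somewhere. The linearization in \(\mathcal{N}\) is the similar invertible operator \(\nabla^2 - q\frac{4e^u}{(e^u+1)^2}\). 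The cross term \(\kappa q^2\) vanishes at \(\kappa=0\), so the full Jacobian is block-triangular and invertible. The implicit function theorem then gives a smooth branch \(\kappa\mapsto(v_\kappa,\mathcal{N}_\kappa)\), and elliptic bootstrapping upgrades continuity to \(C^k\) for every \(k\). Since \(\phi_3 = -\tanh(u/2)\) is a smooth function of \(u\), this yields \eqref{eq:top-limit-k-0}.

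For statement 2 we use Leray--Schauder continuation. First we prove a priori bounds on solutions for \(|\kappa|\le\kappa_*\) that depend only on \(q,\tau,Vol(\Sigma)\) and \(k_\pm\), not on core positions. Since \(|(e^u-1)/(e^u+1)|<1\), the first equation gives a bounded right-hand side. Integrating the \(\mathcal{N}\)-equation and combining it with the first equation controls the mean of \(\kappa\mathcal{N}\). The maximum principle applied to \(w=\kappa\mathcal{N}+\tau+\tanh(u/2)\)-type combinations then gives \(L^\infty\) bounds.

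We recast the problem as \(x = T_\kappa(x)\) with \(T_\kappa\) compact and compute the degree at \(\kappa=0\), where it equals \(\pm1\) because the solution there is nondegenerate and unique. Homotopy invariance of the degree then gives a solution for all \(|\kappa|\le\kappa_*\), provided no solution reaches the boundary of a fixed ball. The bounds hold only while \(\kappa q\,\|\mathcal{N}\|\) stays below a threshold, and this threshold defines \(\kappa_*\).

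For the second solution when \(k_+\neq k_-\), we use the fact that total degree in a large ball is zero. If \(k_+\neq k_-\), integrating the first equation forces \(\kappa\int\mathcal{N}\) to stay bounded away from zero in certain regimes. This blows up as \(\kappa\to0\), which shows there is no solution at \(\kappa=0\) in the large ball. Consequently the large-ball degree is \(0\). Excision then produces a second solution outside the small ball, with local degree \(\mp1\). Because the bounds are uniform, \(\kappa_*\) can be taken independent of the cores.

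For statement 3 we make a non-topological blow-up ansatz \(\mathcal{N} = \kappa^{-1}(c+\kappa^2 w)\) with constant \(c\). To leading order the second equation forces \(\kappa\mathcal{N}+\tau+\frac{e^u-1}{e^u+1}\to 0\) in average. The first equation integrated over \(\Sigma\) then gives the constant \(-\frac{2\pi(k_+-k_-)}{qVol(\Sigma)}\pm1-\tau\). In the limit \(\phi_3\to\pm1\) away from the cores, because the term \(q\frac{4e^u}{(e^u+1)^2}\mathcal{N}\) must stay bounded, \(e^u\) degenerates. We then construct solutions by sub/super solutions or a rescaled fixed point around this profile and show they are the second solutions from statement 2. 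The \(C^1\) estimate away from \(\bar U\) follows from elliptic estimates.

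The main obstacle is this last step: the uniform a priori bounds, and the precise limiting behaviour of the non-topological branch near the cores, both of which require careful handling of the degenerate coefficient \(4e^u/(e^u+1)^2\).
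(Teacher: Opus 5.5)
Your Part 1 is the paper's argument: the implicit function theorem at \(\kappa=0\), where the Jacobian is block-diagonal with invertible blocks \(-\nabla^2+2qf'(h_0+v)\), followed by Sobolev or bootstrap regularity. The second-solution step of Part 2, however, has a genuine gap.

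You claim the degree on a large ball is \(0\) because there is no solution at \(\kappa=0\) in it. But \((h_0,0)\), the \(O(3)\) solution, lies in every large ball and is the unique, nondegenerate solution at \(\kappa=0\), so that degree is \(\pm1\). Moreover, the second solutions escape to infinity as \(\kappa\to0\), which is exactly the blow-up in your Part 3. So any fixed ball eventually misses them. A degree transported from \(\kappa=0\) is \(\pm1\) and forces nothing beyond the topological solution.

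The missing ingredients come from the other end of the parameter range. For \(A\neq0\), integrating both equations gives \eqref{eq:k2-integral}, hence \(\kappa^2\le 4\|f'\|_{L^\infty}\|f\|_{L^\infty}/|A|\), so there are no solutions for large \(|\kappa|\). Proposition~\ref{prop:maxwell-lim-non-top-h2} shows that every divergent sequence of solutions has \(\kappa_i\to0\), so solutions with \(\kappa\ge\kappa_1>0\) are bounded. With these two facts, a ball containing all solutions with \(\kappa\ge\kappa_1\) has degree \(0\): homotope \(\kappa\) upward past the threshold where no solutions exist. Excision against the index \(\pm1\) of the topological solution then gives a second one.

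The paper uses the same two facts differently. The Leray--Schauder continuum \(C^+\) issuing from \((0,h_0,0)\) is unbounded while \(\kappa\) stays bounded on it, so it escapes as \(\kappa\to0\). Connectedness plus uniqueness at \(\kappa=0\) then yields solutions of norm \(>R\) for all small \(\kappa\).

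The first claim of Part 2 is also not in workable form.
\begin{itemize}
\item A priori bounds on all solutions for \(|\kappa|\le\kappa_*\) are false when \(k_+\neq k_-\), because \(\bar h\) is unbounded on the second branch. The maximum principle only yields \(f_{\min}\le-\kappa\mathcal{N}\le f_{\max}\), never a bound on \(\bar h\).
\item A degree argument needs a region such as \(\{\|\kappa\mathcal{N}\|_{C^0}<\delta\}\) together with a gap estimate. In that region \(\bar h\) is controlled through the Bradlow condition, as in Lemma~\ref{lem:h0-bound}, and then \(\|\kappa\mathcal{N}\|=O(\kappa^2)\) follows from a uniform bound on \(L^{-1}\).
\item The core-independence you assert rests on exactly the uniform estimates of Lemmas~\ref{lem:pot-bound}, \ref{lem:h0-bound} and~\ref{lem:l1-l2-inv-op-bound}, which you do not supply. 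The paper feeds these into a quantitative implicit function theorem (\(\|dT_\kappa^{-1}\|\le4K\) and a bound on \(\|\dot\chi_\kappa\|\)) rather than a degree argument.
\end{itemize}

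In Part 3, you claim the mean of \(\kappa\mathcal{N}+\tau+(e^u-1)/(e^u+1)\) tends to \(0\). This contradicts the integrated first equation, which fixes that mean at \(-A/(2q)\neq0\) for every solution. The sub/supersolution construction is also left unspecified. No construction is needed: Proposition~\ref{prop:maxwell-lim-non-top-h2} shows by compactness that any divergent sequence satisfies \(\kappa_i\to0\), \(h_i-\bar h_i\to0\), \(-\kappa_i\mathcal{N}_i\to A/(2q)+\ell\) and \(\bar h_i\to\mp\infty\). Apply this to the second solutions of Part 2, which must diverge as \(\kappa\to0\), and follow with \(W^{2,p}\) estimates to obtain \eqref{eq:non-top-limit-unbalance}.
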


Condition~\eqref{eq:bradlow-cond-thm} is a Bradlow type bound, similar to the constraint naturally appearing in the study of vortices of the Abelian-Higgs model~\cite{bradlowVorticesHolomorphicLine1990}. 
The moduli space of the \(O(3)\)-Sigma model is the space \(M_{0}\) of solutions to the Bogomol'nyi equations at \(\kappa = 0\) modulo gauge equivalence. It is a stratified space, whose connected components \(M_{0}^{(k_{+},k_{-})}\) are  determined by the total number of vortices and antivortices; furthermore, it is a smooth, incomplete Kähler manifold~\cite{romaoGeometrySpaceBPS2020} whose low energy dynamics is governed by geodesic motion (the reader can consult~\cite{mantonTopologicalSolitons2004} for a gentle introduction to the subject). Parts 1 and 2 of Theorem~\ref{thm:existence} claim that for small \(\kappa\), the moduli space of solutions to equations~\eqref{eq:nf}--\eqref{eq:B}, \(M_{\kappa}^{(k_{+},k_{-})}\),  
is a small deformation of \(M_0^{(k_{+},k_{-})}\). Persistence of moduli space for small \(\kappa\) is an assumption implicitly done in the Abelian-Higgs model in order to study low energy dynamics with a Chern-Simons term~\cite{collieDynamicsChernSimonsVortices2008a,alqahtaniRicciMagneticGeodesic2015}. Flood and Speight made this assumption formal for the Abelian-Higgs model~\cite{flood2018chern}. 
Theorem~\ref{thm:existence} extends persistence of moduli space to the \(O(3)\)-Sigma model and opens the possibility to study low-energy dynamics of solutions to the Bogomol'nyi equations with geometric technics in \(M_{\kappa}^{(k_{+},k_{-})}\). 
The second theorem shows that as in the euclidean plane, Chern-Simons deformations of the \(O(3)\)-Sigma model can extend indefinitely in any compact surface. Unlike the euclidean case, this can be done only if the number of vortices and antivortices coincide.

\begin{theorem}\label{thm:limit-infty}
	If \(k_{+} = {k_-}\), for any \(\kappa \in \mathbb{R}\) and \(q > 0\) there is a solution to the Bogomol'nyi equations~\eqref{eq:nf}--\eqref{eq:B}. Moreover, if 
	\((\kappa_{i}, \phi_{i}, A_{i}, \mathcal{N}_{i})\) is a sequence of solutions such that \(|\kappa_{i}| \to \infty \), up to subsequences, one of the following statements hold.
		\begin{enumerate}
		\item For any \(p \geq 2\)  and any neighbourhood \(U\) of the core set, 
		 \[ \lim_{i \to 0} \|(\phi_{3})_{i} \mp 1 \|_{C^1(\Sigma\setminus \bar{U})} + \left\|\kappa_{i} \mathcal{N}_{i} \mp 1 + \tau \right\|_{L^p} = 0. \]
		\item For the function \(v\) defined at~\eqref{eq:vpm} below, there is a constant \(c\) such that 
		\begin{equation} 
			\int_{\Sigma} \frac{e^{c+v}}{(e^{c+v} + 1)^2}\left(
				\frac{e^{c+v} - 1}{e^{c + v} + 1} + \tau 
			\right) = 0 \label{eq:c-cond}
		\end{equation}
		and for any \(k \geq 0\), 
		\[\lim_{i \to \infty }\left\|(\phi_{3})_{i} - \frac{1 - e^{c + v}}{1 + e^{c+v}}\right\|_{C^k(\Sigma)} + \left\|\kappa_{i}\mathcal{N}_{i} + 
		\frac{e^{c+v} - 1}{e^{c + v} + 1} + \tau 
		\right\|_{C^k(\Sigma)} = 0. \]
	\end{enumerate}
\end{theorem}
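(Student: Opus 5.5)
We plan to reduce the Bogomol'nyi equations to the elliptic system~\eqref{eq:field-theory-system}. We then combine Leray--Schauder continuation in \(\kappa\), which gives existence for all \(\kappa\), with uniform a priori estimates that let us pass to the limit \(|\kappa_i|\to\infty\).

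\textbf{Reduction.} Let \(u_0\) solve \(\nabla^2 u_0 = 4\pi\sum \deg(p)\delta_p - 4\pi(k_+-k_-)/\mathrm{Vol}(\Sigma)\). Since \(k_+=k_-\), the constant term vanishes. Write \(u = u_0 + w\) and set \(M=\kappa\mathcal N\). Integrating both equations of~\eqref{eq:field-theory-system} over \(\Sigma\) gives two identities:
\[
\int_\Sigma \Big(M+\tau+\tfrac{e^u-1}{e^u+1}\Big)=0,
\qquad
\int_\Sigma \tfrac{4e^u}{(e^u+1)^2}\,M=0,
\]
where the second uses the first. So the Bradlow-type constraint is automatically compatible for every \(\kappa\), and there is no upper bound on \(|\kappa|\). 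This is where \(k_+=k_-\) is used.

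\textbf{Existence via a priori bounds.} Define a compact operator \(T_\kappa\) on \(C(\Sigma)\times C(\Sigma)\) by inverting \(\nabla^2-\kappa^2q^2\) in the second equation, and by inverting the Laplacian on mean-zero functions in the first equation, with the mean fixed by the first integral identity. Starting from the solution at \(\kappa=0\) given by Theorem~\ref{thm:existence}, with nonzero local degree, Leray--Schauder gives solutions for every \(\kappa\), provided we have bounds uniform on compact \(\kappa\)-intervals. For those bounds:
\begin{itemize}
\item[(i)] The nonlinearity \(\frac{e^u-1}{e^u+1}\) is bounded. A maximum principle applied to the \(M\)-equation, in the form \(\nabla^2 M = \kappa^2q^2(M+\tau+\cdots)+q\,\frac{4e^u}{(e^u+1)^2}M\), gives \(|M|\le C\) independent of \(\kappa\): at a positive maximum of \(M\), both terms on the right must be \(\le 0\), forcing \(M\le 1-\tau\).
\item[(ii)] Then \(\nabla^2 w\) is bounded in \(L^\infty\), so the oscillation of \(w\) is bounded in \(W^{2,p}\).
\item[(iii)] The delicate point is the mean \(c\) of \(w\). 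Suppose \(c\to\pm\infty\). Then \(\frac{e^u-1}{e^u+1}\to\pm1\) uniformly off the cores. The first integral identity forces \(M\to\mp1-\tau\) in \(L^p\), and this is compatible with the bound from (i). So a mean blow-up is not excluded, and it is exactly alternative~1. For existence at fixed \(\kappa\), I would instead use homotopy invariance on a large ball chosen to avoid this degeneration. Concretely, I would show that for bounded \(\kappa\) the second identity \(\int \frac{4e^u}{(e^u+1)^2}M=0\), combined with \(\nabla^2 M\approx\kappa^2q^2(\cdot)\), prevents \(c\) from escaping. Alternatively, I would compute the degree through the finite-dimensional reduction in \(c\), where the relevant map is \(c\mapsto\int \frac{e^{c+v}}{(e^{c+v}+1)^2}\big(\frac{e^{c+v}-1}{e^{c+v}+1}+\tau\big)\). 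This map changes sign between \(c\to-\infty\) and \(c\to+\infty\), since the weight is positive and the bracket ranges over \((-1+\tau,1+\tau)\). So it has nonzero degree.
\end{itemize}

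\textbf{Limit \(|\kappa_i|\to\infty\).} Divide the \(M\)-equation by \(\kappa_i^2q^2\). Since \(M\) is bounded, we get
\[
M_i+\tau+\tfrac{e^{u_i}-1}{e^{u_i}+1}=O(\kappa_i^{-2})\nabla^2M_i + O(\kappa_i^{-2}).
\]
Testing against \(M_i\) gives \(\|\nabla M_i\|_{L^2}\) bounded and \(M_i+\tau+\frac{e^{u_i}-1}{e^{u_i}+1}\to0\) in \(L^2\). Hence the right side of the \(u\)-equation tends to zero, and \(w_i-c_i\to0\) in \(W^{2,p}\), which we bootstrap to \(C^k\). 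Here \(v\) plays the role of \(u_0\) in~\eqref{eq:vpm}. Passing to a subsequence, either \(c_i\to\pm\infty\), which gives alternative~1, or \(c_i\to c\). In the second case, the second integral identity in the limit reads \(\int\frac{e^{c+v}}{(e^{c+v}+1)^2}M=0\) with \(M=-\frac{e^{c+v}-1}{e^{c+v}+1}-\tau\). This is~\eqref{eq:c-cond}, and \(C^k\) convergence of \(M_i\) follows from elliptic regularity of \(M_i=-\tau-\frac{e^{u_i}-1}{e^{u_i}+1}+\kappa_i^{-2}q^{-2}(\cdots)\).

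The main obstacle is step (iii): controlling the constant mode \(c\) uniformly enough to run Leray--Schauder. I would first try the finite-dimensional degree computation in \(c\) described above.
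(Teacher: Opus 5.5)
Your outline has the right architecture: continuation from the nondegenerate $\kappa=0$ solution, a priori bounds on bounded $\kappa$-intervals, then a dichotomy on the mean. However, both load-bearing estimates are missing.

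\emph{Existence.} Leray--Schauder needs a bound on the mean $c$ for $|\kappa|\le K$, and step (iii) does not supply one. Your remark that the first identity forces $M\to\mp1-\tau$ in $L^p$ is also incorrect, since that identity only controls $\int_\Sigma M$. The bound does hold, by compactness:
\begin{itemize}
\item For $|\kappa_n|\le K$ the right-hand side of the $M$-equation is bounded in $L^\infty$. So along a subsequence $\kappa_n\to\kappa_0$ and $M_n\to M_0$ uniformly.
\item If $c_n\to+\infty$, then $M_0$ solves $\nabla^2M_0=\kappa_0^2q^2(M_0+1+\tau)$ with $\int_\Sigma(M_0+1+\tau)=0$. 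Hence $M_0\equiv-1-\tau<0$.
\item For large $n$ this contradicts $\int_\Sigma\frac{4e^{u_n}}{(e^{u_n}+1)^2}M_n=0$, because the weight is positive off the cores. The case $c_n\to-\infty$ is symmetric.
\end{itemize}
This is Lemma~\ref{lem:cm-bounded}.

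Your fallback does not work. The map $c\mapsto\int_\Sigma\frac{e^{c+v}}{(e^{c+v}+1)^2}\bigl(\frac{e^{c+v}-1}{e^{c+v}+1}+\tau\bigr)$ is the reduced problem at $\kappa=\infty$ and does not compute the degree at finite $\kappa$. Its claimed sign change is also false in general. As $c\to+\infty$ the weight tends to zero everywhere. The sign is then decided in discs of radius $\sim e^{-c/(2d)}$ about vortices of multiplicity $d$, where the bracket takes both signs. For $d\ge2$ the leading term is a positive multiple of $e^{-c/d}(\frac1d+\tau)$. So one double vortex, two simple antivortices and $\tau<-\frac12$ give a negative value at both ends.

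\emph{Limit.} Testing the $M$-equation against $M_i$ does not give what you claim. Set $g_i=\tau+\frac{e^{u_i}-1}{e^{u_i}+1}$ and $W_i=M_i+g_i$. The test gives
\[
\|\nabla M_i\|_{L^2}^2+q\int_\Sigma\frac{4e^{u_i}}{(e^{u_i}+1)^2}M_i^2+\kappa_i^2q^2\|W_i\|_{L^2}^2=\kappa_i^2q^2\int_\Sigma W_ig_i ,
\]
hence only $\|W_i\|_{L^2}\le\|g_i\|_{L^2}$. That is no decay, and no uniform bound on $\nabla M_i$. Yet everything downstream uses $W_i\to0$: the convergence of $w_i-c_i$, the dichotomy, and the limiting identity.

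The paper tests~\eqref{eq:elliptic-w} against $N-f(u)$. The real work there is a uniform $L^1$ bound on $f''(u)|\nabla u|^2$ near the cores when $\bar h\to\pm\infty$ (Lemma~\ref{lem:thinks-are-complicated-at-infty} and Proposition~\ref{lem:fpp-bounds}).

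A shorter repair in your notation is to write $M_i=-(1-\epsilon_i\nabla^2)^{-1}\bigl(g_i+\epsilon_iq\rho_iM_i\bigr)$, with $\epsilon_i=(\kappa_iq)^{-2}$ and $\rho_i=\frac{4e^{u_i}}{(e^{u_i}+1)^2}$. The resolvent is an $L^p$-contraction converging strongly to the identity. By (ii), $g_i$ converges in $L^p$ along a subsequence. Together these give $W_i\to0$ in $L^p$.

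For the $C^k$ convergence in alternative~2 you also need $\epsilon$-uniform $H^k$ estimates for $-\epsilon\nabla^2+1+\epsilon q\rho_i$. The paper takes these from Ricciardi's lemma; plain elliptic regularity does not give them.
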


Previous work using an iterative scheme on the torus proved existence of solutions for \(\kappa\) sufficiently small and \(q\) large enough to guarantee convergence to a solution~\cite{chaeCondensateMultivortexSolutions2001}. The limit as \(|\kappa| \to \infty \) is similar to the Chern-Simons limit~\cite[Thm.~4.1-ii]{chaeCondensateMultivortexSolutions2001}, obtained keeping \(\kappa\) fixed, and to limits obtained as \(\kappa \to 0\), \(\kappa^2q\to \infty\) in reference~\cite{huangBreakdownPointwiseConvergence2026}. It is interesting to note that in both, the Chern-Simons limit and the \(\kappa \to \infty\) limit, there are three possibilities for the solutions of the Bogomol'nyi equations. 

The next sections aim to prove the theorems, the content is as follows. In section~\ref{sec:elliptic-prob}, we transform the distributional equations~\eqref{eq:field-theory-system} into a regular system of elliptic PDEs, then we construct solutions to the regular governing system for small \(\kappa\) and prove that \(\kappa\) can be extended at least to a value \(\kappa_{*}\) independent of vortex-antivortex positions. In Section~\ref{sec:cont-extension}, we show that the small extensions of the previous section can be extended continuously to an unbounded set in Sobolev space using Leray-Schauder's degree theory. We compute the possible limits of diverging sequences of solutions to the regular elliptic problem and prove Theorem~\ref{thm:existence}. In Section~\ref{sec:asympt-large}, we study the limit behaviour of solutions to the regular elliptic problem for large \(\kappa\). Using techniques developed for Maxwell-Chern-Simons theory on the torus~\cite{ricciardiAsymptoticsNonlinearElliptic2003,chiacchio2007multiple}, we bound the growth of \(\kappa \mathcal{N}\) to second order and use this bound to obtain the limit behaviour of solutions to the regular elliptic problem and prove Theorem~\ref{thm:limit-infty}. In Section~\ref{sec:symm-deform-sphere}, we study Chern-Simons deformations on the sphere numerically. Using the method of  pseudo arclength continuation introduced in~\cite{flood2018chern}, we produce two families of solutions to the Bogomol'nyi equations showcasing the limit behaviour described in the theorems.

\section{The elliptic problem}\label{sec:elliptic-prob}

From now onwards, let us denote by \(|\Sigma|\) the Riemannian volume of the surface. 
To study the governing elliptic problem~\eqref{eq:field-theory-system}, 
it is customary to replace this system by a regular system of elliptic PDEs  
introducing a Green function for the Laplacian, \(G(x, y) \), such
that,
\begin{equation}
	\nabla^2_{x} G(x,y) = - \frac{1}{|\Sigma|} + \delta_{y}.
\end{equation}
A proof of the existence of \(G \) can be found
in~\cite{aubin2013some}. For any holomorphic chart \(
\varphi: U\subset \Sigma \to \mathbb{C} \) and any bounded domain \(D
\subset \mathbb{C} \), there exists a smooth function \( \tilde G:
\varphi^{-1}(D) \to \mathbb{R} \) such that, for any \(x, y \in
\varphi^{-1}(D) \), 
\begin{equation}
	G(x, y) = \frac{1}{2\pi}\log \, |\varphi(x) - \varphi(y)| +
	\tilde G(x, y).\label{eq:green-approx} 
\end{equation}
Hence, the function \(\exp(4\pi G) \) is smooth and has a zero
whenever \(x = y \). Define 
\begin{equation}\label{eq:vpm}
	v_{\pm}(x) = 4\pi \sum_{y \in \phi_{3}^{-1}(\pm 1)} |\deg(y)|\,G(x, y), \quad v = v_{+} - v_{-},
\end{equation}
if \((u, \mathcal{N})\) is a solution to~\eqref{eq:field-theory-system}, let 
\(h = u - v \), then \((h, \mathcal{N} ) \) is a solution of the  system,
	\begin{align}
		\nabla^2 h & = 2q \left(
		\kappa \mathcal{N} + f({h + v})
		\right) + A,  \label{eq:h-n-cal}	\\ %\label{eq:elliptic-sys} \\
		\nabla^2 \mathcal{N} & = \kappa q^2 \left(
		\kappa \mathcal{N} + f({h + v})\right) + 2 q f'({h + v}) \mathcal{N}, \label{eq:n-cal}
	\end{align} 
where we have defined \(A = 4\pi (k_{+} - k_{-}) {|\Sigma|}^{-1} \) and 
\begin{equation}
	f(x) = 1 + \tau - \frac{2}{e^{x} + 1}.\label{eq:f-cp1-model}
\end{equation}
Given \(k \in \mathbb{N}_{0}\) and \(p > 0\), we denote by \(W^{k,p}\) the Sobolev space of \(L^p\) functions on \(\Sigma\) with \(k\) weak derivatives in \(L^p\). Notice that \(H^k = W^{k,2}\) by definition. 
For \(\kappa = 0\), the unique solution to~\eqref{eq:n-cal} is \(\mathcal{N} \equiv 0\), since the operator \(-\nabla^2 + 2qf'(h+v): H^2 \to L^2 \) is positive;  furthermore, if we rescalate the metric of \(\Sigma\) from \(g\) to \(g_{q} = q g\),~\eqref{eq:h-n-cal} reduces to the equation,
\begin{equation}
\nabla^2_{q}h = 2f(h+v) + \frac{4\pi (k_{+} - k_{-})}{{|\Sigma|}_{q}}, \label{eq:gov-elliptic-o3}
\end{equation}
where \(\nabla^2_{q} = q^{-1}\nabla^2\) is the Laplace-Beltrami operator for \(g_{q}\) and \({|\Sigma|}_{q} = q |\Sigma| \) is the volume of \(\Sigma\) in this metric. Equation~\eqref{eq:gov-elliptic-o3} is the governing elliptic problem for the O(3)-Sigma mode~\cite{sibnerAbelianGaugeTheory2010}. It admits a unique solution \(h_{0}\) if and only if the condition 
\begin{equation}
-1 - \tau < \frac{2\pi(k_{+} - k_{-})}{{|\Sigma|}_{q}} < 1 - \tau \label{eq:bradlow-o3}
\end{equation}
holds~\cite{Xu2025}. Condition~\eqref{eq:bradlow-o3} is exactly condition~\eqref{eq:bradlow-cond-thm} rewritten in the metric \(g_{q}\). Therefore, for \(\kappa = 0\) there is exactly one solution \((h_{0}, 0)\) to system~\eqref{eq:h-n-cal}--\eqref{eq:n-cal}. In the next section we exploit this observation to find solutions of the system for small \(\kappa\).

\subsection{Small deformations of the O(3) Sigma model}

In this section we aim to prove the first part of Theorem~\ref{thm:existence}. 
Let \(T: \mathbb{R} \times H^{k+2} \times H^{k+2} \to H^k \times H^k
\) be the operator with components 
\begin{align*}
	T_1 & = -\nabla^2h + 2q(\kappa\mathcal{N} + f({h+v})) + A,                        \\
	T_2 & = -\nabla^2\mathcal{N} + \kappa q^2 (\kappa\mathcal{N} + f({h+v})) + 2q f'({h+v})\mathcal{N}.
\end{align*}
\((h, \mathcal{N})\) is a solution 
to~\eqref{eq:h-n-cal}--\eqref{eq:n-cal} in \(H^{k+2}\) if and only if \(T(\kappa, h, \mathcal{N}) = \mathbf{0}\). Assume condition~\eqref{eq:bradlow-cond-thm} holds and let \((h_{0}, 0)\) be the unique solution of the system at \(\kappa = 0\) described in the previous section. The derivative of the restriction \(T|_{k=0}\) at \((h_{0}, 0)\) is
\[
d_{(h_{0},0)}T|_{k=0} = L \oplus L,
\]
where \(L = -\nabla^2 + 2q f'({h_{0}+v}): H^{k+2} \to H^k \) is a
Hilbert space isomorphism because the function \(f'({h_{0}+v})\) is
positive except for the finite set of vortex and antivortex points. 
For the functions \(v_{\pm}\) defined at~\eqref{eq:vpm}, 
\(e^{v_{\pm}}\) is smooth
and vanishes only at vortex or antivortex positions.
Substituting into~\eqref{eq:f-cp1-model} we obtain
\begin{align}
	f(h + v) &= 1 + \tau - \frac{2 e^{v_{-}}}{e^{h} e^{v_{+}} + e^{v_{-}}},
	&
	f'(h + v) &= \frac{2 e^{h} e^{v_{+}} e^{v_{-}}}{\bigl(e^{h} e^{v_{+}} +
		e^{v_{-}}\bigr)^{2}}. \label{eq:f-fp-hv}
\end{align}
For any \(k \ge 2\), the Sobolev space \(H^{k}\) is an algebra, there is a constant \(C(k)\) such that for \(u, v \in H^{k}\), \(\|uv\|_{H^k} \leq C(k) \|u\|_{H^k} \|v\|_{H^k}\)~\cite[Chp.~4]{adamsSobolevSpaces2003}; hence, the exponential map \(h \mapsto e^h\) is well defined and smooth as a map \(H^k \to H^k\)~\cite[pg.~157]{flood2018chern}.  Equation~\eqref{eq:f-fp-hv} shows that \(h \mapsto f(h +v)\) and \(h \mapsto f'(h + v)\) are also smooth as maps \(H^k \to H^k\). The continuous embedding \(H^{k+2} \hookrightarrow H^{k}\) shows that \(T\) is a smooth operator. 
\begin{proposition}\label{prop:small-extension}
	Let \(A, q \in \mathbb{R}\), \(q > 0 \) be constants such that
	\begin{equation}
		f_{\min} < -\frac{A}{2q} < f_{\max}.\label{eq:bradlow-cond}
	\end{equation}
	For any \(k \geq 0\), there exists a neighbourhood \(U \subset H^{k+2} \times H^{k+2} \) of
	\((h_{0}, 0) \) and an open interval \(I\) about \(0\)
	such that, for every \(\kappa \in I\),
	system~\eqref{eq:h-n-cal}--\eqref{eq:n-cal} admits a unique solution \((h_{\kappa}, \mathcal{N}_{\kappa})\) in
	\(U\). Moreover, \(\kappa \mapsto (h_{\kappa}, \mathcal{N}_{\kappa}) \) is of
	class \(C^{\infty}\) as a map \(I \to H^{k+2} \times H^{k + 2}
	\). 
\end{proposition}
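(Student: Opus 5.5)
The plan is to apply the implicit function theorem in Banach spaces to the smooth operator \(T:\mathbb{R}\times H^{k+2}\times H^{k+2}\to H^k\times H^k\) at the point \((0,h_0,0)\). First, condition~\eqref{eq:bradlow-cond} is exactly condition~\eqref{eq:bradlow-o3} after rescaling the metric, since \(f\) takes values in \((f_{\min},f_{\max})=(-1+\tau,1+\tau)\) and \(-A/(2q)=-2\pi(k_+-k_-)/(q|\Sigma|)\). So by the result quoted from~\cite{Xu2025} there is a unique \(h_0\) with \(T(0,h_0,0)=0\). Elliptic regularity gives \(h_0\in C^\infty\), because \(f(h+v)\) is smooth in \(h\) by~\eqref{eq:f-fp-hv}. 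Hence \(h_0\in H^{k+2}\) for every \(k\), and the base point is legitimate for every \(k\).

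Next, I will smooth out a technical point. For \(k=0,1\), \(H^k\) is not an algebra. I will either work with \(k\geq 2\) and recover small \(k\) from the continuous embedding \(H^{k'+2}\hookrightarrow H^{k+2}\) for \(k'\geq k\), or observe directly that \(h\mapsto f(h+v)\) is smooth \(H^{k+2}\to H^k\) because \(H^{2}\hookrightarrow C^0\). Either way, \(T\) is \(C^\infty\), as noted before the statement.

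The key step is to compute the partial derivative in \((h,\mathcal N)\) at \((0,h_0,0)\) and show it is an isomorphism. Differentiating gives
\begin{equation*}
D_{(h,\mathcal N)}T(0,h_0,0)[\eta,\nu]=\bigl(-\nabla^2\eta+2qf'(h_0+v)\eta,\ -\nabla^2\nu+2qf'(h_0+v)\nu\bigr).
\end{equation*}
The terms involving \(\kappa\) vanish at \(\kappa=0\), and the \(f''\) term in \(T_2\) is multiplied by \(\mathcal N=0\). So the derivative is \(L\oplus L\) with \(L=-\nabla^2+2qf'(h_0+v)\). We have \(f'(h_0+v)\geq0\), smooth, and vanishing only on the finite core set by~\eqref{eq:f-fp-hv}.

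Injectivity of \(L\) follows by testing \(L\eta=0\) against \(\eta\):
\begin{equation*}
\int|\nabla\eta|^2+2q\int f'\eta^2=0.
\end{equation*}
This forces \(\eta\) to be constant, and then \(\int f'\,\eta^2=0\) with \(f'>0\) almost everywhere gives \(\eta=0\). Since \(L\) is a compact (zeroth-order) perturbation of \(-\nabla^2+1\), it is Fredholm of index zero \(H^{k+2}\to H^k\), so injectivity gives surjectivity. Elliptic regularity ensures that \(L\eta\in H^k\) implies \(\eta\in H^{k+2}\), and bounded invertibility follows from the open mapping theorem.

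The implicit function theorem then yields an interval \(I\ni0\), a neighbourhood \(U\) of \((h_0,0)\), and a unique \(C^\infty\) map \(\kappa\mapsto(h_\kappa,\mathcal N_\kappa)\) whose graph is exactly the zero set of \(T\) in \(I\times U\). The main obstacle, mild as it is, is the invertibility of \(L\): one must handle the degeneracy of the potential \(f'\) at the cores and get the regularity bookkeeping right at each Sobolev level. The rest is a direct application of the theorem.
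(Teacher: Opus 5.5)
Your proposal is correct and follows the paper's own argument. Both apply the implicit function theorem to the smooth map \(T\) at \((0,h_0,0)\), whose partial derivative in \((h,\mathcal N)\) is \(L\oplus L\) with \(L=-\nabla^2+2qf'(h_0+v): H^{k+2}\to H^k\) an isomorphism. You fill in what the paper only asserts: injectivity of \(L\) via the energy identity, surjectivity via the Fredholm index-zero argument, and smoothness of \(h_0\). Your concern about \(k=0,1\) is unnecessary, because the nonlinear terms act on \(H^{k+2}\) with \(k+2\geq 2\), which is already an algebra, so your second option is the right one.
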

\begin{proof}
	\(T \) is a smooth map such that \(d_{(h_{0},0)}T|_{\kappa = 0} \) is a Hilbert space isomorphism and 
	\(T(0, h_{0}, 0) = (0, 0) \). By the implicit function
	theorem, there are an open neighbourhood \(U\) as described in the statement of the proposition and exactly one smooth map \(I \to U\), \(\kappa \mapsto (h_{\kappa}, \mathcal{N}_{\kappa})\), such that \(T(\kappa, h_{\kappa}, \mathcal{N}_{\kappa}) = 0\) for \(\kappa \in I\); equivalently, \((h_{\kappa}, \mathcal{N}_{\kappa})\) is the only solution to~\eqref{eq:h-n-cal}--\eqref{eq:n-cal} in \(U\) with parameter \(\kappa\). 
\end{proof}
Proposition~\ref{prop:small-extension} shows that initial solutions \((h_{0}, 0) \) of system~\eqref{eq:h-n-cal}--\eqref{eq:n-cal} at \(\kappa = 0\) can be extended smoothly to pairs \((h_{\kappa}, \mathcal{N}_{\kappa})\) of solutions for small \(\kappa\). We aim to show the existence of a minimal constant \(\kappa_{*} > 0\) such that for any vortex-antivortex configuration, the one-parameter family of solutions obtained persists at least for \(|\kappa| < \kappa_{*}\). Let \(T_{\kappa}\) be the restriction \(T(\kappa, \cdot, \cdot)\) of the operator \(T\) as a map \(H^2\times H^2 \to L^2 \times L^2\). As the proof of the proposition relies on the implicit function theorem, we can extend the solution of the system as long as \(d_{(h_{\kappa}, \mathcal{N}_{\kappa})}T_{\kappa}\) remains an isomorphism. A direct calculation shows 
\[
d_{(h_{\kappa}, \mathcal{N}_{\kappa})}T_{\kappa} = L \oplus L + P = (L \oplus L) (I + {(L \oplus L)}^{-1}P),
\]
where 
\begin{align}
	L &= -\nabla^2 + 2qf'(h+v), \label{eq:op-l1} \\
	P &= \begin{pmatrix}
		0 & 2 q\kappa \\
		\kappa q^2 f'(h + v) + 2q f''(h+v) \mathcal{N} & \kappa^2q^2
	\end{pmatrix}. \nonumber
\end{align}
Hence, if the perturbation \((L\oplus L)^{-1}P\) is small, \(d_{(h_{\kappa}, \mathcal{N}_{\kappa})}T_{\kappa}\) is invertible with inverse
\begin{equation}
	{(d_{(h_{\kappa}, \mathcal{N}_{\kappa})}T_{\kappa})}^{-1} = {(I + {(L \oplus L)}^{-1}P)}^{-1}{(L \oplus L)}^{-1}, \label{eq:dtk-inv}
\end{equation}
and the solution of~\eqref{eq:h-n-cal}--\eqref{eq:n-cal} can be continued. For \(\|{(L \oplus L)}^{-1}P)\| < 1\) in the \(\sup\) norm,	
\begin{equation}
	{(I + {(L \oplus L)}^{-1}P)}^{-1} = \sum_{n=0}^{\infty} ({(L \oplus L)}^{-1}P)^n; \label{eq:neumann-series}
\end{equation}
furthermore,  
\begin{gather}
	{\|(L\oplus L)^{-1}P\|} \leq 2{\|L^{-1}\|}{\| P\|}, \label{eq:l-l-inv-p-norm} \\
	{\|P\|} \leq 2q|\kappa| + |\kappa| q^2 ({\|f'(x)\|}_{L^\infty} + |\kappa|) + 2q {\|f''(x)\|}_{L^\infty} {\|\mathcal{N}\|}_{L^2}. \label{eq:l-inv-p-norm}
\end{gather}
In the following lemmas, we bound the norm of \(L^{-1}\) and find analytic conditions for the invertibility of \(d_{(h_{\kappa}, \mathcal{N}_{\kappa})}T_{\kappa}\).
Given a pair \((\mathbf{p}, \mathbf{q}) \in \Sigma^{k_+} \times \Sigma^{k_-}\), we will call it  valid if \(\mathbf{p}\) and \(\mathbf{q}\) have pairwise distinct coordinates. In other
words, if vortices and antivortices do not overlap. 

\begin{lemma}\label{lem:pot-bound}
	For any $\epsilon > 0$ there is a constant $C(\epsilon) > 0$,
	such that for any valid configuration \((\mathbf{p}, \mathbf{q}) \in \Sigma^{k_+} \times \Sigma^{k_-} \) of vortices and antivortices 
	and any $h \in H^2$ with $\|{h}\|_{H^2} < \epsilon$,
	\begin{equation*}
		{\langle f'(h + v), 1\rangle}_{L^2} \geq C.
	\end{equation*}	
\end{lemma}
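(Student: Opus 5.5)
We need a uniform lower bound for \(\int_\Sigma f'(h+v)\), where \(f'(x) = 2e^x/(e^x+1)^2 = \tfrac12\operatorname{sech}^2(x/2)\) and \(v = v_+ - v_-\). The plan is to show that \(|h+v|\) stays bounded on a set of measure bounded below, uniformly over all valid configurations and all \(h\) with \(\|h\|_{H^2}<\epsilon\).

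\textbf{Step 1: control of \(h\).} By the Sobolev embedding \(H^2 \hookrightarrow C^0(\Sigma)\) on the compact surface, \(\|h\|_{L^\infty} \le C_S \epsilon\). Since \(f'\) is even and decreasing in \(|x|\), we get
\[ f'(h+v) \ge f'(|v| + C_S\epsilon) \]
pointwise, so it suffices to bound \(\int_\Sigma f'(|v|+C_S\epsilon)\) from below uniformly in \((\mathbf p,\mathbf q)\).

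\textbf{Step 2: uniform control of \(v\) away from a small set.} We normalise \(G\) so that \(\int_\Sigma G(\cdot,y)=0\) for every \(y\); this only shifts \(v\) by a constant depending on \(k_\pm\), which is harmless. The representation \eqref{eq:green-approx}, together with compactness of \(\Sigma\) (cover by finitely many charts), gives a constant \(C_G\) with
\[ G(x,y) \le C_G \quad\text{and}\quad |G(x,y)| \le C_G\bigl(1 + |\log d(x,y)|\bigr), \]
uniformly in \(x,y\). Hence \(\|G(\cdot,y)\|_{L^1(\Sigma)} \le M\) uniformly in \(y\). Since \(v\) is a sum of \(k_++k_-\) such terms with weights \(4\pi|\deg|\), this gives \(\|v\|_{L^1} \le 4\pi(k_++k_-)M =: M'\), independent of the configuration. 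Chebyshev's inequality then gives, for \(R = 2M'/|\Sigma|\),
\[ |\{x\in\Sigma : |v(x)| \le R\}| \ge |\Sigma| - M'/R = |\Sigma|/2. \]
Integrating over this set yields
\[ \langle f'(h+v),1\rangle \ge \tfrac{|\Sigma|}{2}\, f'(R + C_S\epsilon) =: C(\epsilon) > 0. \]

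\textbf{Main obstacle.} The key point is the uniformity in Step 2: \(\sup_y \|G(\cdot,y)\|_{L^1}<\infty\). This requires the smooth remainder \(\tilde G\) in \eqref{eq:green-approx} to be bounded uniformly as \(y\) ranges over \(\Sigma\). I would obtain this either from the standard construction of \(G\) in \cite{aubin2013some} (a parametrix plus a correction that is continuous in \((x,y)\) off the diagonal and bounded), or more simply from symmetry and continuity of \(y\mapsto G(\cdot,y)\) in \(L^1\), followed by compactness of \(\Sigma\).

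Note that valid configurations may have vortices approaching antivortices. The bound above does not need them to stay apart: it uses only the \(L^1\) bound on \(v\), not any pointwise bound.
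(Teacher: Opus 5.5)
Your proof is correct, and it takes a different route from the paper's. The first step is the same in both: Sobolev's embedding gives $\|h\|_{C^0}\le C_{\mathrm{Sob}}\epsilon$, and monotonicity of $f'$ in $|x|$ reduces the problem to bounding $\int_\Sigma f'(|v|+C_{\mathrm{Sob}}\epsilon)$ from below (the paper uses $|v_{\mathbf p}|+|v_{\mathbf q}|$ in place of $|v|$).

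The paper then argues by contradiction. It takes configurations along which the integral tends to $0$ and uses compactness of $\Sigma^{k_+}\times\Sigma^{k_-}$ to extract a limiting, possibly non-valid, configuration $(\mathbf p_*,\mathbf q_*)$, with $v_n\to v_*$ pointwise off a finite set. Dominated convergence, with the constant $\sup f'$ as dominant, then gives a vanishing integral of an a.e.\ positive function. You replace the compactness over configurations by the uniform estimate $\sup_y\|G(\cdot,y)\|_{L^1}<\infty$. This bounds $\|v\|_{L^1}$ independently of the positions, and Chebyshev yields a set of measure at least $|\Sigma|/2$ on which $|v|\le R$.

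What your route buys is an explicit constant $C(\epsilon)=\tfrac{|\Sigma|}{2}f'(R+C_{\mathrm{Sob}}\epsilon)$. This would also make the constant $K$ of Lemma~\ref{lem:l1-l2-inv-op-bound}, and hence $\kappa_*$ in Proposition~\ref{prop:minimal-constant}, explicit. It also avoids any discussion of coincident limit points, and it does not even use validity of the configuration. The paper's soft argument needs only continuity of $G(x,\cdot)$ off the diagonal and no integrability estimate, but it gives no quantitative information.

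Your justification of the uniform $L^1$ bound is sound. Cover $\Sigma$ by finitely many charts on slightly shrunk domains where $\tilde G$ is bounded, and use continuity of $G$ on $\{d(x,y)\ge\delta\}$. The zero-mean normalisation of $G$ is not actually needed: a fixed symmetric $G$ differs from it by a global constant, which only changes your $L^1$ bound by a multiple of $|\Sigma|$.
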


\begin{proof}
	\(f'\) is 
	non negative, hence $\langle{f'(h + v), 1}\rangle_{L^2} \geq 0$. Assume towards a
	contradiction the
	existence of sequences $\{h_n\}$, $\{v_n\}$ with vortices and
	antivortices at positions $(\mathbf{p}_n$, $\mathbf{q}_n)$ such
	that  ${\|h_n\|}_{H^2} < \epsilon$ and  
	\begin{equation*}
		{\langle{f'(h_n + v_{n}), 1}\rangle}_{L^2} \to 0.
	\end{equation*}
	\(\Sigma\) is compact, hence, we can assume $\mathbf{p}_n \to
	\mathbf{p}_* \in \Sigma^{k_+}$ and $\mathbf{q}_n \to \mathbf{q}_*
	\in \Sigma^{k_-}$ together with pointwise convergence $v_n \to
	v_{*}$ except possibly at a finite set of points where any point of the tuple $\mathbf{p}_{*}$ coincide with a point of
	$\mathbf{q}_{*}$. By Sobolev's embedding, 
	\(
	{\|h_{n}\|}_{C^0(\Sigma)} \leq C_{\mathrm{Sob}}\,{\|h_{n}\|}_{H^2} 
	\) 
	for some constant \(C_{\mathrm{Sob}} > 0 \). Hence,
	\begin{equation*}
		0 \leq \langle{f'({|v_{\mathbf{p}_{n}}|} + {|v_{\mathbf{q}_{n}}|} +
			C_{\mathrm{Sob}}\epsilon), 1}\rangle_{L^2} \leq \langle{f'(h_n + v_{n}), 1}\rangle_{L^2} \to 0.
	\end{equation*}
	On the other hand, $f'({|v_{\mathbf{p}_{n}}|} + {|v_{\mathbf{q}_{n}}|} +
	C_{\mathrm{Sob}}\epsilon)$ is a sequence of bounded functions
	converging pointwise to the continuous function $f'({|v_{\mathbf{
				p}_{*}}|} + {|v_{\mathbf{q}_{*}}|} +
	C_{\mathrm{Sob}}\epsilon)$. By the dominated convergence theorem,
	\begin{equation*}
		{\langle{f'({|v_{\mathbf{p}_{*}}|} + {|v_{\mathbf{q}_{*}}|} +
				C_{\mathrm{Sob}}\epsilon), 1}\rangle}_{L^2} = 0,
	\end{equation*}
	a contradiction.
\end{proof}

For the next lemma, recall that in any compact Riemannian manifold \(M\), there is a constant \(C_{P}\), Poincare's constant, such that if \(u \in H^1\) is a function with mean \(\int_{\Sigma} u = 0\), then,
\[ \|u\|_{L^2} \leq C_{P} \|\nabla u\|_{L^2}. \]

\begin{lemma}\label{lem:u-l2-bound}
	Let \(M\) be a compact Riemannian manifold. If \(V \in L^2(M)\) is a function such that \(V \geq 0\) and \(\int_{M} V > 0\), for any \(u \in H^1\),
	\[
	{\|u\|}_{L^2}^2 \leq \alpha ({\|\nabla^2 u\|}_{L^2} + \langle Vu, u\rangle_{L^2}),
	\]
	where 
	\begin{equation}
	\alpha = \max \left( 
	C_{P}^2 \left(1 + 4|M|\frac{{\|V\|}_{L^2}^2}{(\int_{M}V)^2}\right),
	\frac{4|M|}{\int_{M}V}
	\right) \label{eq:alpha}	
	\end{equation}
	and \(C_{P}\) is Poincare's constant.
\end{lemma}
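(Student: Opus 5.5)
The plan is to split $u$ into its mean and a mean-zero part, control the mean-zero part by Poincaré's inequality, and control the mean through the weight $V$. Throughout I read the right-hand side of the statement as $\|\nabla u\|_{L^2}^2 + \langle Vu,u\rangle_{L^2}$. This is the natural quadratic form, and it is the only reading consistent with $u\in H^1$ and with the homogeneity of the left-hand side. The target constant is $\alpha$ in~\eqref{eq:alpha}.

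Write $u = \bar u + w$, where $\bar u = |M|^{-1}\int_M u$ and $\int_M w = 0$. Then
\[
\|u\|_{L^2}^2 = |M|\,\bar u^2 + \|w\|_{L^2}^2,
\qquad
\|w\|_{L^2}^2 \le C_P^2\|\nabla u\|_{L^2}^2 .
\]
So the whole task reduces to bounding $|M|\bar u^2$.

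For the mean, expand the quadratic form:
\[
\langle Vu,u\rangle = \bar u^2\int_M V + 2\bar u\int_M Vw + \int_M Vw^2 .
\]
The last term is nonnegative because $V\ge0$, so it can be dropped. It is finite because $V\in L^2$ and $w\in L^4$ by the Sobolev embedding in dimension two; in general dimension one can simply drop it when it is $+\infty$, and the inequality is trivial in that case.

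The cross term is handled by Cauchy--Schwarz, $|\int_M Vw|\le \|V\|_{L^2}\|w\|_{L^2}$, followed by Young's inequality with weight $\tfrac12\int_M V$:
\[
2|\bar u|\,\|V\|_{L^2}\|w\|_{L^2} \le \tfrac12\,\bar u^2\int_M V + \frac{2\|V\|_{L^2}^2}{\int_M V}\,\|w\|_{L^2}^2 .
\]
Absorbing the first term on the right gives
\[
\bar u^2\int_M V \le 2\langle Vu,u\rangle + \frac{4\|V\|_{L^2}^2}{\int_M V}\,C_P^2\|\nabla u\|_{L^2}^2 .
\]

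Multiplying by $|M|/\int_M V$ and adding the Poincaré bound for $w$ yields
\[
\|u\|_{L^2}^2 \le C_P^2\Bigl(1 + 4|M|\frac{\|V\|_{L^2}^2}{(\int_M V)^2}\Bigr)\|\nabla u\|_{L^2}^2 + \frac{2|M|}{\int_M V}\,\langle Vu,u\rangle .
\]
This is bounded by $\alpha\bigl(\|\nabla u\|_{L^2}^2+\langle Vu,u\rangle\bigr)$, with a spare factor of $2$ in the second entry of the maximum.

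The only delicate point is choosing the Young weight so that the $\bar u^2$ term is absorbed with a constant depending only on $|M|$, $\|V\|_{L^2}$ and $\int_M V$. This is what makes the constant uniform, which Lemma~\ref{lem:pot-bound} later exploits with $V = 2qf'(h+v)$.
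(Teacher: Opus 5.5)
Your proof is correct and follows the paper's argument. You use the same splitting into mean and mean-zero part, the same Poincaré bound, the same expansion of $\langle Vu,u\rangle_{L^2}$ with $\int_M Vw^2$ dropped, and the same Cauchy--Schwarz estimate on the cross term. The one difference is that you absorb the cross term with Young's inequality, whereas the paper solves the quadratic inequality in $|\bar u|$ explicitly; that is why you get $2|M|/\int_M V$ instead of $4|M|/\int_M V$ in the second coefficient.

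Your reading of the right-hand side as $\|\nabla u\|_{L^2}^2+\langle Vu,u\rangle_{L^2}$ is the intended one: the $\nabla^2$ in the statement is a typo, as the use of the estimate in Lemma~\ref{lem:l1-l2-inv-op-bound} confirms. Note also that it is Lemma~\ref{lem:l1-l2-inv-op-bound}, not Lemma~\ref{lem:pot-bound}, that applies this estimate with $V=2qf'(h+v)$. Lemma~\ref{lem:pot-bound} only supplies the uniform lower bound on $\int_M V$.
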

\begin{proof}
	Let \(\bar{u} = {|M|}^{-1}\int_{M}u\), \(\tilde{u} = u - \bar{u}\), then \(\tilde{u}\) and \(\bar{u}\) are \(L^2\)-orthogonal and \(\nabla u = \nabla \tilde{u}\). Hence,
	\[{\|u\|}^2_{L^2} = {\|\tilde{u}\|}_{L^2}^2 + \bar{u}^2 |M| \leq C_{P}^2 {\|\nabla u\|}_{L^2}^2 + \bar{u}^2 |M|. \]
	On the other hand,
	\begin{align*}
	\langle Vu, u \rangle_{L^2} &=  \int_{M}V{\tilde{u}}^2 + 2 \bar{u} \int_{M} V\tilde{u} + \bar{u}^2 \int_{M}V \\
	&\geq 2 \bar{u} \int_{M} V\tilde{u} + \bar{u}^2 \int_{M}V\\
	&\geq \bar{u}^2\int_{M}V - 2 {\|V\|_{L^2}}{\|\tilde{u}\|}_{L^2} |\bar{u}|.
	\end{align*}
	This quadratic inequality on \(|\bar{u}| \geq 0\) admits the upper bound
	\begin{align*}
		|\bar{u}| \leq \frac{  2{\left(
		\|V\|_{L^2}^2\|\tilde{u}\|_{L^2}^2 + \langle Vu, u \rangle_{L^2} \int_{M}V
		\right)}^{1/2} } {\int_{M}V }.
	\end{align*}
	Hence,
	\begin{align*}
	{\bar{u}}^2|M| &\leq \frac{4|M|}{{\left(\int_{M}V\right)}^2} \left(
			\|V\|_{L^2}^2 \|\tilde{u}\|_{L^2}^2 + \langle Vu, u \rangle_{L^2} \int_{M}V
	\right) \\
	&\leq \frac{4|M|}{{\left(\int_{M}V\right)}^2} \left(
	C_{P}^2\|V\|_{L^2}^2{\|\nabla u\|}_{L^2}^2 + \langle Vu, u \rangle_{L^2} \int_{M}V
	\right) 
	\end{align*}
	Substituting and grouping terms,
	\begin{align*}
		{\|u\|}^2_{L^2} \leq C_{P}^2\left(1 + \frac{4|M| \|V\|_{L^2}^2}{{\left(\int_{M}V\right)}^2} \right) {\|\nabla^2 u\|}_{L^2}^2
		+ \frac{4|M|}{\int_{M}V} \langle Vu, u \rangle_{L^2}.
	\end{align*}
	The lemma follows from this inequality.
\end{proof}
\begin{lemma}\label{lem:l1-l2-inv-op-bound}
	For any \(\epsilon > 0\) there is a constant \(K > 0\) such that for any valid configuration of vortices and antivortices and for any \(h \in H^2\) with \(\|h\|_{H^2} < \epsilon \),
	\[
	{\|L^{-1}\|} \leq K,
	\] 
	where the operator \(L\) is defined by equation~\eqref{eq:op-l1}.
\end{lemma}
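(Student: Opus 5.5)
The plan is to combine Lemma~\ref{lem:pot-bound} with Lemma~\ref{lem:u-l2-bound}, applied to \(M = \Sigma\) and \(V = 2q f'(h+v)\), and to control every quantity in the constant \(\alpha\) of~\eqref{eq:alpha} uniformly over valid configurations and over \(\|h\|_{H^2}<\epsilon\). Here \(\|L^{-1}\|\) is understood as the operator norm from \(L^2\) to \(L^2\), which is the norm used in~\eqref{eq:l-l-inv-p-norm}.

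\textbf{Step 1: the ingredients of \(\alpha\).} By~\eqref{eq:f-fp-hv}, \(0 \le f'(h+v) \le 1/2\) pointwise. Hence \(V = 2qf'(h+v)\) satisfies \(V\ge 0\) and \(\|V\|_{L^2}^2 \le q^2|\Sigma|/4\), with a bound independent of the configuration and of \(h\). Lemma~\ref{lem:pot-bound} gives \(\int_\Sigma V = 2q\langle f'(h+v),1\rangle_{L^2} \ge 2qC(\epsilon) > 0\), again uniformly. Substituting these into~\eqref{eq:alpha} yields a bound \(\alpha \le \alpha_0\), where \(\alpha_0\) depends only on \(C_P\), \(|\Sigma|\), \(q\) and \(C(\epsilon)\).

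\textbf{Step 2: the coercive estimate.} Let \(u \in H^2\) and set \(w = Lu\). Integrating by parts,
\[
\langle w,u\rangle_{L^2} = \|\nabla u\|_{L^2}^2 + \langle Vu,u\rangle_{L^2}.
\]
The statement of Lemma~\ref{lem:u-l2-bound} contains \(\|\nabla^2 u\|\), but its proof gives the bound with \(\|\nabla u\|_{L^2}^2\). In that form,
\[
\|u\|_{L^2}^2 \le \alpha\bigl(\|\nabla u\|_{L^2}^2 + \langle Vu,u\rangle_{L^2}\bigr) = \alpha\langle w,u\rangle_{L^2} \le \alpha_0\|w\|_{L^2}\|u\|_{L^2}.
\]
Therefore \(\|u\|_{L^2}\le \alpha_0\|Lu\|_{L^2}\). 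This shows that \(L\) is injective. Since \(L\) is a self-adjoint Fredholm operator of index zero (a compact perturbation of \(-\nabla^2+1\)), it is also surjective. So \(L^{-1}\) exists and \(\|L^{-1}\|_{L^2\to L^2}\le \alpha_0 =: K\).

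\textbf{Step 3: the \(H^2\) norm, if needed.} If the \(H^2\) operator norm of \(L^{-1}\) is required, I would add standard elliptic regularity: \(\|u\|_{H^2}\le C(\|\nabla^2u\|_{L^2}+\|u\|_{L^2})\), together with \(\|\nabla^2 u\|_{L^2} \le \|Lu\|_{L^2} + q\|u\|_{L^2}\), which uses \(|V|\le q\). This again gives a uniform constant.

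The main obstacle is the uniform lower bound on \(\int_\Sigma V\) as vortices and antivortices approach each other. That bound is exactly what Lemma~\ref{lem:pot-bound} supplies, so the remaining steps are routine. The only other point needing care is the misprint \(\nabla^2\) versus \(\nabla\) in the statement of Lemma~\ref{lem:u-l2-bound}, which I would resolve by citing its proof.
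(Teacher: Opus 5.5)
Your argument is correct and is essentially the paper's proof: apply Lemma~\ref{lem:u-l2-bound} (in its correct form with \(\|\nabla u\|_{L^2}^2\)) to \(V=2qf'(h+v)\), make \(\alpha\) uniform via Lemma~\ref{lem:pot-bound} and \(0\le f'\le 1/2\), test the equation against \(u\), and then pass to \(H^2\) by the elliptic estimate. Three small remarks: the paper's \(\|L^{-1}\|\) is the \(L^2\to H^2\) norm, which is what controls \(\dot\chi_\kappa\) in \(H^2\) in Proposition~\ref{prop:minimal-constant}, so your Step~3 is the actual conclusion rather than optional; since \(V\le q\) you get \(\|V\|_{L^2}^2\le q^2|\Sigma|\) rather than \(q^2|\Sigma|/4\), which is harmless; and your Fredholm argument for the surjectivity of \(L\) usefully fills in a point the paper takes for granted.
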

\begin{proof}
	Let \(w \in L^{2}\) be a unit vector and let \(u \in H^2\) be such that,
	\begin{equation}
		-\nabla^2 u + 2qf'(h+v) u = w \label{eq:nabla-u-w}
	\end{equation}
	Let \(V = 2qf'(h+v)\), multiply~\eqref{eq:nabla-u-w} by \(u\) and integrate by parts to obtain the following,
	\begin{align*}
		\|u\|_{L^2}^2 &\leq \alpha (\|\nabla u\|_{L^2}^2 + \langle Vu, u\rangle_{L^2}) \nonumber\\
		 &= \alpha \langle u, w\rangle_{L^2} \nonumber \\
		 &\leq \alpha \|u\|_{L^2}.
	\end{align*}
	In the first inequality we used the constant \(\alpha\) given by  Lemma~\ref{lem:u-l2-bound};  
	hence, \(\|u\|_{L^2} \leq \alpha\). By 
	Lemma~\ref{lem:pot-bound}, there is a constant \(C > 0\) independent of \(h\) and vortex-antivortex position such that \(\langle f'(h+v) , 1 \rangle_{L^2} \geq C\), whence, 
	\[ \alpha \leq \max \left( 
	C_{P}^2 \left(1 + 4|\Sigma|^2\frac{{\|f'(x)\|}_{L^{\infty}}^2}{C^2}\right),
	\frac{2|\Sigma|}{qC}
	\right). \]
	Therefore, there is a constant \(C_{2} > 0\), independent of \(h\) and vortex-antivortex position, such that \(\|u\|_{L^2} < C_2\). 
	By Schauder's estimate, there is a constant \(C_{S}\) such that,
	\begin{align}
		\|u\|_{H^2} &\leq C_{S} (\|\nabla^2 u\|_{L^2} + \|u\|_{L^2}) \nonumber \\
		&\leq C_{S} (1 + (2q \|f'(x)\|_{L^{\infty}} + 1) \|u\|_{L^2})\nonumber \\
		&\leq C_{S} (1 + (2q \|f'(x)\|_{L^{\infty}} + 1) C_{2}). \label{eq:uh2-upper-bound-1}
	\end{align}
	In the second inequality, we used~\eqref{eq:nabla-u-w} and the triangle inequality. Let \(K\) be the right hand side of~\eqref{eq:uh2-upper-bound-1}, we conclude that \(\|L^{-1}w\|_{H^2} = \|u\|_{H^2} \leq K\) whenever \(\|w\|_{L^2} = 1\), proving the lemma.
\end{proof}

\begin{lemma}
	\label{lem:h0-bound}
	For \(\kappa = 0\) and any valid configuration \((\mathbf{p}, \mathbf{q}) \in \Sigma^{k_{+}} \times \Sigma^{ k_{-}}\), the solution $h$ of~\eqref{eq:h-n-cal} is 
	uniformly bounded in $H^2$.
\end{lemma}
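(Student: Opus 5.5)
We need a bound on \(h\) in \(H^2\), where \(h\) solves
\[
\nabla^2 h = 2q f(h+v) + A ,
\]
uniform over valid configurations. The plan is to get an \(L^\infty\) bound on \(h\) by estimating its mean and its oscillation separately, and then let elliptic regularity give the \(H^2\) bound.

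\textbf{Step 1: the right-hand side is bounded in \(L^\infty\).} The function \(f\) takes values in \((f_{\min}, f_{\max}) = (\tau - 1, \tau + 1)\). Hence
\[
\|\nabla^2 h\|_{L^\infty} \le 2q\max(|\tau-1|, |\tau+1|) + |A| =: C_0 ,
\]
and \(C_0\) does not depend on the core positions. Writing \(h = \bar h + \tilde h\) with \(\int_\Sigma \tilde h = 0\), the \(L^2\) elliptic estimate together with Poincaré's inequality gives \(\|\tilde h\|_{H^2} \le C\|\nabla^2 h\|_{L^2} \le C C_0 |\Sigma|^{1/2}\). By Sobolev embedding, \(\|\tilde h\|_{C^0} \le C_1\), again uniformly in the configuration. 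It therefore remains to bound the constant \(\bar h\).

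\textbf{Step 2: bound the mean \(\bar h\) using the Bradlow condition.} Integrating the equation over \(\Sigma\) gives
\[
\frac{1}{|\Sigma|}\int_\Sigma f(\bar h + \tilde h + v) = -\frac{A}{2q},
\]
and by~\eqref{eq:bradlow-cond} this value lies strictly inside \((f_{\min}, f_{\max})\). Since \(f\) is increasing, the map
\[
F(c) = \frac{1}{|\Sigma|}\int_\Sigma f(c + \tilde h + v)
\]
is increasing in \(c\). Suppose \(\bar h \to +\infty\) along a sequence of configurations and solutions. We may pass to subsequences with \(\mathbf p_n \to \mathbf p_*\) and \(\mathbf q_n \to \mathbf q_*\), exactly as in Lemma~\ref{lem:pot-bound}; then \(v_n\) converges pointwise away from finitely many points. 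We also have \(|\tilde h_n| \le C_1\). Dominated convergence then gives \(F_n(\bar h_n) \ge F_n(c) \to \frac{1}{|\Sigma|}\int_\Sigma f(c + \tilde h + v_*)\) for each fixed \(c\), and as \(c \to +\infty\) this tends to \(f_{\max}\). This contradicts the strict inequality \(-A/(2q) < f_{\max}\). The symmetric argument excludes \(\bar h \to -\infty\). Hence \(|\bar h| \le C_2\) uniformly.

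The main obstacle is this compactness step. Uniformity must survive configurations degenerating, either by points colliding or by a vortex approaching an antivortex. Then \(v_*\) may have cancelling singularities, but it remains finite almost everywhere, so \(f(c + \tilde h + v_*) \to f_{\max}\) almost everywhere as \(c \to \infty\). An alternative is to bound \(\bar h\) directly with the sharper pointwise bound \(|\tilde h| \le C_1\) and monotonicity, but with a constant depending on the configuration. This is why the limiting argument of Lemma~\ref{lem:pot-bound} is needed.

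\textbf{Step 3: conclude.} Combining the two pieces, \(\|h\|_{L^2} \le |\Sigma|^{1/2}(C_1 + C_2)\) and \(\|\nabla^2 h\|_{L^2} \le C_0|\Sigma|^{1/2}\). The Schauder-type \(H^2\) estimate already used in Lemma~\ref{lem:l1-l2-inv-op-bound} then yields a uniform bound on \(\|h\|_{H^2}\).
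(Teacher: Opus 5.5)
Your proof is correct and follows essentially the same route as the paper. You bound $\tilde h$ in $H^2$ (hence in $C^0$) by the elliptic estimate, then rule out $|\bar h_n| \to \infty$ using compactness of the configuration space, pointwise a.e.\ convergence $v_n \to v_*$, dominated convergence, and the integrated equation, which would contradict the Bradlow condition. Your detour through the monotonicity of $f$ and a fixed $c$ is only a cosmetic variant. One small fix: writing the limit with $\tilde h$ presupposes first extracting a $C^0$-convergent subsequence of $\tilde h_n$, or else you can replace $\tilde h_n$ by $-C_1$ using monotonicity.
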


\begin{proof}
	Let \(\bar{h} = |\Sigma|^{-1}\int_{\Sigma} h \) 
	be the mean of $h$ on $\Sigma$ and let \(\tilde{h} = h - \bar{h}\); by
	Schauder's estimates there is a positive constant \(C_{S}\) independent
	of \(h\) such that by~\eqref{eq:h-n-cal},
	\begin{align*}
		\|\tilde{h}\|_{H^2} &\leq C_{S} \|2q f(h+v) + A\|_{L^2} \\
		&\leq C_{S}(2q \|f(x)\|_{L^{\infty}} + |A|) {|\Sigma|}^{1/2}.	
	\end{align*}
	Therefore, the set of functions 
	$\tilde{h}$ is bounded in $H^2$ and by Sobolev's
	embedding also in $C^0(\Sigma)$. We claim  that the set of mean values 
	is also bounded. Assume towards a
	contradiction the existence of sequences $v_n$, $\tilde{h}_n$, $\bar{h}_n$ such that 
	$|\bar{h}_n| \to \infty$. Suppose  $\bar{h}_n \to \infty$, and let 
	$(\mathbf{p}_n, \mathbf{q}_n)  \in {\Sigma}^{k_{+}} \times \Sigma^{k_{-}}$ be the points
	defining \(v_n = v_{\mathbf{p}_n} - v_{\mathbf{q}_n}\). Since $\Sigma$ is compact, we can assume 
	convergence $(\mathbf{p}_n, \mathbf{q}_n) \to (\mathbf{p}_{*}, \mathbf{q}_{*}) \in
	\Sigma^{k_{+}} \times \Sigma^{k_{-}}$. We have pointwise convergence $v_n
	\to  v_{*} = v_{\mathbf{p}_{*}} - v_{\mathbf{q}_{*}}$, except possibly
	at a finite set of points where a coordinate of $\mathbf{p}_{*}$
	coincides with some coordinate of $\mathbf{q}_{*}$. Since $\{\tilde
	h_n\}$ is \(C^{0}\)-bounded,
	\begin{equation*}
		f(v_n + \tilde h_n + \bar{h}_n) \to f_{\max}, \qquad \text{pointwise a.e.}
	\end{equation*}
	By the dominated convergence theorem,
	\begin{equation*}
		\int_{\Sigma}f(v_n + \tilde h_n + \bar{h}_n) \to f_{\max}|\Sigma|.
	\end{equation*}
	On the other hand, if we apply the divergence theorem to~\eqref{eq:h-n-cal} with \(\kappa = 0\),
	\begin{equation*}
		2q \int_{\Sigma} f (v_n + \tilde h_n + c_n) =
		- A |\Sigma|.
	\end{equation*}
	Whence, \(A = -2qf_{\max} \)  and this contradicts~\eqref{eq:bradlow-cond}. If $c_n\to -\infty$ the same
	argument yields another contradiction.
\end{proof}

\begin{proposition}\label{prop:minimal-constant}
	For any \(q > 0\) such that condition~\eqref{eq:bradlow-cond} holds, there is
	a constant \( \kappa_{*} > 0 \) such that for any valid pair \( (\mathbf{p}, \mathbf{q}) \in \Sigma^{k_{+}} \times \Sigma^{k_{-}} \) and any \(\kappa\) with \( |\kappa| < \kappa_{*}\), system~\eqref{eq:h-n-cal}--\eqref{eq:n-cal} admits a solution.
\end{proposition}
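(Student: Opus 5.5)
The plan is to continue the branch \(\kappa \mapsto (h_\kappa,\mathcal{N}_\kappa)\) from Proposition~\ref{prop:small-extension} and show that it survives on an interval of length independent of the core positions. I would avoid tracking a maximal continuation interval directly. Instead I would run a quantitative implicit function theorem (a Newton–Kantorovich / contraction argument) around \((h_0,0)\), with constants that are uniform over valid configurations.

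The uniform ingredients are the following.
\begin{enumerate}
\item By Lemma~\ref{lem:h0-bound}, there is \(R>0\) with \(\|h_0\|_{H^2}\le R\) for every valid \((\mathbf{p},\mathbf{q})\).
\item Take \(\epsilon = R+1\). Lemma~\ref{lem:l1-l2-inv-op-bound} then gives a constant \(K\) such that \(\|L_h^{-1}\|\le K\) for every \(h\) with \(\|h\|_{H^2}<R+1\), uniformly in the configuration. Here \(L_h=-\nabla^2+2qf'(h+v)\).
\item The functions \(f,f',f''\) are bounded on \(\overline{\mathbb{R}}\), so \(\|f'\|_{L^\infty}\) and \(\|f''\|_{L^\infty}\) are absolute constants.
\end{enumerate}

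Now fix a configuration and define the map
\[
\Phi_\kappa(h,\mathcal{N}) = (h,\mathcal{N}) - (L_{h_0}\oplus L_{h_0})^{-1}\,T_\kappa(h,\mathcal{N})
\]
on the ball \(B_r=\{\|h-h_0\|_{H^2}+\|\mathcal{N}\|_{H^2}\le r\}\), with \(r\le 1\). I will show it is a contraction of \(B_r\) into itself for \(|\kappa|<\kappa_*\).

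For the self-map estimate, start from \(T_\kappa(h_0,0)=(2q\kappa\cdot 0,\ \kappa q^2 f(h_0+v))\). Its norm is bounded by \(|\kappa| q^2\|f\|_{L^\infty}|\Sigma|^{1/2}\), which is \(O(|\kappa|)\) uniformly in the configuration.

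For the Lipschitz estimate, \(d\Phi_\kappa = I-(L_{h_0}\oplus L_{h_0})^{-1}dT_\kappa\). Decompose \(dT_\kappa(h,\mathcal{N}) = L_{h}\oplus L_{h} + P\), as in the display preceding~\eqref{eq:dtk-inv}. This gives
\[
\|d\Phi_\kappa\| \le 2K\bigl(\|L_h-L_{h_0}\| + \|P\|\bigr).
\]
Both terms are small, with bounds depending only on \(R\):
\begin{itemize}
\item \(\|L_h-L_{h_0}\|\le 2q\|f''\|_{L^\infty}\|h-h_0\|_{C^0}\le 2q\|f''\|_{L^\infty}C_{\mathrm{Sob}}\, r\), using the Sobolev embedding \(H^2\hookrightarrow C^0\).
\item \(\|P\|\) is bounded by~\eqref{eq:l-inv-p-norm}, which gives \(O(|\kappa|)+O(\|\mathcal{N}\|)=O(|\kappa|+r)\).
\end{itemize}
One technical point: \(P\) and the difference \(L_h-L_{h_0}\) are multiplication operators \(H^2\to L^2\). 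Their norms are controlled through \(\|\cdot\|_{L^2}\le\|\cdot\|_{H^2}\) together with \(L^\infty\) bounds on the coefficients. I would write these out with the \(C^0\) bound on \(\mathcal{N}\), rather than its \(L^2\) norm, as the multiplier.

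The parameters are then chosen in order. First pick \(r\) small enough that \(2K\cdot C r\le 1/4\). Then pick \(\kappa_*\) small enough that the \(|\kappa|\)-terms of \(2K\|P\|\) are also at most \(1/4\), and that \(K\cdot O(\kappa_*)\le r/2\). With these choices \(\|d\Phi_\kappa\|\le 1/2\) on \(B_r\), and
\[
\|\Phi_\kappa(h,\mathcal{N})-(h_0,0)\| \le \|\Phi_\kappa(h_0,0)-(h_0,0)\| + \tfrac12 r \le r.
\]
Banach's fixed point theorem then gives a solution in \(B_r\). Every constant involved depends only on \(q, A, \tau, \Sigma\) and the bounds from Lemmas~\ref{lem:h0-bound} and~\ref{lem:l1-l2-inv-op-bound}, so \(\kappa_*\) is independent of \((\mathbf{p},\mathbf{q})\). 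The solution is also unique in \(B_r\), so it coincides with the branch from Proposition~\ref{prop:small-extension}.

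The main obstacle is making every constant genuinely configuration-independent. In particular, the bound on \(L_{h_0}^{-1}\) and the Lipschitz control of \(h\mapsto f'(h+v)\) must not deteriorate as vortices approach antivortices, where \(v\) blows up. This is why I would rely only on \(L^\infty\) bounds of \(f',f''\), which are uniform in \(v\), and on Lemma~\ref{lem:pot-bound} through Lemma~\ref{lem:l1-l2-inv-op-bound}. I would never use regularity of \(v\) itself.
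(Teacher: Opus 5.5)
Your proof is correct, but it takes a different route from the paper.

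\textbf{The paper's route.} The paper argues by continuation. It works on a region where $|\kappa|\le\kappa_0$, $\mathcal{N}$ is small in $H^2$, and Lemma~\ref{lem:l1-l2-inv-op-bound} gives $\|L^{-1}\|\le K$. There it shows $\|(L\oplus L)^{-1}P\|\le 1/2$, hence $\|dT_\kappa^{-1}\|\le 4K$ along the branch $\chi_\kappa$ of Proposition~\ref{prop:small-extension}. Differentiating $T(\kappa,\chi_\kappa)=0$ then bounds $\|\dot\chi_\kappa\|_{H^2\times H^2}$ uniformly. So the branch cannot leave that region before $|\kappa|$ reaches an explicit, configuration-independent value.

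\textbf{Your route.} You freeze the linearization at $(h_0,0)$ and run a quantitative implicit function theorem (a simplified Newton contraction) on a ball of uniform radius $r$. This buys several things:
\begin{itemize}
\item You need the inverse bound only at the known point $h_0$, not along a curve whose position is not known in advance.
\item The $v$-independence of the Lipschitz control of $h\mapsto f'(h+v)$ is transparent, through $\|f''\|_{L^\infty}$ and $H^2\hookrightarrow C^0$.
\item You get uniqueness in $B_r(h_0,0)$, an $O(|\kappa|)$ distance bound and, via the uniform contraction principle, smooth dependence on $\kappa$.
\item You avoid the implicit steps of a maximal-continuation argument: the existence of $\lim_{\kappa\to\kappa^*}\chi_\kappa$ and the reapplication of the implicit function theorem there.
\end{itemize}
It also sidesteps a slip in the paper's write-up. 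The paper requires $(h,\mathcal{N})\in B_\varepsilon(\mathbf{0})$ with $\varepsilon\le R$, although $h_0$ need not lie in that ball. Only smallness of $\mathcal{N}$ is actually used there.

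\textbf{What the paper's route buys.} It produces the solution directly as the continuation of the smooth branch. You recover this from uniqueness in $B_r$ together with continuity of the fixed point in $\kappa$.
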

\begin{proof}
	By Lemma~\ref{lem:h0-bound}, there is a constant \(R > 0\) such that any solution \((h, 0)\) of system~\eqref{eq:h-n-cal}--\eqref{eq:n-cal} at \(\kappa = 0\) is contained in the ball \(B_{R}(\mathbf{0}) \subset H^2 \times H^2\). By Lemma~\ref{lem:l1-l2-inv-op-bound}, there is a constant \(K > 0\) such that if \((h, \mathcal{N}) \in B_{R}(\mathbf{0})\), the operator \(L^{-1}\)  defined by~\eqref{eq:op-l1} has norm bounded by \(K\). Let 
	\begin{align*}
		\varepsilon &= \min \left(R, \frac{1}{16q\|f''(x)\|_{L^{\infty}}K}\right), \\
		\kappa_{0} &= \min\left(1, \frac{1}{8qK (2 + q(\|f'(x)\|_{L^{\infty}} + 1))}\right),
	\end{align*}
	Assume \(|\kappa| \leq \kappa_{0}\) and \((h, \mathcal{N}) \in B_{\varepsilon}(\mathbf{0})\), by equations~\eqref{eq:l-l-inv-p-norm}--\eqref{eq:l-inv-p-norm},
	\begin{align*}
		\|(L\oplus L)^{-1}P\| &\leq 2K (\kappa_{0} q (2 + q(\|f'(x)\|_{L^{\infty}} +  1)) + 2q\|f''(x)\|_{L^{\infty}}\varepsilon) \\
		&\leq \frac{1}{2}.
	\end{align*}
	Hence, \(dT_{\kappa}\) is invertible and \(\|dT_{\kappa}^{-1}\| \leq 4K\) by~\eqref{eq:dtk-inv},~\eqref{eq:neumann-series}. Let $\chi_{\kappa} = (h_{\kappa}, \mathcal{N}_{\kappa})$ be the curve of solutions to
	system~\eqref{eq:h-n-cal}--\eqref{eq:n-cal} guaranteed to exist by
	Proposition~\ref{prop:small-extension}. This curve can be extended whenever $d_{\chi_{\kappa}}{T}_{\kappa}$ is invertible. This is the case if $|{\kappa}| < \kappa_{0}$  and \(\chi_{\kappa} \in B_{\varepsilon}(\mathbf{0})\). Since \(T(\kappa, \chi_{\kappa}) = 0\), implicit derivation shows that 
	\[ \|\dot{\chi}_{\kappa}\|_{H^2\times H^2} = \|d_{\chi_{\kappa}}T_{\kappa}^{-1} \partial_{\kappa}T\|_{H^2\times H^2} \leq 4K \|\partial_{\kappa}T\|_{L^2\times L^2}. \]
	Since \(\partial_{\kappa} T = (2q\mathcal{N}, q^2 (2\kappa \mathcal{N} + f(h + v)))\), we find the bound
	\[ \|\dot{\chi}_{\kappa}\|_{H^2\times H^2} \leq 4K (2q\varepsilon (1 + q) + q^2 \|f(x)\|_{L^{\infty}}|\Sigma|^{1/2}) \]
	valid as long as \(|\kappa| \leq \kappa_{0}\) and \(\chi_{\kappa} \in B_{\varepsilon}(\mathbf{0})\). Let \(\kappa^{*} \in (0, \infty]\) be the maximal extension of \(\chi_{\kappa}\) such that it is well defined for \(\kappa \in [0, \kappa^{*})\). Either \(\kappa^{*} > \kappa_{0}\) or there is some \(\kappa < \kappa^{*}\) such that \(\|\dot{\chi}_{\kappa}\|_{H^2\times H^2} > \varepsilon/ \kappa^{*}\); hence,
	\begin{align*}
		\kappa^{*} > \min \left(
		\kappa_{0}, \frac{\varepsilon}{4K (2q (1 + q) \varepsilon + q^2 \|f(x)\|_{L^{\infty}}|\Sigma|^{1/2})}
		\right).
	\end{align*}
	Define \(\kappa_{*}^{+}\) as the right side of this inequality, then system~\eqref{eq:h-n-cal}--\eqref{eq:n-cal} admits a solution for \(\kappa \in [0, \kappa_{*}^{+})\), regardless of vortex-antivortex position. 
	For \(\kappa \leq 0\), a similar argument proves the existence of another constant \(\kappa_{*}^{-} > 0\) such that the proposition holds for \(\kappa \in (-\kappa_{*}^{-}, 0]\). Take \(\kappa_{*} = \min(\kappa_{*}^{+}, \kappa_{*}^{-})\).
\end{proof}

\section{\large{Continuous extensions of the O(3) Sigma model}}\label{sec:cont-extension}

Proposition~\ref{prop:small-extension} shows that for small \(\kappa\) there is a solution of the elliptic system~\eqref{eq:h-n-cal}--\eqref{eq:n-cal} that varies smoothly with \(\kappa\). The aim of this section is to show that this small extension can be extended continuously. Given a solution \((h,\mathcal{N})\) to the elliptic system, it will be convenient to define the scalar field \(N = -\kappa \mathcal{N}\) and the equivalent system 
\begin{equation}
	\begin{aligned}
		\nabla^2 h & = -2q \left(
		N - f({h + v})
		\right) + A,	\\ %\label{eq:elliptic-sys} \\
		\nabla^2 N & = \kappa^2 q^2 \left(
		N - f({h + v})\right) + 2 q f'({h + v}) N. 
	\end{aligned} \label{eq:elliptic-sys}
\end{equation}
The algebra property of \(H^k\), \(k \geq 2\), and elliptic regularity show that any weak solution of~\eqref{eq:elliptic-sys} is a strong solution of class \(C^{\infty}\). As a consequence of the maximum principle, 
\begin{equation}
	f_{\min} \leq N \leq f_{\max }.\label{eq:n-bounds}
\end{equation}
If system~\eqref{eq:elliptic-sys} admits a solution for \(A \neq 0\), \(\kappa \) must be bounded, since by the divergence theorem,
\begin{align}
	-2q \int_{\Sigma } (N - f({h + v})) + A |\Sigma|                     & = 0, \label{eq:div-h} \\
	k^2q^2\int_{\Sigma}(N - f({h + v})) + 2q \int_{\Sigma} f'({h+v})N & = 0, \nonumber
\end{align}
whence,
\begin{align}
	\frac{A}{2q} |\Sigma|  =
	\int_{\Sigma } (N - f({h + v}))                                                        
	= \frac{-2}{\kappa^2q} \int_{\Sigma} f'({h+v})N. \label{eq:k2-integral}
\end{align}
\eqref{eq:n-bounds} together with~\eqref{eq:k2-integral} implies,
\begin{align*}
	\kappa^2 = \frac{-4}{A |\Sigma|} \int_{\Sigma} f'({h+v})N, 
	\leq \frac{4}{|A|} \|f'(x)\|_{{L}^{\infty}} \|f(x)\|_{L^{\infty}}.
\end{align*}
For \(q \),~\eqref{eq:div-h} implies
\begin{align*}
	|A||\Sigma| & \leq 2q \int_{\Sigma}
	\left|
	N - f({h + v})
	\right| \leq 2q (f_{\max} - f_{\min}) |\Sigma|.
\end{align*}
Hence, 
\begin{equation*}
	q \geq \frac{|A|}{2 (f_{\max} - f_{\min}) }.
\end{equation*}
Given an arbitrary \(\lambda \in \mathbb{R}^{+} \), consider the operators \( \Psi = -\nabla^2 + \lambda \) and \(\Phi: \mathbb{R} \times H^{k} \times H^{k} \to H^{k} \times H^{k}\) with components
\begin{align}
	\Phi_{1}(\kappa, h, N) & = -2q(N - f({h+v})) + A + \lambda h, \label{eq:phi_1}                       \\
	\Phi_{2}(\kappa, h, N) & = \kappa^2q^2 (N - f({h+v})) + 2q f'({h+v}) N + \lambda N. \label{eq:phi_2}
\end{align}

\begin{lemma}\label{lem:f-fp-hk-bound}
	Let \(S \subset H^{k}\), \(k \ge 2\), be a bounded
	subset. For the function \(f(x)\) defined
	in~\eqref{eq:f-cp1-model}, there exists a constant \(K > 0\) such
	that for any \(h \in S\),  
	\[
	\|f(h+v)\|_{H^{k}} \le K, \qquad {\|f'(h + v)\|}_{H^k} \le K.
	\]
\end{lemma}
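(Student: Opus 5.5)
The idea is to exploit the factored form~\eqref{eq:f-fp-hv}. Write \(g = f(h+v)\) and \(F = f'(h+v)\). Neither \(v\) nor its derivatives lie in \(H^k\), because \(v\) has logarithmic singularities at the cores. The expressions
\[
f(h+v) = 1+\tau - \frac{2e^{v_-}}{e^{h}e^{v_+}+e^{v_-}}, \qquad f'(h+v) = \frac{2e^{h}e^{v_+}e^{v_-}}{(e^{h}e^{v_+}+e^{v_-})^2}
\]
involve \(v\) only through the smooth functions \(e^{v_\pm}\). However, the denominator \(e^he^{v_+}+e^{v_-}\) is a sum of functions vanishing at vortices and at antivortices respectively. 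It is therefore bounded below only because the configuration is valid (cores do not overlap), and the bound depends on the configuration. For this lemma the configuration is fixed, so \(v\) is fixed and \(K\) may depend on it.

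\emph{Step 1 (pointwise control).} Since \(S\) is bounded in \(H^k\) with \(k\ge 2\), Sobolev embedding gives \(\|h\|_{C^0}\le C_{\mathrm{Sob}}R\) for all \(h\in S\). Hence \(e^{h}\in[e^{-C R},e^{CR}]\). Because \(e^{v_+}\) and \(e^{v_-}\) are smooth, nonnegative and never vanish simultaneously, there is \(m>0\), independent of \(h\in S\), with
\[
D_h := e^he^{v_+}+e^{v_-}\ \ge\ m .
\]
Also \(|f|\le 2\) and \(0\le f'\le 1/2\), so the \(L^2\) norms are trivially bounded.

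\emph{Step 2 (derivatives).} Use the algebra property of \(H^k\) and the smoothness of the exponential map \(H^k\to H^k\) already cited from~\cite{flood2018chern}. Since \(e^h\) is the image of a bounded set under a smooth map, and on \(H^k\) with \(k\ge2\) the exponential is bounded on bounded sets (by the power series estimate \(\|e^h\|_{H^k}\le e^{C(k)\|h\|_{H^k}}\)), \(e^h\) is bounded in \(H^k\) uniformly on \(S\). Multiplying by the smooth functions \(e^{v_\pm}\) shows that \(D_h\) and the numerator \(2e^he^{v_+}e^{v_-}\) are bounded in \(H^k\). It remains to control \(1/D_h\) and \(1/D_h^2\) in \(H^k\). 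Choose a smooth function \(\psi\) on \(\mathbb{R}\) with \(\psi(t)=1/t\) for \(t\ge m\). Then \(1/D_h=\psi(D_h)\), and a Moser-type composition estimate (or Faà di Bruno plus the \(C^0\) bounds from Step 1 and Gagliardo–Nirenberg) gives
\[
\|\psi(D_h)\|_{H^k}\le C\bigl(k,\|D_h\|_{C^0},\|D_h\|_{H^k}\bigr).
\]
The same estimate applies to \(\psi^2\). Products of bounded \(H^k\) functions are then bounded by the algebra property, so \(f(h+v)\) and \(f'(h+v)\) lie in a bounded set of \(H^k\).

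The main obstacle is Step 2's composition estimate for \(1/D_h\). The issue is not that the reciprocal is smooth; it is that the bound must be uniform over \(S\). This needs the uniform lower bound \(m\) from Step 1, which in turn uses the fixed, valid (non-overlapping) configuration. I would present it through Faà di Bruno's formula: each derivative of order \(j\le k\) of \(\psi(D_h)\) is a sum of terms \(\psi^{(r)}(D_h)\prod_i \partial^{\beta_i}D_h\). Here \(\psi^{(r)}(D_h)\) is bounded in \(L^\infty\) by a power of \(m^{-1}\). The products are bounded in \(L^2\) by the Gagliardo–Nirenberg interpolation together with the \(H^k\) and \(C^0\) bounds on \(D_h\).
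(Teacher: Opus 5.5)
Your proposal is correct and follows essentially the same route as the paper. Both use the factored form to reduce the claim to bounding $1/(e^{h}e^{v_+}+e^{v_-})$ in $H^k$, then get a uniform positive lower bound on the denominator from Sobolev embedding, replace $1/t$ by a smooth extension, and conclude with a composition estimate plus the algebra property. The only difference is that the paper cites the Runst--Sickel composition theorem, while you sketch the equivalent Moser-type estimate via Fa\`a di Bruno and Gagliardo--Nirenberg.
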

\begin{proof}
	Recall that \(v = v_{+} - v_{-}\), where the functions \(v_{\pm}\) are
	defined in~\eqref{eq:vpm} and that each function \(e^{v_{\pm}}\) is smooth
	and vanishes only at vortex or antivortex positions.
	For any \(k \ge 2\), the Sobolev space \(H^{k}\) is an algebra.
	Equation~\eqref{eq:f-fp-hv} shows that it suffices to show that the family
	\(
	\{ (e^{h} e^{v_{+}} + e^{v_{-}})^{-1} \mid h \in S \}
	\)
	is \(H^{k}\)-bounded.
	By Sobolev's embedding, any \(h \in H^{k}\) is continuous and there
	exists a constant \(C_{\mathrm{Sob}}\) such that
	\({\|h\|}_{C^{0}} \le C_{\mathrm{Sob}} {\|h\|}_{H^{k}}\).
	Since \(S\) is bounded in \(H^{k}\), it is uniformly bounded
	in \(C^{0}\). In particular, there exists a constant \(\delta > 0\) such that
	\(
	e^{h} e^{v_{+}} + e^{v_{-}} \ge \delta 
	\) for all \(h \in S\). 
	Define \(g : [\delta, \infty) \to \mathbb{R}\) by \(g(x) = 1/x\) and extend
	\(g\) smoothly to a function \(\tilde{g} : \mathbb{R} \to \mathbb{R}\) with 
	bounded derivatives such that \(\tilde{g}(0) = 0\). 
	Since \(\|e^{h}\|_{H^{k}} \le C \exp(\|h\|_{H^{k}})\) for some constant independent of \(h\), there exists another constant
	\(K\) such that
	\[
	{\| e^{h} e^{v_{+}} + e^{v_{-}} \|}_{H^{k}} \le K,
	\qquad \text{for all } h \in S.
	\]
	Let \(u = e^{h} e^{v_{+}} + e^{v_{-}}\), 
	\(\tilde{g}\) satisfies the conditions for the composition result
	\cite[Cor., p.~281]{runstSobolevSpacesFractional2011}. Hence, there
	is a constant \(K_{1}\), independent of \(u\), such that
	\[
	{\left\| \frac{1}{u} \right\|}_{H^{k}}
	= {\| \tilde{g}(u) \|}_{H^{k}}
	\le
	K_{1}\bigl( {\|u\|}_{H^{k}} + {\|u\|}_{H^{k}} {\|u\|}_{C^{0}}^{k-1} \bigr).
	\]
	Since \({\|u\|}_{H^{k}} \le K\) and
	\({\|u\|}_{C^{0}} \le C_{\mathrm{Sob}} {\|u\|}_{H^{k}}\), the right-hand side is
	uniformly bounded for all \(h \in S\).
	This proves the lemma.
\end{proof}

For the next proposition, recall an operator between Banach spaces is compact if it sends
bounded sets into precompact sets. 
\begin{proposition}\label{prop:leray-schauder-op-compact}
	For any \(k \geq 2\) and any \(\kappa \in \mathbb{R}\), the operator 
	\[ {(\Psi \oplus \Psi)}^{-1}\Phi(\kappa, \cdot, \cdot): H^{k} \times H^{k} \to H^{k} \times H^{k}  \]
	is compact.
\end{proposition}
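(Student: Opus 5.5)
The plan is to factor the operator as a bounded map into \(H^k\times H^k\), followed by the inverse \((\Psi\oplus\Psi)^{-1}\), which gains two derivatives, followed by the compact Rellich embedding \(H^{k+2}\hookrightarrow H^k\).

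\emph{Step 1 (\(\Phi\) maps bounded sets to bounded sets).} Fix \(\kappa\) and let \(S\subset H^k\times H^k\) be bounded, say \(\|h\|_{H^k}+\|N\|_{H^k}\le R\) for \((h,N)\in S\). By Lemma~\ref{lem:f-fp-hk-bound}, applied to the bounded set of first components, there is \(K\) with \(\|f(h+v)\|_{H^k}\le K\) and \(\|f'(h+v)\|_{H^k}\le K\) for all \((h,N)\in S\). Since \(k\ge 2\), \(H^k\) is an algebra on the surface, so \(\|f'(h+v)N\|_{H^k}\le C(k)KR\). The remaining terms in~\eqref{eq:phi_1}--\eqref{eq:phi_2} are linear in \(h,N\) or constant. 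Hence
\[
\|\Phi(\kappa,h,N)\|_{H^k\times H^k}\le C\bigl(R+K+|A|+\kappa^2q^2(R+K)+qKR+\lambda R\bigr),
\]
which is a bound uniform over \(S\).

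\emph{Step 2 (regularity of \(\Psi^{-1}\)).} Because \(\lambda>0\), the operator \(\Psi=-\nabla^2+\lambda\) is injective on the closed surface: pairing \(\Psi u=0\) with \(u\) gives \(\|\nabla u\|_{L^2}^2+\lambda\|u\|_{L^2}^2=0\). By Lax--Milgram and elliptic regularity, \(\Psi:H^{k+2}\to H^k\) is then an isomorphism. The higher-order elliptic estimate \(\|u\|_{H^{k+2}}\le C(\|\Psi u\|_{H^k}+\|u\|_{L^2})\) holds, and with \(\|u\|_{L^2}\le\lambda^{-1}\|\Psi u\|_{L^2}\) it gives \(\|\Psi^{-1}w\|_{H^{k+2}}\le C_k\|w\|_{H^k}\). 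So \((\Psi\oplus\Psi)^{-1}\Phi(\kappa,S)\) is bounded in \(H^{k+2}\times H^{k+2}\).

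\emph{Step 3 (compactness).} On the compact manifold \(\Sigma\), the embedding \(H^{k+2}\hookrightarrow H^k\) is compact by Rellich--Kondrachov. Therefore the image of \(S\) is precompact in \(H^k\times H^k\), which is the claim.

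If continuity of the operator is also wanted, as Leray--Schauder usually requires, it follows because \(h\mapsto f(h+v)\) and \(h\mapsto f'(h+v)\) are smooth \(H^k\to H^k\), as noted before Proposition~\ref{prop:small-extension}, and multiplication is continuous on \(H^k\).

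The only step needing care is Step 1, namely the uniform \(H^k\) bound on \(f(h+v)\) and \(f'(h+v)\) despite the logarithmic singularities of \(v\). Lemma~\ref{lem:f-fp-hk-bound} already handles this through the representation~\eqref{eq:f-fp-hv} in terms of the smooth functions \(e^{v_\pm}\). The elliptic estimate in Step 2 is standard.
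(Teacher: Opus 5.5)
Your proposal is correct and follows the paper's argument. Like the paper, you show that \(\Phi\) maps bounded sets to bounded sets, using Lemma~\ref{lem:f-fp-hk-bound} and the algebra property of \(H^k\), that \((\Psi\oplus\Psi)^{-1}\colon H^k\times H^k\to H^{k+2}\times H^{k+2}\) is bounded, and that the Rellich--Kondrachov embedding \(H^{k+2}\hookrightarrow H^k\) is compact. Your explicit proof that \(\Psi\) is invertible for \(\lambda>0\), and your remark on continuity, fill in points the paper only asserts or leaves implicit.
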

\begin{proof}
	By Lemma~\ref{lem:f-fp-hk-bound} and the algebra property of \(H^k\), 
	\(\Phi\) sends bounded sets in \(H^{k}\times H^{k}\) into bounded sets
	in \(H^{k}\times H^{k}\). On the other hand, 
	\[ (\Psi \oplus \Psi)^{-1}: H^{k}\times H^{k} \to H^{k+2}\times H^{k+2} \]
	is a bounded operator; 
	hence, \((\Psi \oplus
	\Psi)^{-1}\Phi \) is also bounded. The inclusion \(H^{k+2}
	\to H^{k} \) is compact by the Rellich-Kondrachov
        theorem. Therefore,
	\((\Psi \oplus \Psi)^{-1}\Phi\) is compact as a map \(H^{k}\times H^{k} \to
	H^{k}\times H^{k}\).
\end{proof}

\begin{lemma}\label{lem:uniform-cont-balls}
	Let \(k \geq 2\), for any ball \(B \subset H^{k}\times H^{k} \) and
	any \(\epsilon >
	0\), there is a \(\delta > 0 \) such that for any \((h, N) \in B \),
	if \(|\kappa_2 - \kappa_1| < \delta \), then 
	\[ \|(\Psi \oplus
	\Psi )^{-1}\Phi(\kappa_2, h, N) - (\Psi \oplus \Psi)^{-1}\Phi(\kappa_1, h, N)\|_{H^k}
	< \epsilon.  \]
\end{lemma}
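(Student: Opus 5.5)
The plan is to use linearity of \((\Psi\oplus\Psi)^{-1}\) and the simple way \(\Phi\) depends on \(\kappa\). From~\eqref{eq:phi_1}--\eqref{eq:phi_2}, \(\Phi_1\) does not depend on \(\kappa\). The difference in the second component is
\[
\Phi_2(\kappa_2,h,N)-\Phi_2(\kappa_1,h,N) = (\kappa_2^2-\kappa_1^2)\,q^2\,\bigl(N - f(h+v)\bigr).
\]
Hence
\[
(\Psi\oplus\Psi)^{-1}\Phi(\kappa_2,h,N)-(\Psi\oplus\Psi)^{-1}\Phi(\kappa_1,h,N) = \Bigl(0,\;(\kappa_2^2-\kappa_1^2)\,q^2\,\Psi^{-1}\bigl(N-f(h+v)\bigr)\Bigr).
\]
It therefore suffices to bound \(\|\Psi^{-1}(N-f(h+v))\|_{H^k}\) uniformly over \((h,N)\in B\), and to control \(|\kappa_2^2-\kappa_1^2|\).

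For the uniform bound, \(\Psi^{-1}: H^k\to H^{k+2}\) is bounded, since \(\lambda>0\) makes \(-\nabla^2+\lambda\) an isomorphism. Composing with the continuous embedding \(H^{k+2}\hookrightarrow H^k\) gives a constant \(C_\Psi\) with \(\|\Psi^{-1}w\|_{H^k}\le C_\Psi\|w\|_{H^k}\). Since \(B\) is bounded, its projection onto the \(h\)-coordinate is a bounded set \(S\subset H^k\). Lemma~\ref{lem:f-fp-hk-bound} then gives \(\|f(h+v)\|_{H^k}\le K\) for \(h\in S\). With \(R\) the radius of \(B\) plus the norm of its centre, we get \(\|N-f(h+v)\|_{H^k}\le R+K\) on \(B\).

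The main obstacle is the factor \(|\kappa_2^2-\kappa_1^2| = |\kappa_2-\kappa_1|\,|\kappa_2+\kappa_1|\). It is small when \(|\kappa_2-\kappa_1|\) is small only if \(\kappa_1,\kappa_2\) stay bounded. As stated, \(\delta\) cannot be uniform over all \(\kappa\in\mathbb{R}\): a fixed \(\delta\) with large \(|\kappa_1|\) breaks it. I would therefore prove the lemma for \(\kappa_1,\kappa_2\) in a fixed bounded interval \([-M,M]\), which is what the Leray--Schauder homotopy argument uses on compact parameter intervals. Equivalently, \(\delta\) may depend on \(\kappa_1\) as well, with \(|\kappa_2|\le|\kappa_1|+1\).

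Under this reading, \(|\kappa_2^2-\kappa_1^2|\le 2M|\kappa_2-\kappa_1|\), and the difference above has \(H^k\)-norm at most \(2Mq^2C_\Psi(R+K)\,|\kappa_2-\kappa_1|\). Choosing
\[
\delta < \frac{\epsilon}{2Mq^2C_\Psi(R+K)}
\]
(or \(\delta=\min\bigl(1,\epsilon/((2|\kappa_1|+1)q^2C_\Psi(R+K))\bigr)\) in the pointwise-in-\(\kappa_1\) version) completes the proof.
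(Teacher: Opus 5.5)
Your proof is correct and follows the paper's argument. You use that $\Phi_1$ is $\kappa$-independent, and Lemma~\ref{lem:f-fp-hk-bound} with boundedness of $(\Psi\oplus\Psi)^{-1}$ bounds the difference by $C|\kappa_2^2-\kappa_1^2|$ uniformly on $B$; your remark that this gives a uniform $\delta$ only for $\kappa_1,\kappa_2$ in a bounded interval (or $\delta$ depending on $\kappa_1$) is a valid correction that the paper's own proof tacitly needs too, and it is all the Leray--Schauder continuation argument requires.
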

\noindent It is said that the operator \((\Psi\oplus \Psi)^{-1}\Phi\) is continuous
in \(\kappa\) uniformly with respect to \((h, N)\) in balls.
\begin{proof}
	\((\Psi \oplus \Psi)^{-1} \) is bounded; hence, it suffices to
	show that \(\Phi\) satisfies the condition of the lemma. For fixed
	\((h, N)\), \(\Phi_1\) given by equation~\eqref{eq:phi_1} is independent of \(\kappa\),
	\[ \|\Phi_1(\kappa_2, h, N) - \Phi_1(\kappa_1, h, N)\|_{H^{k}} = 0. \]
	 Since \( B \) is a bounded set, by Lemma~\ref{lem:f-fp-hk-bound} there is a constant \(C > 0\) such that,
	\begin{align} 
		\|\Phi_2(\kappa_2, h, N) - \Phi_2(\kappa_1, h, N)\|_{H^{k}} & = |k_2^2-
		k_1^2|q^2 \| N - f({h+v})\|_{H^{k}} \nonumber
		\\
		                                                            & \leq C
		|\kappa_2^2 - \kappa_1^2|.\label{eq:phi2-phi1}
	\end{align}
	Equation~\eqref{eq:phi2-phi1} shows that \(\Phi_2\) is uniformly
	continuous in \(\kappa\) with respect to \((h, N) \in B \).
\end{proof}
Proposition~\ref{prop:leray-schauder-op-compact} shows that for any \(\kappa > 0 \) and any bounded set \(\Omega
\subset H^{k}\times H^{k} \), \(k \geq 2 \),  such that 
\(0 \not\in (I - (\Psi \oplus \Psi)^{-1}\Phi(\kappa, \cdot, \cdot))\left( \partial \Omega \right)\), 
the Leray-Schauder degree, \(\deg(I - (\Psi \oplus \Psi)^{-1}\Phi(\kappa,
\cdot, \cdot), \Omega, 0)\), is a well defined homotopical invariant
with respect to deformations of the parameter \(\kappa\), provided the
solution at \(\kappa\) does not intersect the boundary \(\partial\Omega\). 
Notice that
\((I - (\Psi \oplus \Psi)^{-1}\Phi(\kappa,\cdot, \cdot))(h, N) = 0\) if and only if \((h,
N) \) is a solution of~\eqref{eq:elliptic-sys}. By
Proposition~\ref{prop:small-extension}, there are an \(\epsilon > 0\)
and \(\kappa_0 > 0\), such that if \((h_0, 0) \) is the solution of system~\eqref{eq:elliptic-sys} at \(\kappa = 0\), then for any \(\kappa \)
such that \(|\kappa| < \kappa_0\), there is exactly one solution,
continuously depending on \(\kappa\), of this system in the ball
\(B_{\epsilon}((h_0,0)) \subset H^{k}\times H^{k}\).

\begin{lemma}\label{lem:deg-compact-deformation}
	For small \(\epsilon > 0\), \(|\deg(I - (\Psi \oplus \Psi)^{-1}\Phi(0,
	\cdot, \cdot), B_{\epsilon}(h_0, 0), 0)| = 1 \).
\end{lemma}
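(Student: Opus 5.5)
The plan is to use the standard fact that if a compact perturbation of the identity, \(F = I - (\Psi\oplus\Psi)^{-1}\Phi(0,\cdot,\cdot)\), is Fréchet differentiable at an isolated zero \(\chi_0=(h_0,0)\) with \(d_{\chi_0}F\) invertible, then the local Leray–Schauder degree on a small ball equals the index \((-1)^{\beta}\). Here \(\beta\) is the sum of algebraic multiplicities of the eigenvalues of the compact operator \(d_{\chi_0}\big((\Psi\oplus\Psi)^{-1}\Phi\big)\) that are greater than \(1\). In particular the degree is \(\pm 1\). The proof therefore reduces to three points: \(F\) is \(C^1\) near \(\chi_0\), \(d_{\chi_0}F\) is injective, and \(\chi_0\) is isolated so that \(0\notin F(\partial B_\epsilon)\) for small \(\epsilon\).

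First, I would compute the derivative. At \(\kappa=0\), \(N=0\), the Jacobian of \(\Phi\) in \((h,N)\) is
\[
d\Phi_1 = 2qf'(h_0+v)\,\delta h - 2q\,\delta N + \lambda\,\delta h,\qquad d\Phi_2 = (2qf'(h_0+v)+\lambda)\,\delta N .
\]
The extra term \(2qf''(h_0+v)N\,\delta h\) vanishes because \(N=0\), and the \(\kappa^2q^2\) terms vanish because \(\kappa=0\). Hence \(\Psi\oplus\Psi\, d F = \Psi\oplus\Psi - d\Phi\) is the upper triangular operator
\[
\begin{pmatrix} L & 2q\\ 0 & L\end{pmatrix},\qquad L=-\nabla^2+2qf'(h_0+v).
\]
Smoothness of \(\Phi\) on \(H^k\times H^k\), \(k\ge 2\), follows as in the discussion before Proposition~\ref{prop:small-extension}, using the algebra property and Lemma~\ref{lem:f-fp-hk-bound}.

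Next, I would show invertibility. As noted before Proposition~\ref{prop:small-extension}, \(L:H^{k+2}\to H^k\) is an isomorphism because \(f'(h_0+v)\ge 0\) and it is positive off a finite set. The triangular structure then gives an inverse with entries \(L^{-1}\), \(-2qL^{-1}\cdot L^{-1}\), and \(L^{-1}\). So \(dF = (\Psi\oplus\Psi)^{-1}\begin{pmatrix} L & 2q\\0&L\end{pmatrix}\) is an isomorphism of \(H^k\times H^k\). This uses the fact that \((\Psi\oplus\Psi)^{-1}\) maps \(H^k\) isomorphically onto \(H^{k+2}\) and that \(I - dF\) is compact, so Fredholm index \(0\) and injectivity suffice.

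By the inverse function theorem, \(\chi_0\) is the unique zero of \(F\) in a small ball \(B_\epsilon\). This is consistent with the uniqueness in Proposition~\ref{prop:small-extension} at \(\kappa=0\); indeed the mapping \(\mathcal N\mapsto N=-\kappa\mathcal N\) is irrelevant at \(\kappa=0\), where the unique solution is \((h_0,0)\). Therefore \(0\notin F(\partial B_\epsilon)\) and the degree is defined. Homotoping \(F\) to its linearization on \(B_\epsilon\) gives
\[
\deg(F,B_\epsilon,0)=\deg(d_{\chi_0}F,B_\epsilon,0)=\pm1
\]
by the Leray–Schauder index formula. The homotopy is \(t\mapsto d_{\chi_0}F(\chi-\chi_0)+t\,o(|\chi-\chi_0|)\), which has no zeros on \(\partial B_\epsilon\) for \(\epsilon\) small.

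The main obstacle is the differentiability and linearization step, namely showing that the remainder \(F(\chi)-d_{\chi_0}F(\chi-\chi_0)\) is \(o(\|\chi-\chi_0\|_{H^k})\) in \(H^k\). This needs Fréchet differentiability of \(h\mapsto f(h+v)\) and \(h\mapsto f'(h+v)\) in \(H^k\) despite the singular \(v\). I would handle it through the representation~\eqref{eq:f-fp-hv} and the smoothness of \(h\mapsto e^h\) together with the composition argument of Lemma~\ref{lem:f-fp-hk-bound}.
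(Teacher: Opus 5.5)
Your proposal is correct and is essentially the paper's argument. The paper likewise invokes the Leray--Schauder index theorem and reduces the lemma to injectivity of \(I-F'\) at \((h_0,0)\), using the same triangular structure: the second equation gives \(\delta N=0\) by injectivity of \(L=-\nabla^2+2qf'(h_0+v)\), and then the first gives \(\delta h=0\). You additionally spell out the differentiability, isolatedness and linear-homotopy steps that the cited theorem packages.

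One bookkeeping slip: with \(\Phi\) exactly as written in \eqref{eq:phi_1}--\eqref{eq:phi_2}, the \(d\Phi\) you computed gives \((\Psi\oplus\Psi)-d\Phi\) with \(-\nabla^2-2qf'(h_0+v)\) on the diagonal, not \(L\). This traces back to a sign typo in the paper's \(\Phi\): all terms other than \(\lambda h\), \(\lambda N\) must carry the opposite sign for its fixed points to solve \eqref{eq:elliptic-sys}. With that correction the diagonal is indeed \(L\) and the off-diagonal entry is \(-2q\), so your conclusion is unaffected.
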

\begin{proof}
	Recall \(\Psi = -\nabla^2 + \lambda \) and that \(\Phi\) is given by
	equations~\eqref{eq:phi_1}--\eqref{eq:phi_2}. Let \(F = (\Psi \oplus
	\Psi )^{-1}\Phi(0, \cdot, \cdot)\). For \(\epsilon\)
	small, the Leray-Schauder theorem~\cite[Thm.~1.10]{chipotHandbookDifferentialEquations2004} asserts
	\[
	\deg(I - F, B_{\epsilon}(h_0, 0), 0) 
	= \pm 1,
	\] 
	provided \(I - F':
	H^{k}\times H^{k} \to H^{k}\times H^{k}\) is injective. Let \((\delta h,
	\delta N) \in Ker(I - F')\), then \(F'(\delta h, \delta N) = (\delta
	h, \delta N) \). Linearity of \((\Psi \oplus \Psi)^{-1} \) implies 
	\[ (\Psi
	\delta h, \Psi \delta N) = \Phi'(0, h_0, 0)(\delta h, \delta
	N).  \]
	Equivalently,
	\begin{align}
		\nabla^2 \delta h & = -2q \delta N + 2q f'({h + v}) \delta
		h, \label{eq:1}                                            \\
		\nabla^2 \delta N & = 2q f'({h + v}) \delta N.\label{eq:2}
	\end{align}
	Elliptic regularity and the algebra property of \(H^{k}\) imply \(\delta h, \delta N \in H^{k+2}\); moreover,
	the operator \(-\nabla^2 + 2qf'({h+v}): H^{k + 2} \to H^{k} \)
        is injective. By~\eqref{eq:2}, \(\delta N = 0\);
        then, \(\delta h = 0 \) by~\eqref{eq:1}.
\end{proof}
Lemmas~\ref{lem:uniform-cont-balls}
and~\ref{lem:deg-compact-deformation} imply the existence of two connected
and closed sets~\cite[Thm.~3.3]{chipotHandbookDifferentialEquations2004}, \(C^+ \subset [0, \infty) \times H^{k} \times H^{k} \) and
	\(C^- \subset (-\infty, 0] \times H^{k} \times H^{k}\) such that each
\((\kappa, h, N) \in C^{\pm} \) is a solution of
system~\eqref{eq:elliptic-sys} and \((0, h_0, 0)
\in C^+ \cap C^-\). Moreover, either 
\[
C^{\pm} \text{ is unbounded} \quad \text{or} \quad
C^{\pm}
\cap \left(\left\{ 0 \right\} \times (H^{k}\times H^{k} \setminus
B_{\epsilon}(h_0, 0))\right) \neq \emptyset.
\]
The second alternative is not possible because for \(\kappa = 0 \)
the unique solution to~\eqref{eq:elliptic-sys} is the given \((h_{0}, 0)\); whence, each of
these sets is unbounded. In the following propositions, we restrict without loss of
generality to \(C^+\) and study the asymptotic behaviour of diverging sequences when the vortex and antivortex numbers are different. Given an integrable function \(h\), denote by \(\bar{h} = {|\Sigma|}^{-1}
\int_{\Sigma} h\) its mean value.

\begin{proposition}\label{prop:maxwell-lim-non-top-h2}
	If \(A \neq 0 \) and \(\{(\kappa_i, h_i, N_i)\} \subset \mathbb{R} \times
	H^2 \times H^2 \), is a divergent sequence
	of solutions to System~\eqref{eq:elliptic-sys}, up to a
	sub-sequence the following limit holds,
	\begin{equation}
		\lim_{i \to \infty} |\kappa_i| + \|h_i - \bar{h}_i\|_{H^2} + \left\|N_{i} - \frac{A}{2q} - \ell\right\|_{H^2} = 0, \label{eq:lim-h2-unbalance}
	\end{equation}
	where
	\begin{equation}
		\ell = \begin{cases}
			f_{\min}, & \text{if } A > 0, \\
			f_{\max}, & \text{if } A < 0.
		\end{cases}
	\end{equation}
	Moreover,
	\begin{equation}
		\lim_{i \to \infty } \bar{h}_{i} = \begin{cases}
			-\infty, & \text{if } A > 0, \\
			\infty,  & \text{if } A < 0.
		\end{cases}
	\end{equation}
\end{proposition}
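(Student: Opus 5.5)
We argue from the a priori bounds already available for solutions of~\eqref{eq:elliptic-sys} when \(A \neq 0\). By~\eqref{eq:n-bounds} we have \(f_{\min} \le N_i \le f_{\max}\), and the computation following~\eqref{eq:k2-integral} gives \(\kappa_i^2 \le 4|A|^{-1}\|f'\|_{L^\infty}\|f\|_{L^\infty}\). So the \(\kappa_i\) are bounded. The right-hand sides of both equations in~\eqref{eq:elliptic-sys} are then bounded in \(L^\infty\), uniformly in \(i\).

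Schauder (elliptic \(L^2\)) estimates applied to the mean-zero parts then give uniform \(H^2\) bounds on \(\tilde h_i = h_i - \bar h_i\). For \(N_i\) we need the \(L^2\) norm as well as \(\nabla^2 N_i\), and both are bounded, so \(N_i\) is bounded in \(H^2\). Since the sequence diverges in \(\mathbb{R}\times H^2\times H^2\), the only unbounded quantity left is the mean, so \(|\bar h_i| \to \infty\).

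Next we pass to a subsequence with \(\kappa_i \to \kappa_\infty\), \(\tilde h_i \to \tilde h\) and \(N_i \to N_\infty\) weakly in \(H^2\). By Rellich–Kondrachov the convergence is strong in \(C^0\).

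Suppose first \(\bar h_i \to +\infty\). Because \(\tilde h_i\) is uniformly bounded in \(C^0\), we get \(f(h_i+v) \to f_{\max}\) and \(f'(h_i+v) \to 0\) pointwise off the core set, with uniform bounds. By dominated convergence, \(f(h_i+v) \to f_{\max}\) in every \(L^p\), \(p<\infty\), and \(f'(h_i+v) \to 0\) likewise. Integrating the first equation, as in~\eqref{eq:div-h}, gives \(\bar N_i - \bar f_i = A/(2q)\), so \(\bar N_\infty = f_{\max} + A/(2q)\). Passing to the limit in the \(N\)-equation gives
\[
\nabla^2 N_\infty = \kappa_\infty^2 q^2 (N_\infty - f_{\max}).
\]
Passing to the limit in the \(h\)-equation gives \(\nabla^2\tilde h = -2q(N_\infty - f_{\max}) + A\).

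If \(\kappa_\infty \neq 0\), the only solution of the \(N\)-equation is \(N_\infty \equiv f_{\max}\). This contradicts \(\bar N_\infty = f_{\max} + A/(2q)\) because \(A \ne 0\). Hence \(\kappa_\infty = 0\), which is the first part of~\eqref{eq:lim-h2-unbalance}. Then \(N_\infty\) is harmonic, hence constant, and so \(N_\infty = f_{\max} + A/(2q)\). Now \(f_{\min} \le N_\infty \le f_{\max}\) forces \(A<0\) (since \(A \ne 0\)). The case \(\bar h_i \to -\infty\) is symmetric: it gives \(A>0\), \(\ell = f_{\min}\) and \(\kappa_\infty=0\). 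This also settles the sign of \(\lim \bar h_i\) claimed in the proposition.

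In both cases \(\nabla^2 \tilde h = -2q(N_\infty - \ell) + A = 0\), so \(\tilde h\) is constant. Having mean zero, \(\tilde h = 0\).

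The main obstacle is upgrading weak \(H^2\) convergence to strong convergence. We handle it with the equations themselves. The right-hand side of the \(\tilde h\)-equation is \(-2q(N_i - f(h_i+v)) + A\), and it tends to \(-2q(N_\infty - \ell) + A = 0\) in \(L^2\). This uses \(N_i \to N_\infty\) in \(C^0\) and the dominated convergence above. Elliptic estimates for mean-zero functions then give \(\tilde h_i \to 0\) in \(H^2\).

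For \(N_i\), we have \(\nabla^2 N_i = \kappa_i^2 q^2 (N_i - f) + 2q f'(h_i+v) N_i\), which tends to \(0\) in \(L^2\) because \(\kappa_i \to 0\), \(f'(h_i+v) \to 0\) in \(L^2\), and \(N_i\) is bounded. Together with \(N_i \to N_\infty\) in \(L^2\), the estimate \(\|u\|_{H^2} \le C_S(\|\nabla^2 u\|_{L^2} + \|u\|_{L^2})\) applied to \(N_i - N_\infty\) gives strong \(H^2\) convergence. This completes~\eqref{eq:lim-h2-unbalance}.
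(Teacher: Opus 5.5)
Your proof is correct and follows essentially the same route as the paper: bound \(\kappa_i\), get uniform \(H^2\) bounds on \(\tilde h_i\) and \(N_i\), deduce \(|\bar h_i|\to\infty\), pass to the limiting constant-coefficient system, and use elliptic estimates to upgrade to strong \(H^2\) convergence. The local differences are minor: you get \(\kappa_\infty=0\) from the mean constraint \(\bar N_\infty=\ell+A/(2q)\) plus injectivity of \(-\nabla^2+\kappa_\infty^2q^2\), whereas the paper lets \(i\to\infty\) in \eqref{eq:int-fp-e-n}; and you fix the sign of \(A\) directly from \(f_{\min}\le N_\infty\le f_{\max}\) rather than from the sign of \(\int f'N_i\), while also spelling out the strong \(H^2\) convergence of \(\tilde h_i\) that the paper leaves implicit.
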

\begin{proof}
	\(\kappa\) is bounded for \(A \neq 0 \).  Let
        \(\tilde{h}_{i} = h_{i} - \bar{h}_{i} \), \(f(x)\) and \(f'(x)\) are bounded functions, as well as \(N\) by~\eqref{eq:n-bounds}; hence, the right hand side of~\eqref{eq:elliptic-sys} is \(L^2\) bounded. By the standard elliptic estimates, \(\{\tilde{h}_{i}\}\) and \(\{N_{i}\}\) are \(H^2\) bounded. We claim that \(\{\bar{h}_{i}\}\) is unbounded, otherwise, \(\{h_{i}\}\) is a bounded set in \(H^2\). Hence, up to a subsequence, 
        \(\lim_{i \to \infty } \kappa_i = \kappa_0 \) for some \(\kappa_{0}\) and \(\lim_{i \to \infty} \bar{h}_{i} = \pm\infty\). 
  		Passing to a subsequence,
        \(\tilde{h}_{i} \to \tilde{h}_{0}\) and \(N_{i}
        \to N_{0}\) weakly in \(H^1\) and strongly in \(L^2\). Notice that
        the sets \(\{\tilde{h}_{i}\}\) and \(\{N_{i}\}\) are
        \(C^0\)-bounded by Sobolev's embedding. Therefore, except for vortex-antivortex points, \(\lim_{i \to \infty} f({\tilde{h}_{i} +
          \bar{h}_{i} + v}) = \ell \), where
          \[ 
          \ell = \begin{cases}
          	f_{\max} &\text{if } \bar{h}_{i} \to \infty, \\
          	f_{\min} &\text{if } \bar{h}_{i} \to -\infty.
          \end{cases}
           \]
          Weak convergence in \(H^1\)
        together with strong convergence in \(L^2\) imply \((\tilde{h}_{0},
        N_{0})\) is a solution of the elliptic PDEs,
	\begin{align}
		\nabla^2 \tilde{h}_{0} & = -2q \left(N_{0} - \ell\right) + A,\label{eq:h0}        \\
		\nabla^2 N_{0}         & = k_{0}^2 q^2 \left( N_{0} - \ell \right).\label{eq:n0}
	\end{align}
	A bootstrap argument shows \((\tilde{h}_{0}, N_{0})\) is a 
        strong \(C^2\) solution. By the standard
        elliptic estimates, the convergence \(N_{i} \to N_{0} \) is
        strong in \(H^2 \) and by Sobolev's embedding theorem, it is
        also strong in \(C^{0}\). From equation~\eqref{eq:k2-integral} we find,
	\begin{equation}
		4 \int_{\Sigma} f'\left(\tilde{h}_{i} + \bar{h}_{i}+v \right) N_{i} dV = - \kappa_{i}^2 A |\Sigma|. \label{eq:int-fp-e-n}
	\end{equation}
	Taking limits, the left side converges to 0 by the dominated convergence theorem; whence,
        \(\kappa_{0} = 0\) and \(N_{0}\) is constant by
        equation~\eqref{eq:n0}. The maximum principle 
        applied to equation~\eqref{eq:h0} implies,
	\begin{equation}
		N_{0} = \frac{A}{2q} + \ell\label{eq:n0-alpha}
	\end{equation}
	and \(\tilde{h}_{0} \equiv 0\), since \(\tilde{h}_{0} \) is a
        constant function of zero mean. Notice that \(f' \) is
        always positive; if \(A > 0 \), the right side of
        equation~\eqref{eq:int-fp-e-n} is negative, whence, the
        uniform convergence \(N_{i} \to N_{0} \) implies \(N_{0} < 0
        \). Therefore, \(\ell = f_{\min} < 0 \) by~\eqref{eq:n0-alpha} and  \(\bar{h}_{i} \to
        -\infty\). A similar argument shows that if \(A < 0 \),
        \(\ell = f_{\max}\) and \(\bar{h}_{i} \to \infty \).
\end{proof}

Let \(\tilde{h}_{i} = h_{i} - \bar{h}_{i} \), Proposition~\ref{prop:maxwell-lim-non-top-h2} and Sobolev's embedding imply,
\[ \lim_{i \to \infty} \|\tilde{h}_{i}\|_{C^0} + \left\|N_{i} - \frac{A}{2q} - \ell \right\|_{C^0} = 0. \]
Hence, \(\{\tilde{h}_{i}\}\) and \(\{N_{i}\}\) also converge in \(L^p\) for any \(p > 2\). The right hand side of~\eqref{eq:elliptic-sys} also converges to 0 in \(L^p\) as \(i \to \infty\). Schauder's estimate and Sobolev's embedding imply respectively the following limits, 
\begin{align}
	\lim_{i \to \infty} \|\tilde{h}_{i}\|_{W^{2,p}} + \left\|N_{i} - \frac{A}{2q} - \ell \right\|_{W^{2,p}} &= 0, \nonumber \\
\lim_{i \to \infty} \|\tilde{h}_{i}\|_{C^1} + \left\|N_{i} - \frac{A}{2q} - \ell \right\|_{C^1} &= 0. \label{eq:lim-c1-unbalance}	
\end{align}

In the pure vortex and pure antivortex cases,  Proposition~\ref{prop:maxwell-lim-non-top-h2} can be strengthen.

\begin{proposition}
	If \(k_{+} = 0\) or \(k_{-} = 0\), for any divergent sequence of solutions to system~\eqref{eq:elliptic-sys},
	\[ \{(\kappa_i, h_i, N_i)\} \subset \mathbb{R} \times
	H^2 \times H^2, \]
	up to a sub-sequence, the following limit holds for all \(k \geq 0\),
	\begin{equation}
		\lim_{i \to \infty} \|h_i - \bar{h}_i\|_{C^k} + \left\|N_{i} - \frac{A}{2q} - \ell\right\|_{C^k} = 0, 
	\end{equation}
	where \(\ell\) is defined as in Proposition~\ref{prop:maxwell-lim-non-top-h2}.
\end{proposition}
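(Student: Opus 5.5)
Our starting point is Proposition~\ref{prop:maxwell-lim-non-top-h2}. Since \(k_{+}=0\) or \(k_{-}=0\) forces \(A\neq 0\), it applies: after passing to a subsequence, \(\kappa_i\to 0\), \(\tilde h_i = h_i-\bar h_i\to 0\) and \(N_i\to A/(2q)+\ell\) in \(H^2\), and \(\bar h_i\to\mp\infty\). By~\eqref{eq:lim-c1-unbalance} the same convergence also holds in \(C^1\). We will upgrade this to \(C^k\) convergence for every \(k\) by a bootstrap. The bootstrap only needs one thing: the nonlinear terms \(f(h_i+v)-\ell\) and \(f'(h_i+v)\) must tend to \(0\) in \(C^k\), not merely pointwise away from the cores.

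The one-signed hypothesis is what makes this possible. Take \(k_{-}=0\), so \(v=v_{+}\) and \(A>0\). Then \(\ell=f_{\min}\) and \(\bar h_i\to-\infty\). Using~\eqref{eq:f-fp-hv} with \(e^{v_-}\equiv 1\),
\[
f(h_i+v)-f_{\min} = \frac{2e^{h_i}e^{v_+}}{e^{h_i}e^{v_+}+1},\qquad f'(h_i+v)=\frac{2e^{h_i}e^{v_+}}{(e^{h_i}e^{v_+}+1)^2}.
\]
Both are smooth functions of the quantity \(w_i=e^{\bar h_i}\,e^{\tilde h_i}\,e^{v_+}\), and this quantity has no singularity, because \(e^{v_+}\) is a fixed smooth function. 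The case \(k_{+}=0\) is symmetric: there \(A<0\), \(\ell=f_{\max}\), \(\bar h_i\to\infty\), and one writes everything in terms of \(e^{-h_i}e^{v_-}\). In the mixed case this fails, since the denominator \(e^{h}e^{v_+}+e^{v_-}\) degenerates near the cores. That is why the proposition is restricted to the pure cases.

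The bootstrap is an induction on \(m\), with the claim that \(\tilde h_i\to 0\) and \(N_i-A/(2q)-\ell\to 0\) in \(H^{m}\). For \(m=2\) this is Proposition~\ref{prop:maxwell-lim-non-top-h2}. Suppose it holds for some \(m\ge 2\). Then \(e^{\tilde h_i}\) is bounded in \(H^m\), because \(H^m\) is an algebra and the exponential map is smooth \(H^m\to H^m\). Since \(e^{v_+}\) is smooth and \(e^{\bar h_i}\to 0\), it follows that \(w_i\to 0\) in \(H^m\). The maps \(x\mapsto 2x/(x+1)\) and \(x\mapsto 2x/(x+1)^2\) are smooth near \([0,\infty)\) and vanish at \(0\). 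Composing them with \(w_i\), via the composition result used in Lemma~\ref{lem:f-fp-hk-bound} (or via a Taylor expansion together with the algebra property), gives \(f(h_i+v)-\ell\to 0\) and \(f'(h_i+v)\to 0\) in \(H^m\).

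Now look at the right-hand sides of~\eqref{eq:elliptic-sys}. The first is \(-2q(N_i-\ell-A/(2q)) + 2q(f(h_i+v)-\ell)\). The second is \(\kappa_i^2q^2(N_i-f(h_i+v)) + 2qf'(h_i+v)N_i\). Both tend to \(0\) in \(H^m\): we use the induction hypothesis, \(\kappa_i\to 0\), and the uniform bound~\eqref{eq:n-bounds}. Elliptic estimates for \(\nabla^2\) on mean-zero functions then give \(\tilde h_i\to 0\) in \(H^{m+2}\). For \(N_i\), we rewrite its equation as \((-\nabla^2+1)(N_i-c)=\text{RHS}'\), where \(c=A/(2q)+\ell\) and \(\text{RHS}'\) again tends to \(0\) in \(H^m\). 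This gives \(N_i\to c\) in \(H^{m+2}\). By induction the convergence holds in \(H^m\) for all \(m\), and Sobolev embedding \(H^{m}\hookrightarrow C^{m-2}\) yields the \(C^k\) limits.

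The main obstacle is the induction step's claim that \(f(h_i+v)-\ell\to 0\) in \(H^m\), uniformly including neighbourhoods of the cores. Here I would be careful with the composition estimate. The cleanest route is to write \(g(x)=2x/(x+1)=x\,\gamma(x)\), where \(\gamma\) is smooth and bounded with bounded derivatives on \([0,\infty)\). Then \(\|g(w_i)\|_{H^m}\le C\|w_i\|_{H^m}\,\|\gamma(w_i)\|_{H^m}\). The factor \(\|\gamma(w_i)\|_{H^m}\) stays bounded by the same argument as in Lemma~\ref{lem:f-fp-hk-bound}, and the factor \(\|w_i\|_{H^m}\) tends to \(0\).
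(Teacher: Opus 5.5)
Your proposal is correct and follows essentially the paper's argument. Both proofs start from Proposition~\ref{prop:maxwell-lim-non-top-h2} and induct on the Sobolev index. In the one-signed case, \(f(h_i+v)-\ell\) and \(f'(h_i+v)\) are a vanishing factor times an \(H^m\)-bounded quantity with nondegenerate denominator, so they tend to zero in \(H^m\), and elliptic estimates with Sobolev embedding finish the argument. The only differences are cosmetic: the paper writes out the case \(k_+=0\) by factoring out \(e^{-\bar h_i}\) directly, whereas you treat \(k_-=0\) through \(w_i\) and \(g(x)=x\gamma(x)\), which amounts to the same estimate.
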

\begin{proof}
	By Proposition~\ref{prop:maxwell-lim-non-top-h2} we know limit~\eqref{eq:lim-h2-unbalance} is valid; moreover, the proof of the proposition shows \(\{\tilde{h}_{i}\}\) and \(\{N_{i}\}\) are bounded in \(H^2\). If \(k_{+} = 0\), \(A < 0\) and \(\bar{h}_{i} \to \infty\) as \(i \to \infty\). Hence, \(\{f(h_{i} + v)\}\) and \(\{f'(h_{i} + v)\}\) are bounded in \(H^2\), whence, the right hand side of~\eqref{eq:elliptic-sys} is also bounded in \(H^2\). A recursive argument using the standard elliptic estimates shows \(\{\tilde{h}_{i}\}\) and \(N_{i}\) are bounded in \(H^k\) for all \(k \geq 2\).
	We claim the following limit holds for \(k \geq 2\),
	\begin{equation}
		\lim_{i \to \infty} \|h_i - \bar{h}_i\|_{H^k} + \left\|N_{i} - \frac{A}{2q} - \ell\right\|_{H^k} = 0. \label{eq:lim-ind}
	\end{equation}
	This limit and Sobolev's embedding prove the proposition in the case \(k_{+} = 0\). The case \(k_{-} = 0\) is similar. To prove the limit, we use induction on \(k\). The first step was proved in Proposition~\ref{prop:maxwell-lim-non-top-h2}. Assume~\eqref{eq:lim-ind} is valid for some \(k \geq 2\). Notice that as \(i \to \infty\),
	\begin{gather*}
		\|f'(\tilde{h}_{i} + \bar{h}_{i} - v_{-})\|_{H^k} = e^{-\bar{h}_{i}}\left\|\frac{2e^{\tilde{h}_{i}}e^{v_{-}}}{(e^{\tilde{h}_{i}} + e^{-\bar{h}_{i}}e^{v_{-}})^2}\right\|_{H^{k}} \leq C e^{-\bar{h}_{i}} \to 0, \\
		\|f(\tilde{h}_{i} + \bar{h}_{i} - v_{-}) - f_{\max}\|_{H^k} = e^{-\bar{h}_{i}}\left\|\frac{2 e^{v_{-}}}{e^{\tilde{h}_{i}} + e^{-\bar{h}_{i}}e^{v_{-}}}\right\|_{H^{k}} \leq C e^{-\bar{h}_{i}} \to 0.
	\end{gather*}
	These limits and~\eqref{eq:lim-ind} imply the right hand side of~\eqref{eq:elliptic-sys} converges to 0 in \(H^k \times H^k\); then, by Schauder's estimates,~\eqref{eq:lim-ind} is also valid for \(H^{k + 2}\). This completes the induction.
\end{proof}
 
\subsection*{Proof of Theorem~\ref{thm:existence}}

Given a solution \((\phi_{\kappa}, \mathcal{A}_{\kappa}, \mathcal{N}_{\kappa})\) to the Bogomol'nyi equations, we can find a solution \((u_{\kappa}, \mathcal{N}_{\kappa})\) to the governing elliptic  system~\eqref{eq:field-theory-system}. Conversely, any solution to the system determines a gauge equivalence class of solutions to the Bogomol'nyi equations. 

\emph{Proof of 1.} Given \(k \geq 0\), let \((h_{\kappa}, \mathcal{N}_{\kappa})\) be the family of solutions to~\eqref{eq:h-n-cal}--\eqref{eq:n-cal} given by Proposition~\ref{prop:small-extension}. This family varies continuously, in fact smoothly, with \(\kappa\). Define 
\begin{equation}
(\phi_{3})_{\kappa} = -1 + \frac{2e^{v_{-}}}{e^{h_{\kappa}}e^{v_{+}} + e^{v_{-}}}, \label{eq:phi3-as-fn-h}
\end{equation}
hence, \(((\phi_{3})_{\kappa}, \mathcal{N}_{\kappa})\) also varies continuously with \(\kappa\). Therefore, for any \(\epsilon > 0\) we can find \(\delta > 0\) such that, if \(|\kappa| < \delta\), 
\[\|(\phi_{3})_{\kappa} - (\phi_{3})_{0}\|_{H^{k+2}} + \|\mathcal{N}_{\kappa}\|_{H^{k+2}} < \epsilon.\]
Equation~\eqref{eq:top-limit-k-0} is consequence of Sobolev's embedding. 

\emph{Proof of 2.} By Proposition~\ref{prop:minimal-constant}, we can find a constant \(\kappa_{*}\), independent of the core set, such that if \(|\kappa| < \kappa_{*}\), there is a solution \((h_{\kappa}, \mathcal{N}_{\kappa})\) to~\eqref{eq:h-n-cal}--\eqref{eq:n-cal}. This proposition and Lemma~\ref{lem:h0-bound} show that there is an \(R > 0\) such that if \(|\kappa| <  \kappa_{*} \), we can find 
	\((h_{\kappa}^{1}, \mathcal{N}_{\kappa}^{1}) \in B_{R}(\mathbf{0}) \subset H^2 \times H^2\). 		
	We claim that making \(\kappa_{*} \) smaller if necessary, there is a  second solution \((h_{\kappa}, \mathcal{N}_{\kappa})\) such that 
	\begin{equation}
	\|h_{\kappa}\|_{H^2}^2 + \|\mathcal{N}_{\kappa}\|^2_{H^2} > R^2. \label{eq:unbounded-sequence-2}
	\end{equation}
	Assume otherwise, then we may find a sequence \(\kappa_{i} \to 0 \) such that for any \(\kappa_i\) and for any solution \((h_{\kappa_i}, \mathcal{N}_{\kappa_{i}})\)  to~~\eqref{eq:h-n-cal}--\eqref{eq:n-cal}, 
	\[ \|h_{\kappa_i}\|^{2}_{H^2} + \|\mathcal{N}_{\kappa_{i}}\|_{H^2}^2 \leq R^2. \]
	Assume without loss of generality \(k_{i} > 0\) for all \(i\). 
	Since \(C^{+}\) is connected and unbounded, we may find a second sequence \( \{(\kappa_{j}, h_{j}, \mathcal{N}_{j})\} \) such that \(\kappa_{j} \to 0\) and 
	\[ \|h_{j}\|^{2}_{H^2} + \|\mathcal{N}_{j}\|_{H^2}^2 = 2R^2. \]
	Up to a subsequence, \(\|h_{j} - h_{0}\|^2_{H^2} + \|N_{j}\|^2_{H^2} \to 0\) as \(j \to \infty\), where \((h_{0}, 0)\) is the 
	\(O(3)\)-Sigma model solution to~\eqref{eq:h-n-cal}--\eqref{eq:n-cal} with the same vortex-antivortex data. A contradiction. 

\emph{Proof of 3.} Assume \(k_{+} > k_{-}\). The proof of part 2 shows that there is a constant \(R > 0\) and a family \(\mathcal{F} = \{(h_{\kappa}, \mathcal{N}_{\kappa}) \mid \kappa \in (0, \kappa_{*})\}\) such that~\eqref{eq:unbounded-sequence-2} holds. In order to prove  limit~\eqref{eq:non-top-limit-unbalance}, it is sufficient to prove that whenever \(\kappa_{i} \to 0\), there is a subsequence, also denoted \(\kappa_{i}\), such that~\eqref{eq:non-top-limit-unbalance}  holds for \((h_{i}, \mathcal{N}_{i})\). If \(\kappa_{i} \to 0\), then 
\[ 
\|h_{\kappa_{i}}\|^2_{H^2} + \|\mathcal{N}_{\kappa_i} \|_{H^2}^2 \to \infty,
\]
since otherwise, there is a convergent subsequence \((h_{\kappa_j}, \mathcal{N}_{\kappa_{j}}) \to (h_{0}, 0)\) in contradiction to~\eqref{eq:unbounded-sequence-2}. 
Let \(U \subset \Sigma\) be an open neighbourhood of the core set, in \(\Sigma \setminus \bar{U}\) the functions \(e^{v_{\pm}}\) are smooth positive. 
Let \(\bar{h}_{i} = |\Sigma|^{-1}\int_{\Sigma} h_{\kappa_{i}}\), \(\tilde{h}_{i} = h_{\kappa_{i}} - \bar{h}_{i}\), by Proposition~\ref{prop:maxwell-lim-non-top-h2}, \(\bar{h}_{i} \to -\infty \) and by~\eqref{eq:lim-c1-unbalance}, \(\tilde{h}_{i}\) is uniformly bounded in \(C^1\); hence, also \(e^{\tilde{h}_{i}}\) and by the product rule \(e^{\tilde{h}_{i}}e^{v_{+}}\). There is a constant \(C > 0\) such that
\begin{equation*}
	\left\| \frac{1}{e^{\bar{h}_{i}} e^{\tilde{h}_{i}} e^{v+} + e^{v_{-}}} \right\|_{C^0(\Sigma \setminus \bar{U})} \leq \|e^{-v_{-}}\|_{{C^0(\Sigma \setminus \bar{U})}} < C,
\end{equation*} 
\begin{multline*}
	\left\| \nabla ({e^{\bar{h}_{i}} e^{\tilde{h}_{i}} e^{v+} + e^{v_{-}}})^{-1} \right\|_{C^0(\Sigma \setminus \bar{U})} \\
	\leq  \|e^{-2v_{-}}\|_{C^0(\Sigma\setminus \bar{U})} \left\| 
	e^{\bar{h}_{i}}\nabla (e^{\tilde{h}_{i}} e^{v_{+}}) 
	+ \nabla e^{v_{-}}
	\right\|_{C^{0}(\Sigma \setminus \bar{U})} < C.
\end{multline*}
It follows that, 
\begin{align}
	\|(\phi_{3})_{\kappa_{i}} - 1\|_{C^{1}(\Sigma\setminus \bar{U})} =
	\left\|\frac{2 e^{\bar{h}_{i}} e^{\tilde{h}_{i}} e^{v_{+}}}{e^{\bar{h}_{i}} e^{\tilde{h}_{i}} e^{v+} + e^{v_{-}}}\right\|_{C^1(\Sigma\setminus \bar{U})} \to 0, \qquad \text{as } i \to \infty. \label{eq:lim-phi3-1}
\end{align}
Recall in equation~\eqref{eq:lim-c1-unbalance}, \(N_{i} = -\kappa_{i}\mathcal{N}_{\kappa_{i}}\), \(A = 4\pi (k_{+} - k_{-}) {|\Sigma|}^{-1}\) and \(\ell = -1 + \tau\), hence, 
\[ \lim_{i \to \infty }\left\|\kappa_{i}\mathcal{N}_{\kappa_{i}} + \frac{2\pi (k_{+} - k_{-})}{q |\Sigma|} - 1 + \tau\right\|_{C^1} = 0. \]
This proves the limit for \(k_{+} > k_{-}\). The case \(k_{+} < k_{-}\) is analogous.

\section{Large continuous extensions}

In the previous section we showed that small extensions of the regular elliptic system~\eqref{eq:elliptic-sys} extend continuously in \(\mathbb{R} \times H^k \times H^2\) to unbounded sets of solutions while \(\kappa\) remains bounded. 
In this section we show that if \(k_{+} = k_{-}\), the deformation parameter \(\kappa\) extends to arbitrarily large values and prove Theorem~\ref{thm:limit-infty}. 

\begin{lemma}\label{lem:cm-bounded}
	If \(A = 0\), for any \(M > 0\) and \(k \geq 0\), the set
	\begin{equation*}
		C_{M} = \{(h_{\kappa}, N_{\kappa}) \mid |\kappa| \leq M \text{ and } (h_{\kappa}, N_{\kappa}) \text{ solves~\eqref{eq:elliptic-sys} with param. } \kappa \}
	\end{equation*}
	is bounded in \(H^{k+2} \times H^{k+2}\).
\end{lemma}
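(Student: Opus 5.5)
The plan is to split each solution as \(h = \bar h + \tilde h\), with \(\bar h\) the mean and \(\tilde h\) mean-free. I will first bound \(\tilde h\) and \(N\) uniformly, then rule out divergence of \(\bar h\) by a contradiction argument in the spirit of Lemma~\ref{lem:h0-bound}, and finally bootstrap to \(H^{k+2}\).

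\emph{Step 1 (\(\tilde h\) and \(N\)).} By~\eqref{eq:n-bounds}, \(f_{\min} \le N \le f_{\max}\). Since \(f\) and \(f'\) are bounded and \(|\kappa| \le M\), the right-hand sides of~\eqref{eq:elliptic-sys} are bounded in \(L^\infty\), uniformly over \(C_M\). The standard elliptic estimates then bound \(\|\tilde h\|_{H^2}\) and \(\|N\|_{H^2}\) (for \(N\) use \(\|N\|_{H^2} \le C(\|\nabla^2 N\|_{L^2} + \|N\|_{L^2})\)). By Sobolev's embedding, \(\tilde h\) and \(N\) are also bounded in \(C^0\).

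\emph{Step 2 (the mean, the main obstacle).} Suppose there is a sequence in \(C_M\) with \(\bar h_i \to +\infty\); the case \(-\infty\) is symmetric. Pass to a subsequence with \(\kappa_i \to \kappa_0\), \(\tilde h_i \to \tilde h_0\) and \(N_i \to N_0\) weakly in \(H^2\) and strongly in \(C^0\).
\begin{itemize}
\item Since \(\tilde h_i\) is \(C^0\)-bounded, \(f(h_i + v) \to f_{\max}\) and \(f'(h_i + v) \to 0\) pointwise away from the core set.
\item By dominated convergence, the limit equations are \(\nabla^2 \tilde h_0 = -2q(N_0 - f_{\max})\) and \(\nabla^2 N_0 = \kappa_0^2 q^2 (N_0 - f_{\max})\).
\item Integrating the first equation over \(\Sigma\) gives \(\int_\Sigma (N_0 - f_{\max}) = 0\). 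Because \(N_0 \le f_{\max}\), this forces \(N_0 \equiv f_{\max}\).
\item Hence \(N_i \to f_{\max} > 0\) uniformly, so \(N_i \ge f_{\max}/2 > 0\) for large \(i\).
\end{itemize}
Now use the second divergence identity: with \(A = 0\),~\eqref{eq:div-h} gives \(\int_\Sigma (N - f) = 0\), so the second equation of~\eqref{eq:elliptic-sys} integrates to \(2q \int_\Sigma f'(h_i + v) N_i = 0\). Here \(f' > 0\) off a finite set and \(N_i > 0\), so the integral is strictly positive, which is a contradiction. If \(\bar h_i \to -\infty\), the same argument gives \(N_i \to f_{\min} < 0\) and the same contradiction. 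Therefore \(\bar h\) is bounded on \(C_M\), and \(C_M\) is bounded in \(H^2 \times H^2\).

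I expect this step to be the delicate one. One must justify the passage to the limit via dominated convergence, which works because all integrands are bounded. One must also check that \(f_{\max} = 1 + \tau > 0\) and \(f_{\min} = -1 + \tau < 0\), which holds since \(\tau \in (-1,1)\). When \(\kappa_0 = 0\), the second limit equation no longer determines \(N_0\); this is why the argument gets \(N_0 \equiv f_{\max}\) from the first equation rather than the second.

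\emph{Step 3 (higher regularity).} With \(h\) bounded in \(H^2\), hence in \(C^0\), Lemma~\ref{lem:f-fp-hk-bound} together with the algebra property of \(H^k\) shows that the right-hand sides of~\eqref{eq:elliptic-sys} are bounded in \(H^2\) uniformly on \(C_M\), using \(|\kappa| \le M\). The elliptic estimates then bound \(h\) and \(N\) in \(H^4\). Inducting in steps of two (with \(H^{j}\)-bounds on \(f(h+v)\) and \(f'(h+v)\) coming from Lemma~\ref{lem:f-fp-hk-bound} applied to bounded sets in \(H^j\)) gives boundedness in \(H^{k+2}\) for every \(k\).
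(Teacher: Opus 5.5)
Your proof is correct and follows essentially the same route as the paper. The paper also gets uniform \(H^2\) bounds for \(\tilde h\) and \(N\), then argues by contradiction: if \(\bar h\) diverged, \(N_i\) would converge uniformly to the nonzero constant \(f_{\max}\) (or \(f_{\min}\)), contradicting the identity \(\int_\Sigma f'(h+v)N = 0\) forced by \(A=0\); an elliptic bootstrap then gives the higher bounds. The one small difference is how you identify \(N_0 \equiv f_{\max}\): you integrate the limiting \(h\)-equation and use the one-sided bound~\eqref{eq:n-bounds}, which treats \(\kappa_0 = 0\) and \(\kappa_0 \neq 0\) together, whereas the paper splits into these two cases and uses the maximum principle on the \(N\)-equation when \(\kappa_0 \neq 0\).
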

\begin{proof}
	Given \(h \in H^{k + 2} \), let \(\bar{h}\) be the mean value and \(\tilde{h} = h - \bar{h} \).
	We prove the lemma by induction on the Sobolev
        space. For bounded \(\kappa\), the right side
        of~\eqref{eq:elliptic-sys} is bounded in \(L^2\) since \(f(x)\), \(f'(x)\) and \(N\) are uniformly bounded. 
        Elliptic regularity implies \(\tilde{h}\) and
        \(N\) are \(H^2\)-bounded; in particular, \(\tilde{h}\) is
        also \(C^{0}\)-bounded by Sobolev's embedding. Assume towards
        a contradiction the existence of a sequence \((h_{n}, N_{n})\) of
        solutions to~\eqref{eq:elliptic-sys} such that \(|\bar{h}_{n}|
        \to \infty\). Passing to a subsequence, we may assume
        the limits \(\tilde{h}_{n} \to
        \tilde{h}_{0} \), \(N_{n} \to N_{0}\) weakly in \(H^1\) and strongly in \(L^2\) and \(\kappa_{n} \to \kappa_{0}\), 
        \(\bar{h}_{n} \to \pm \infty\). If \(\bar{h}_{n} \to \infty\), except for antivortex positions, 
        \(f(h_{n}+v) \to f_{\max}\) because \(\tilde{h}_{n}\) is
        uniformly bounded. Hence, the same limit holds in \(L^2\). 
	In this case, \((\tilde{h}_{0}, N_{0})\) is a weak solution of the system
	\begin{align}
		\nabla^2 \tilde{h}_{0} & = -2q(N_{0} - f_{\max}),\label{eq:h0-2}              \\
		\nabla^2 N_{0}         & = \kappa_{0}^2q^2 (N_{0} - f_{\max}).\label{eq:n0-2}
	\end{align}
	Elliptic regularity implies \((\tilde{h}_{0}, N_{0})\) is also a
        strong solution of class \(C^2\). We claim 
        \(N_{0} = f_{\max}\). If \(\kappa_{0} \neq 0 \), this is due
        to~\eqref{eq:n0-2} and the maximum principle. 
        If \(\kappa_{0} = 0 \), then
        \(N_{0}\) is constant. Equation~\eqref{eq:h0-2} implies the
        constant is again \(f_{\max}\). Whence, \(N_{n}\) converges
        uniformly to a non-zero constant, whereas \(f'(h_{n} + v)\) is
        positive except for a finite number of zeroes. This
        contradicts~\eqref{eq:int-fp-e-n}, since \(A = 0\). If \(\bar{h}_{n} \to
        -\infty \) the proof is analogous. Therefore, the set of
        solutions to system~\eqref{eq:elliptic-sys} is bounded in
        \(H^2\). To conclude the induction, assume for some \(k \geq 2
        \) there is a constant
        \(K \) such that \(\|h\|_{H^{k}} + \|N\|_{H^{k}} \leq K \) for 
        \((h, N) \in C_{M}\). Hence, the right side
        of~\eqref{eq:elliptic-sys} is also bounded by a constant in
        \(H^k\). By Schauder's estimates, there is another constant,
        still denoted \(K\), such that \(\|h\|_{H^{k+2}} +
        \|N\|_{H^{k+2}} \leq C\). Since \(H^{k+2}\) embeds continuously into
        \(H^{k+1}\), this concludes the induction.
\end{proof}

\begin{proposition}\label{prop:existence-sols-a-0}
	If \(k_{+} = k_{-}\), for any \(q > 0\), \(\kappa \in \mathbb{R}\) and any  prescribed vortex-antivortex positions, there is a solution \((u, \mathcal{N})\) to system~\eqref{eq:field-theory-system}. Except for positions of vortices and antivortices, this solution is smooth.
\end{proposition}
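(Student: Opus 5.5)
The plan is to combine the global continuation of Section~\ref{sec:cont-extension} with the a priori bound of Lemma~\ref{lem:cm-bounded}. When \(k_{+} = k_{-}\) we have \(A = 0\), so condition~\eqref{eq:bradlow-cond} reads \(f_{\min} < 0 < f_{\max}\). This holds because \(f_{\min} = -1+\tau < 0 < 1+\tau = f_{\max}\) for \(\tau \in (-1,1)\). Hence the unique \(O(3)\)-Sigma solution \((h_{0},0)\) exists, and Lemma~\ref{lem:deg-compact-deformation} applies. Together with Lemma~\ref{lem:uniform-cont-balls} and the Leray--Schauder continuation theorem, this yields the connected closed sets \(C^{+} \subset [0,\infty)\times H^{k}\times H^{k}\) and \(C^{-} \subset (-\infty,0]\times H^{k}\times H^{k}\) of solutions of~\eqref{eq:elliptic-sys}. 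Both contain \((0,h_{0},0)\), and both are unbounded, since the alternative of meeting \(\{0\}\times(H^k\times H^k\setminus B_\epsilon)\) is excluded by uniqueness at \(\kappa=0\).

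The key step is to show that the projection of \(C^{+}\) onto the \(\kappa\)-axis is all of \([0,\infty)\), and similarly for \(C^{-}\). Suppose instead that \(\kappa\) is bounded on \(C^{+}\) by some \(M\). Then \(C^{+}\) lies in \([0,M]\times C_{M}\), and Lemma~\ref{lem:cm-bounded} says \(C_{M}\) is bounded in \(H^{k+2}\times H^{k+2}\), hence in \(H^{k}\times H^{k}\). So \(C^{+}\) would be bounded, which is a contradiction. Therefore \(\kappa\) is unbounded on \(C^{+}\). The projection \((\kappa,h,N)\mapsto \kappa\) is continuous and \(C^{+}\) is connected, so the projection is an interval containing \(0\) and unbounded above, namely \([0,\infty)\). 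The symmetric argument for \(C^{-}\) covers \((-\infty,0]\).

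The remaining steps translate the result back. For every \(\kappa\) we now have \((h,N)\in H^{k}\times H^{k}\) solving~\eqref{eq:elliptic-sys} with \(k\ge2\). By the regularity remark following~\eqref{eq:elliptic-sys}, this solution is \(C^{\infty}\). The case \(\kappa = 0\) is given directly by \((h_0,0)\). For \(\kappa\neq 0\), set \(\mathcal{N} = -N/\kappa\) and \(u = h + v\). Then \((h,\mathcal{N})\) solves~\eqref{eq:h-n-cal}--\eqref{eq:n-cal}, and since \(\nabla^2 v = -4\pi(k_+-k_-)|\Sigma|^{-1} + 4\pi\sum \deg(p)\delta_p\) distributionally, \((u,\mathcal{N})\) solves~\eqref{eq:field-theory-system} in the sense of distributions. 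The function \(v\) is smooth away from the core set by~\eqref{eq:green-approx}, so \(u\) is smooth there, while \(\mathcal{N}\) is smooth everywhere.

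I expect the main obstacle to be only a bookkeeping one. Continuation is done in \(H^{k}\) with the compact operator of Proposition~\ref{prop:leray-schauder-op-compact}, whereas Lemma~\ref{lem:cm-bounded} gives bounds in \(H^{k+2}\). Boundedness in the stronger norm suffices, so the argument closes. The real analytic content, ruling out \(|\bar h|\to\infty\) when \(A=0\), is already contained in Lemma~\ref{lem:cm-bounded}.
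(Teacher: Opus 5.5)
Your proposal is correct and follows the paper's proof essentially step by step. \(A=0\) makes the Bradlow condition automatic, the Leray--Schauder continua \(C^{\pm}\) are unbounded, and Lemma~\ref{lem:cm-bounded} forces \(\kappa\) to be unbounded on them; connectedness of the projection then gives every \(\kappa\), after which \(u=h+v\), \(\mathcal{N}=-N/\kappa\) recovers a solution of~\eqref{eq:field-theory-system}. Your extra bookkeeping (the explicit check \(f_{\min}<0<f_{\max}\), the separate \(\kappa=0\) case, and the distributional identity for \(\nabla^2 v\)) only fills in details the paper leaves implicit.
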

\begin{proof}
	\(A = 4\pi(k_{+}-k_{-}){|\Sigma|}^{-1} = 0\), hence,   condition~\eqref{eq:bradlow-cond} is trivially satisfied. It follows that for any vortex-antivortex data, there is a solution \((h, 0)\) to~\eqref{eq:elliptic-sys} at \(\kappa = 0\) that extends continuously to unbounded, connected sets of solutions \(C^{\pm} \subset {\mathbb{R}}^{\pm}_{0} \times H^2 \times H^2 \).
	The canonical projection \(\Pi: {\mathbb{R}}^{\pm}_{0} \times H^2 \times H^2 \to {\mathbb{R}}^{\pm}_{0} \) is continuous; hence, \(\Pi(C^{\pm}) \subset {\mathbb{R}}^{\pm}_{0}\) is a connected subset that contains \(0\). By lemma~\ref{lem:cm-bounded}, there is a sequence \(\{\kappa_{n}\} \subset \Pi(C^{\pm}) \) such that \(\kappa_{n} \to \pm \infty \) as \(n \to \infty\). This shows \(\Pi(C^{\pm}) = {\mathbb{R}}^{\pm}_{0}\); whence, for any \(\kappa \neq 0\) there is an \(H^2\) solution \((h_{\kappa}, N_{\kappa})\) to~\eqref{eq:elliptic-sys} which is also smooth by elliptic regularity. Recall \(u_{\kappa} = h_{\kappa} + v\),  \(\mathcal{N}_{\kappa} = - \kappa^{-1} N_{\kappa}\) is a solution to~\eqref{eq:field-theory-system} and \(v\) is smooth away of vortex-antivortex positions. 
\end{proof}

\subsection{Asymptotics at large \(\kappa\)}\label{sec:asympt-large}

Proposition~\ref{prop:existence-sols-a-0} shows that if \(A = 0\), for any \(\kappa\) and \(q\) there is a solution to system~\eqref{eq:elliptic-sys}. 
Let \(u = h + v\), it will be convenient to define \(w = N - f(u)\).  A short computation shows 
\begin{align*}
	\nabla^2 f(u) = -2q f'(u)w + f''(u) |\nabla u|^2.
\end{align*}
Subtracting this equation from the equation for \(N\) in~\eqref{eq:elliptic-sys}, 
\begin{equation}
	\nabla^2 w = (\kappa^2q^2 + 2qf'(u))w + 2qf'(u) N  - f''(u){|\nabla u|}^2. \label{eq:elliptic-w}
\end{equation}

\begin{lemma}\label{lem:fpp-du-bound}
	For any bounded set \(S \subset H^k\), \(k \geq 2\), the set 
	\[ \{
	f''(h+v)|\nabla v_{\pm}|^2 \mid h \in S
	\} \]
	is also bounded in \(H^k\).
\end{lemma}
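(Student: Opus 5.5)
The plan is to absorb the logarithmic singularity of \(\nabla v_{\pm}\) at the cores into the factor of \(e^{v_{\pm}}\) hidden in \(f''\). Everything then reduces to the \(H^k\)-bounds already established in the proof of Lemma~\ref{lem:f-fp-hk-bound}.

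First, I write \(f''\) in terms of \(e^{h}\) and \(e^{v_{\pm}}\), as in~\eqref{eq:f-fp-hv}. From \(f'(x) = 2e^x/(e^x+1)^2\) one gets \(f''(x) = 2e^x(1-e^x)/(e^x+1)^3\). Substituting \(e^{h+v} = e^h e^{v_+}/e^{v_-}\) gives
\[
f''(h+v) = \frac{2e^{h}e^{v_{+}}e^{v_{-}}\bigl(e^{v_{-}} - e^{h}e^{v_{+}}\bigr)}{\bigl(e^{h}e^{v_{+}} + e^{v_{-}}\bigr)^{3}}.
\]
Hence, for the \(v_+\) term,
\[
f''(h+v)|\nabla v_{+}|^2 = \frac{2e^{h}e^{v_{-}}\bigl(e^{v_{-}} - e^{h}e^{v_{+}}\bigr)}{\bigl(e^{h}e^{v_{+}} + e^{v_{-}}\bigr)^{3}}\cdot \bigl(e^{v_{+}}|\nabla v_{+}|^2\bigr),
\]
and analogously for \(v_-\), with the roles of \(e^{v_+}\) and \(e^{v_-}\) exchanged in the prefactor.

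The first factor is handled exactly as in Lemma~\ref{lem:f-fp-hk-bound}. For \(h \in S\), the set \(S\) is \(C^0\)-bounded, so \(e^{h}e^{v_{+}} + e^{v_{-}} \geq \delta > 0\). The composition result of~\cite{runstSobolevSpacesFractional2011}, applied to a smooth extension of \(x \mapsto x^{-3}\), bounds the reciprocal cube in \(H^k\). The numerator is bounded because \(\|e^{h}\|_{H^k} \leq C\exp(\|h\|_{H^k})\), \(e^{v_{\pm}}\) is smooth, and \(H^k\) is an algebra for \(k \geq 2\). So the first factor is uniformly bounded in \(H^k\) for \(h \in S\).

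The key step, which I expect to be the main obstacle, is to show that \(W_{\pm} := e^{v_{\pm}}|\nabla v_{\pm}|^2\) is a smooth function on all of \(\Sigma\). Given this, multiplication by the fixed smooth \(W_{\pm}\) is bounded on \(H^k\) and the lemma follows. Away from the core set \(W_\pm\) is clearly smooth, so the work is local. Near a core point \(p\) of multiplicity \(d \geq 1\), take a holomorphic coordinate \(z\) with \(z(p)=0\). By~\eqref{eq:green-approx},
\[
v_{\pm} = d\log|z|^2 + \psi,
\]
with \(\psi\) smooth near \(p\); the other core points and \(\tilde G\) contribute only to \(\psi\). Then \(e^{v_{\pm}} = |z|^{2d}e^{\psi}\) and \(\nabla v_{\pm} = 2d\,x/|z|^2 + \nabla\psi\), where \(x\) is the position vector. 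Expanding,
\[
W_{\pm} = e^{\psi}\Bigl(4d^2|z|^{2d-2} + 4d|z|^{2d-2}\,x\cdot\nabla\psi + |z|^{2d}|\nabla\psi|^2\Bigr).
\]
Since \(d \geq 1\) is an integer, \(|z|^{2d-2} = (z\bar z)^{d-1}\) is smooth, so \(W_\pm\) is smooth near \(p\). A partition of unity over the finitely many core points then gives global smoothness.

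One subtlety must be checked: the metric factor in \(|\nabla\cdot|^2\) in conformal coordinates is smooth and positive, so it does not affect the argument. Also, at the cores of the opposite type, \(|\nabla v_{\pm}|^2\) is already smooth, so no further cancellation is needed there.
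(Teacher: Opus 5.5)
Your proposal is correct and follows essentially the paper's proof. You rewrite \(f''(h+v)\) in terms of \(e^{h}\) and \(e^{v_{\pm}}\), absorb the core singularity of \(|\nabla v_{\pm}|^{2}\) into a vanishing exponential factor so that it becomes a fixed smooth function, and bound the remaining factor via the reciprocal estimate from Lemma~\ref{lem:f-fp-hk-bound} and the algebra property of \(H^{k}\). The only differences are cosmetic: you pair \(e^{v_{\pm}}\) rather than the paper's \(e^{v_{+}}e^{v_{-}}\) with \(|\nabla v_{\pm}|^{2}\), and you give the local expansion that justifies the smoothness the paper merely asserts.
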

\begin{proof}
	In local coordinates \((x,y)\) centred at vortex (antivortex) position, 
	functions \(v_{\pm}\) satisfy the approximation 
	\[ v_{\pm} = \log (x^2 + y^2) + \tilde{v} \]
	for some smooth function \(\tilde{v}\). 
	This shows that \(e^{v_{\pm}}\) and \(e^{v_{+}} e^{v_{-}} {|\nabla v_{\pm}|}^2\) are smooth functions. Notice that \( f''(x) = {2e^x (1 - e^x)}{{(e^x + 1)}^{-3}}, \) then,
	\[ f''(h + v){|\nabla v_{\pm}|}^{2} = \frac{2 e^h (e^{v_{-}} - e^{h}e^{v_{+}})}{{(e^{h}e^{v_{+}} + e^{v_{-}})}^3} e^{v_{+}}e^{v_{-}} {|\nabla v_{\pm}|}^2. \]
	If \(C > 0\) is a constant such that \(\|h\|_{H^k} < C\) for \(h \in S\), 
	then there are constants \(M, K > 0\) such that \(\|e^h\|_{H^k} \leq M\) and \(\|{(e^{h}e^{v_{+}} + e^{v_{-}})}^{-1}\|_{H^k} < K\). Hence, 
	\begin{align*}
		\left\|f''(h+v) {|\nabla v_{\pm}|}^2 \right\|_{H^k} \leq 2MK^3 (\|e^{v_{-}}\|_{H^k} + M\|e^{v_{+}}\|_{H^k}) {\left\|e^{v_{+}}e^{v_{-}} {|v_{\pm}|}^2\right\|}_{H^k}.
	\end{align*}
\end{proof}
We know the functions \(h - \bar{h}\) are always \(H^2\)-bounded. Given a sequence of solutions to~\eqref{eq:elliptic-sys}, \((\kappa_{i}, h_{i}, N_{i})\), up to a subsequence, either \(\bar{h}_{i} \to c_{*}\) or \(\bar{h}_{i} \to \pm \infty\). In the first case, \(\{h_{i}\}\) is \(H^2\)-bounded and by Lemma~\ref{lem:fpp-du-bound} also \(\{f''(u_{i}) {|\nabla u_{i}|}^2 \}\). The second case is addressed in the following Lemma.
\begin{lemma}\label{lem:thinks-are-complicated-at-infty}
	Let \(\{h_{i}\}\) be a \(W^{2,p}\)-sequence, \(p > 2\), such that \(\{h_{i} - \bar{h}_{i}\}\) is bounded and \(\bar{h}_{i} \to \pm\infty\) and let \(u_{i} = h_{i} + v\),  then \(\{f''(u_i) {|\nabla u_{i}|}^2 \}\) is bounded in \(L^1\). 
\end{lemma}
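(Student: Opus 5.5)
Write $u_i = h_i + v = \tilde h_i + \bar h_i + v$ with $\tilde h_i = h_i - \bar h_i$. Then $\nabla u_i = \nabla \tilde h_i + \nabla v_+ - \nabla v_-$, and
\[ |\nabla u_i|^2 \le 3\bigl(|\nabla\tilde h_i|^2 + |\nabla v_+|^2 + |\nabla v_-|^2\bigr). \]
We bound the three contributions to $f''(u_i)|\nabla u_i|^2$ separately.

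\emph{The $\tilde h_i$ term.} Since $\{\tilde h_i\}$ is bounded in $W^{2,p}$ with $p>2$, Sobolev's embedding bounds it in $C^1$. Because $f''$ is bounded on $\mathbb{R}$,
\[ \int_\Sigma |f''(u_i)|\,|\nabla\tilde h_i|^2 \le \|f''\|_{L^\infty}\,|\Sigma|\,\sup_i\|\tilde h_i\|_{C^1}^2 . \]

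\emph{The $v_\pm$ terms.} Here $|\nabla v_\pm|^2$ behaves like $4/r^2$ near the cores and is not integrable, so the decay of $f''(u_i)$ near the cores must be used. We use the explicit factorisation already exploited in Lemma~\ref{lem:fpp-du-bound}. Since $|f''(x)| \le 2e^x/(e^x+1)^2 \le 2\min(e^x,e^{-x})$, we have
\[ |f''(u_i)| \le 2\min\bigl(e^{\tilde h_i+\bar h_i}e^{v_+-v_-},\; e^{-\tilde h_i-\bar h_i}e^{v_--v_+}\bigr). \]
Take the case $\bar h_i \to +\infty$ (the other is symmetric).

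Near a vortex $p$, use the second branch: the bound is at most $C e^{-\bar h_i} e^{v_-}e^{-v_+}$, which is bounded near $p$ but multiplied by $|\nabla v_+|^2$ still blows up like $1/r^4$. So the first branch is needed there instead: near $p$,
\[ |f''(u_i)|\,|\nabla v_+|^2 \le C e^{\bar h_i}\, e^{v_+}|\nabla v_+|^2 \sim e^{\bar h_i}, \]
which is bounded pointwise but not uniformly in $i$. This is the main obstacle, because for fixed $\bar h_i$ large the crossover point where $e^{u_i}\approx 1$ sits at radius $r_i \sim e^{-\bar h_i/2}$ around the vortex. I expect to handle it by scaling. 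Near a vortex of degree $n$, $e^{u_i} \approx a_i r^{2n}$ with $a_i \asymp e^{\bar h_i}$, uniformly since $\tilde h_i$ is $C^0$-bounded. Then
\[ \int_{B_\delta(p)} \frac{e^{u}}{(1+e^{u})^2}\,\frac{4n^2}{r^2}\, r\,dr\,d\theta \]
becomes, after substituting $s = a_i r^{2n}$, equal to $2\pi n\int_0^{a_i\delta^{2n}} \frac{ds}{(1+s)^2} \le 2\pi n$. This gives a bound uniform in $i$; indeed it is the classical statement that $|\nabla u|^2 f''(u)$ is a bounded total derivative. Alternatively, one can avoid the scaling by noting that $f''(u)|\nabla u|^2 = \nabla^2 f(u) - f'(u)\nabla^2 u$. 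Integrating over $\Sigma$ minus small discs, the first term contributes only a boundary term, and $f'(u_i)\nabla^2 u_i$ is $L^1$-bounded because $\nabla^2 h_i$ is $L^p$-bounded while the delta contributions are killed by $f'$ vanishing at the cores. Cross terms $\nabla\tilde h_i\cdot\nabla v_\pm$ are handled by Cauchy--Schwarz with the two estimates above.

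Near an antivortex, $u_i \to -\infty$ at the core, but since $\bar h_i \to +\infty$ the transition again occurs at radius $r \sim e^{-\bar h_i/2}$, and the same scaling computation, with $s = a_i^{-1} r^{-2n}$ or directly in $e^{-u}$, gives a uniform bound. Away from the cores, $|\nabla v|$ is bounded and $f''$ is bounded, so that region is trivial. Summing the finitely many core discs and the complement gives the $L^1$ bound.
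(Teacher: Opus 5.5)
Your proposal is correct and is essentially the paper's argument: the decisive step is the substitution $s \asymp e^{\bar h_i} r^{2n}$ on a conformal disc around each vortex, using the two-sided bound on $e^{\tilde h_i}$ from the $C^0$ bound on $\tilde h_i$, which bounds the core contribution by a multiple of $\int_0^\infty (1+s)^{-2}\,ds$. One correction: at an antivortex $u_i \to +\infty$ (not $-\infty$), so when $\bar h_i \to +\infty$ there is no transition layer there, and the paper simply bounds that region by $M e^{-\bar h_i}$ via $g(x) \le e^{-x}$; your substitution in $e^{-u_i}$ still works because $|f''|$ is even, and it makes your bound uniform in $\bar h_i$. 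Your alternative identity via $\nabla^2 f(u)$, by contrast, would only control the signed integral, since $f''$ changes sign.
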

\begin{proof}
	\(f''(x) = 2 (e^x{(e^x + 1)}^{-2}) (1 - 2 e^x {(e^x + 1)}^{-1})\), whence,
	\[ |f''(x)| \leq \frac{6 e^x}{(e^x + 1)^2}. \]
	It is sufficient to prove the lemma for \(g(x) = e^x{(e^x + 1)}^{-2}\). Assume without loss of generality \(\bar{h}_{i} \to \infty \) and let \((\mathbf{p}, \mathbf{q}) \in \Sigma^{k_{+}} \times \Sigma^{k_{-}}\) be the positions of the vortices and antivortices. Let  \(U_{n} \subset \Sigma \) be an small open neighbourhood about \(p_{n}\), \(n = 1, \ldots, k_{+}\) and let \(V = \Sigma \setminus \cup_{n} \bar{U}_{n}\) be an open set containing only antivortices. Notice that \(v = v_{+} - v_{-}\), where \(v_{+} \), \(v_{-}\) defined on~\eqref{eq:vpm} satisfy \(e^{v_{\pm}}\) is smooth and \(e^{v_{+}} > 0\) on \(V\). Let \(\tilde{h}_{i} = h_{i} - \bar{h}_{i}\) and let \(C > 0\) be a constant such that \(\|\tilde{h}_{i}\|_{H^2} \leq C\). 
	By Sobolev's embedding, 
	\(\{\tilde{h}_{i}\}\) is also bounded in \(C^1(\Sigma)\); hence, there is a constant \(K\) such that \(\|e^{-\tilde{h}_{i}}e^{-v_{+}}\|_{C^0(V)} \leq K\) and \({\|\nabla \tilde{h}_{i}\|}_{C^1(\Sigma)} < K\). Since \(g(x) \leq e^{-x} \), on \(V\) we have,
	\begin{align*}
		g(u_{i}) {|\nabla u_{i}|}^2 &\leq e^{-\bar{h}_{i}} e^{-\tilde{h}_{i}} e^{-v_{+}}e^{v_{-}} |\nabla (\tilde{h}_{i} + v)|^2 \\
		&\leq e^{-\bar{h}_{i}} K e^{v_{-}}(K^2 + 2 K |\nabla v| + |\nabla v|^2).
	\end{align*}
	Functions \(e^{v_-}|\nabla v|\) and \(e^{v_{-}} |\nabla v|^2 \) are continuous in \(\bar{V}\), whence, there is a constant \(M > 0\) such that on \(V\),
	\[ g(u_{i}) {|\nabla u_{i}|}^2 \leq e^{-\bar{h}_{i}} M. \]
	This shows \(g(u_{i}) {|\nabla u_{i}|}^2 \to 0\) as \(i \to \infty\) in \(L^1(V)\). For any \(U_{n}\), there are constants \(K_{1}\), \(K_{2} > 0\) such that \(K_{1} \leq e^{\tilde{h}_{i} - v_{-}} \leq K_{2}\), hence,
	\begin{align}
		g(u_{i}){|\nabla u_{i}|}^2 &\leq \frac{e^{\bar{h}_{i}}e^{v_{+}} K_{2}}{{(e^{\bar{h}_{i}}e^{v_{+}}K_{1} + 1)}^2} (K^2 + 2K |\nabla v| + |\nabla v|^2). \label{eq:gui-nablaui-bound}
	\end{align} 
	Assume \(U_{n}\) is sufficiently small for the existence of a chart \(\varphi_{n}: U_{n} \to D_{1}(0)\), where \(D_{1}(0) \subset \mathbb{R}^2 \) is the unit euclidean disk. Assume further \(\varphi_{n}(p_{n}) = 0\) and that the metric is conformal to the euclidean metric, \(\Omega (dx_{1}^2 + dx_{2}^2)\). Let us denote by \((r, \theta)\) polar coordinates in \(D_{1}(0)\), then 
	\[ v_{+}\circ \varphi_{n}^{-1} = 2 d \log r + \tilde{v}_{+}, \]
	where \(d\) is the degree of \(p_{n}\) and \(\tilde{v}_{+}\) is smooth. Hence, we can find another constants, also denoted \(K_{1}\) and \(K\), such that by~\eqref{eq:gui-nablaui-bound}, 
	\[ \int_{U_{n}} g(u_{i}){|\nabla u_{i}|}^2 \leq 2\pi \int_{0}^{1} \frac{e^{\bar{h}_{i}} K r^{2d-1} dr}{{(e^{\bar{h}_{i}} r^{2d} K_{1} + 1)}^2} = 
	\frac{\pi K}{d K_{1}}\left(
	1 - \frac{1}{e^{\bar{h}_{i}}K_{1} + 1}
	\right).
	\]
	Therefore, \(\int_{\Sigma} g(u_{i}) {|\nabla u_{i}|}^2 \) is bounded. The case \(\bar{h}_{i} \to -\infty\) is analogous.
\end{proof}
\begin{proposition}\label{lem:fpp-bounds}
	There is a constant \(C > 0\) such that for any solution \((h, N)\) of~\eqref{eq:elliptic-sys},
	\begin{equation}
		\kappa q \|N - f(h+v)\|_{L^2} \leq C. \label{eq:w-bound}
	\end{equation}
	If we restrict to a bounded set \(S \subset H^2\), then  
	\begin{equation}
		\kappa^2 q^2 \|N - f(h+v)\|_{L^2} \leq C, \quad \forall h \in S. \label{eq:w-bound-s}
	\end{equation}
\end{proposition}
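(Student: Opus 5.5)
We work with the equation \eqref{eq:elliptic-w} for \(w = N - f(u)\), \(u = h+v\):
\[
\nabla^2 w = (\kappa^2q^2 + 2qf'(u))w + 2qf'(u) N - f''(u)|\nabla u|^2 .
\]
The plan is to test this equation against \(w\) and against a constant, and to control the right-hand side in \(L^1\) using Lemmas~\ref{lem:fpp-du-bound} and~\ref{lem:thinks-are-complicated-at-infty}.

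\textbf{Step 1: an \(L^1\) bound on the source.} Write the source as \(g = 2qf'(u)N - f''(u)|\nabla u|^2\). The first term is uniformly bounded in \(L^\infty\), since \(f'\) is bounded and \(N\) obeys \eqref{eq:n-bounds}. For the second term, the means \(\bar h\) either stay bounded or diverge along subsequences.
\begin{itemize}
\item If \(\bar h\) stays bounded, then \(h\) is \(H^2\)-bounded and Lemma~\ref{lem:fpp-du-bound} applies, after expanding \(|\nabla u|^2 = |\nabla h|^2 + 2\nabla h\cdot\nabla v + |\nabla v|^2\).
\item If \(\bar h \to \pm\infty\), Lemma~\ref{lem:thinks-are-complicated-at-infty} gives a uniform \(L^1\) bound. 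It requires \(W^{2,p}\) control of \(\tilde h\), which follows from elliptic estimates because the right side of \eqref{eq:elliptic-sys} is \(L^\infty\)-bounded.
\end{itemize}
A contradiction/compactness argument over sequences then yields a single constant \(C_1\) with \(\|g\|_{L^1}\le C_1\) for all solutions.

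\textbf{Step 2: the energy estimate.} Multiply by \(w\) and integrate. Since \(f'\ge 0\), this gives
\[
\|\nabla w\|^2_{L^2} + \kappa^2q^2\|w\|^2_{L^2} \le \int |g||w|.
\]
\(w\) is bounded in \(L^\infty\) because \(N\) and \(f\) are, so the right side is at most \(C_1\|w\|_{L^\infty} \le C\). Hence \(\kappa^2q^2\|w\|_{L^2}^2\le C\), which is \eqref{eq:w-bound} after taking square roots. This is the route I would commit to for the first claim, and it needs no extra structure.

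\textbf{Step 3: the second-order bound \eqref{eq:w-bound-s}.} Here the energy method yields only the same square-root rate, so the plan is to use duality. For \(h\in S\) bounded in \(H^2\), Lemma~\ref{lem:fpp-du-bound} upgrades \(g\) to an \(L^2\) (indeed \(H^2\)) bound: \(\|g\|_{L^2}\le C_S\). Multiply instead by \(w\) and use Cauchy–Schwarz:
\[
\kappa^2q^2\|w\|_{L^2}^2 \le \|g\|_{L^2}\|w\|_{L^2},
\]
so \(\kappa^2q^2\|w\|_{L^2}\le C_S\).

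The main obstacle is Step 1 at the cores in the diverging-mean case, together with making the constant uniform over all solutions rather than along a fixed sequence. That is why I route through a contradiction argument using the dichotomy for \(\bar h\) stated before Lemma~\ref{lem:thinks-are-complicated-at-infty}. A secondary subtlety is justifying that \(f''(u)|\nabla u|^2\) is a genuine \(L^1\) function near the cores despite the logarithmic singularity of \(v\). The factorization in the proof of Lemma~\ref{lem:fpp-du-bound} handles this, since \(e^{v_+}e^{v_-}|\nabla v_\pm|^2\) is smooth.
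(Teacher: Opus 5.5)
Your proposal is correct and follows the paper's proof: test the equation for \(w\) against \(w\), pair an \(L^1\) bound on the source (from Lemma~\ref{lem:thinks-are-complicated-at-infty} when \(\bar h\) diverges) with \(\|w\|_{L^\infty}\le 2\|f(x)\|_{L^\infty}\) for the first estimate, and apply Cauchy--Schwarz with an \(L^2\) bound on the source for \(h\in S\) for the second. Your subsequence/contradiction step makes explicit the uniformity that the paper leaves implicit; one small caution is that for \(h\) merely bounded in \(H^2\) the term \(f''(u)|\nabla h|^2\) is controlled in \(L^2\) (indeed in every \(L^p\), \(p<\infty\)) but not in \(H^2\), as you and the paper both assert parenthetically, though \(L^2\) is all the argument needs.
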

\begin{proof}
	Let \(L = -\nabla^2 + \kappa^2q^2 + 2qf'(u) \). Recall \(w = N - f(h + v)\) and notice \(\|w\|_{L^\infty} \leq 2\|f(x)\|_{L^{\infty}}\). 
	Multiply~\eqref{eq:elliptic-w} by \(w\) and rearrange terms to obtain the following inequality,
	\begin{align}
		\kappa^2 q^2 \|w\|^2_{L^2} &\leq \langle Lw, w \rangle_{L^2} \nonumber \\
		&\leq |\langle f''(u)|\nabla u|^2, w\rangle_{L^2}| + 2q|\langle f'(u) N, w \rangle_{L^2}|. \label{eq:k2q2w-bound}
	\end{align}
	If we restrict to the bounded set \(S\), by lemmas~\ref{lem:f-fp-hk-bound} and~\ref{lem:fpp-du-bound} there is a constant \(C > 0\) such that \(\|f'(u)\|_{H^2} \leq C\) and \(\|f''(u)|\nabla u|^2\|_{H^2} \leq C\). \(N\) is uniformly bounded by \(\|f(x)\|_{L^{\infty}}\). Applying Cauchy-Schwartz to~\eqref{eq:k2q2w-bound}, we find that there is a constant, also denoted \(C\), such that 
	\[ \kappa^2q^2 \|w\|^2_{L^2} \leq C \|w\|_{L^2};\]
	therefore,~\eqref{eq:w-bound-s} holds in this case. For the general case, notice that 
	\begin{align*}
		|\langle f'(u)N, w \rangle_{L^2}| &\leq 2\|f'(x)\|_{L^\infty}\|f(x)\|_{L^{\infty}}^2, \\
		|\langle f''(u){|\nabla u|}^2, w\rangle_{L^2}| &\leq 2\|f(x)\|_{L^{\infty}} \|f''(u)|\nabla u|^2\|_{L^1}.
	\end{align*}
	For any solution \((h, N)\) to~\eqref{eq:elliptic-sys}, \(h \in W^{2,p}\), \(p \geq 2\), and \(h - \bar{h}\) is \(C^0\)-bounded. By lemma~\ref{lem:thinks-are-complicated-at-infty}, \(\|f''(u)|\nabla u|^2\|_{L^1}\) is bounded. 
	Therefore, the right side of~\eqref{eq:k2q2w-bound} is bounded by Lemma~\ref{lem:thinks-are-complicated-at-infty}; whence,~\eqref{eq:w-bound} holds.
\end{proof}

\begin{proposition}\label{prop:lim-infty}
	Let~\((\kappa_{i}, h_{i}, N_{i})\) be a sequence of solutions to~\eqref{eq:elliptic-sys} such that \(\kappa_{i} \to \pm \infty\). Up to a subsequence, one of the following alternatives holds as \(i \to \infty\),
	\begin{enumerate}
		\item \(\bar{h}_{i} \to \infty\) and \(\|h_{i} - \bar{h}_{i}\|_{C^1} + \|N_{i} - f_{\max}\|_{L^p} \to 0\), \(p \geq 2\).
		\item \(\bar{h}_{i} \to -\infty\) and \(\|h_{i} - \bar{h}_{i}\|_{C^1} + \|N_{i} - f_{\min}\|_{L^p} \to 0\), \(p \geq 2\).
		\item There is a constant \(c\) such that 
		\[ \int_{\Sigma} f'(c + v)f(c + v) = 0 \]
		and for any \(k \geq 0\), \(\|h_{i} - c\|_{C^k} + \|N_{i} - f(c + v)\|_{C^k} \to 0 \).
	\end{enumerate}
\end{proposition}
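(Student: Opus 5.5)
Throughout, $A=0$ is implicit, since by the bound on $\kappa^2$ for $A\neq0$ a sequence with $\kappa_i\to\pm\infty$ exists only when $k_+=k_-$. Write $\tilde h_i=h_i-\bar h_i$ and $w_i=N_i-f(u_i)$ with $u_i=h_i+v$. The plan is to first extract uniform bounds, then split according to the behaviour of the means $\bar h_i$.

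The right side of the $h$-equation in \eqref{eq:elliptic-sys} is $-2qw_i$. This is bounded in $L^\infty$ by $2\|f(x)\|_{L^\infty}$, using \eqref{eq:n-bounds}. Hence by $W^{2,p}$ elliptic estimates $\tilde h_i$ is bounded in $W^{2,p}$ for every $p$, and so in $C^1$. Moreover \eqref{eq:w-bound} in Proposition~\ref{lem:fpp-bounds} gives $\|w_i\|_{L^2}\le C/(|\kappa_i|q)\to0$. Since $w_i$ is $L^\infty$-bounded, interpolation gives $w_i\to0$ in every $L^p$. Then $\nabla^2\tilde h_i=-2qw_i\to0$ in $L^p$, so $\tilde h_i\to0$ in $W^{2,p}$ and hence in $C^1$, because $\tilde h_i$ has zero mean. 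Passing to a subsequence, $\bar h_i\to\infty$, $\bar h_i\to-\infty$, or $\bar h_i\to c$.

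If $\bar h_i\to\infty$, then $\tilde h_i\to0$ uniformly, so $f(u_i)\to f_{\max}$ pointwise away from the antivortices. Dominated convergence gives $f(u_i)\to f_{\max}$ in $L^p$, and combined with $w_i\to0$ in $L^p$ we get $N_i\to f_{\max}$ in $L^p$. This is alternative~1. Alternative~2 is symmetric.

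If $\bar h_i\to c$, then $h_i\to c$ in $C^1$, so $h_i$ is bounded in $H^2$. The refined bound \eqref{eq:w-bound-s} then gives $\kappa_i^2q^2\|w_i\|_{L^2}\le C$. To obtain the integral condition, integrate the $N$-equation: $\kappa_i^2q^2\int w_i=-2q\int f'(u_i)N_i$. I take the limit on the right while keeping the left term. The left side $\kappa_i^2q^2\int w_i$ is bounded but not obviously vanishing, so the cleaner route is the $h$-equation. With $A=0$, integrating it gives $\int w_i=0$ exactly, and therefore $\int f'(u_i)N_i=0$ for all $i$. Since $f'$ is bounded and $N_i=f(u_i)+w_i$ with $w_i\to0$ in $L^2$, dominated convergence yields $\int f'(c+v)f(c+v)=0$. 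For the $C^k$ convergence I bootstrap. By Lemmas~\ref{lem:f-fp-hk-bound} and~\ref{lem:fpp-du-bound} (together with the cross term $f''(u)\nabla h\cdot\nabla v$, which should be handled the same way), the right side of \eqref{eq:elliptic-w} is bounded in $H^k$ whenever $h_i$ is bounded in $H^{k}$. Applying the estimate $\|w\|_{H^{k+2}}\lesssim\|(-\nabla^2+\kappa^2q^2)w\|_{H^k}$ with the operator scaled by $\kappa^2$, this gives $\|w_i\|_{H^k}=O(\kappa_i^{-2})$. Then $\nabla^2\tilde h_i=-2qw_i\to0$ in $H^k$, so $\tilde h_i\to0$ in $H^{k+2}$. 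Iterating, $h_i\to c$ in every $H^k$, hence in every $C^k$. Finally $N_i=f(u_i)+w_i\to f(c+v)$ in $C^k$, because $f(h+v)$ depends smoothly on $h$ in $H^k$ by \eqref{eq:f-fp-hv}.

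The main obstacle is the bootstrap in the bounded-mean case. The term $f''(u)|\nabla u|^2$ in \eqref{eq:elliptic-w} has to be controlled in $H^k$ uniformly, including the mixed term involving $\nabla v$. I also need the resolvent bound for $-\nabla^2+\kappa^2q^2+2qf'(u)$ to decay like $\kappa^{-2}$ in $H^k$ at every order. The plan for this is to commute derivatives with the operator and use the positivity of $\kappa^2q^2$ at each step.
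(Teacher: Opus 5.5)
Your proposal is correct and is essentially the paper's proof. It uses the same decay of $w_i$ from \eqref{eq:w-bound} plus interpolation, the same trichotomy on $\bar h_i$, and the identity $\int_\Sigma f'(u_i)N_i=0$ from the integrated equations with $A=0$. It also uses an $H^k$ bootstrap driven by a $\kappa^{-2}$-scaled resolvent estimate, which the paper imports from Ricciardi's Lemma~2.4 instead of proving it by commuting derivatives.

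The only bookkeeping to tighten is that the source $2qf'(u_i)N_i$ also needs $H^k$ control of $N_i$. The paper gets this from $(-\epsilon_i\nabla^2+1+2q\epsilon_i f'(u_i))N_i=f(u_i)$ with $\epsilon_i=(\kappa_iq)^{-2}$. Alternatively, you can substitute $N=f(u)+w$ to obtain $(-\nabla^2+\kappa^2q^2+4qf'(u))w=f''(u)|\nabla u|^2-2qf'(u)f(u)$, whose right side depends only on $h$. Also note that $|\nabla h_i|^2$ loses one derivative, which is harmless since each step of your bootstrap still gains one.
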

\begin{proof}
	Let \(u_i = h_{i} + v\). 
	By proposition~\ref{lem:fpp-bounds}, \(\|N_{i} - f(u_{i})\|_{L^2} \to 0\) as \(i \to \infty\). \(\|N_{i} - f(u_{i})\|_{L^\infty}\) is bounded; by interpolation,
	\begin{equation}
		\lim_{i\to\infty} \|N_{i} - f(u_{i})\|_{L^p} = 0, \qquad p \geq 2. \label{eq:n-f-lim}
	\end{equation}
	By elliptic regularity and the equation for \(h\) in \eqref{eq:elliptic-sys}, \(\|h_{i} - \bar{h}_{i}\|_{W^{2,p}} \to 0\). Taking any \(p > 2\), Sobolev's embedding implies \(\|h_{i} - \bar{h}_{i}\|_{C^1} \to 0 \) as \(i \to \infty\). 
	Suppose \(\{\bar{h}_{i}\}\) unbounded and assume without loss of generality \(\bar{h}_{i} \to \infty\) as \(i \to \infty\). Since \(\{h_{i} - \bar{h}_{i}\}\) is \(C^0\)-bounded, \(\|f(u_{i}) - f_{\max}\|_{L^p} \to 0\) by the dominated convergence theorem. The triangle inequality and equation~\eqref{eq:n-f-lim} imply \(\lim_{i\to\infty} \|N_{i} - f_{\max}\|_{L^p} = 0\), \(p \geq 2\). The case \(\bar{h}_{i} \to -\infty\) is analogous. If \(\{\bar{h}_{i}\}\) is bounded, up to a subsequence,  \(\lim_{i\to\infty} \bar{h}_{i} = c\) for some constant \(c\) to be determined. We claim that 
	\begin{equation}
		\lim_{i\to \infty} \|h_{i} - c\|_{H^k} = 0, \quad  \lim_{i\to \infty} \|N_{i} - f(c + v)\|_{H^k} = 0, \qquad k \geq 2. \label{eq:induction-lim}
	\end{equation}
	This claim together with Sobolev's embedding imply the third alternative. We proceed by induction on Sobolev space \(H^k\). 
	Since \(\bar{h}_{i} \to c\), \eqref{eq:n-f-lim} and Schauder's estimates applied to the equation for \(h_{i}\) in~\eqref{eq:elliptic-sys} imply 
	\[ \lim_{i\to \infty} \|h_{i} - c\|_{H^2} = 0. \]
	In particular, \(\{h_{i}\}\) is \(H^2\)-bounded. 
	Continuity of the function \(h \mapsto f(h +v)\) as a map \(H^k \to H^k\), \(k \geq 2\), imply \(\lim_{i\to \infty} \|f(u_{i}) - f(c + v)\|_{H^2} = 0\). By Proposition~\ref{lem:fpp-bounds}, \(\{\kappa_{i}^2q^2 \|N_{i} -f(u_{i})\|_{L^2}\}\) is bounded and by Lemma~\ref{lem:f-fp-hk-bound}, \(\{f'(u_{i})\}\) is \(H^2\)-bounded. Hence, \(\{N_{i}\}\) is \(H^2\)-bounded by Schauder's estimates and equation~\eqref{eq:elliptic-sys}. 
	Let~\(\epsilon = (\kappa q)^{-2}\), recall \(w = N - f(u)\) in~\eqref{eq:elliptic-w}. This equation can be rewritten as
	\[ (-\epsilon\nabla^2 + 1 + 2q \epsilon f'(u_{i}))w_{i} = \epsilon (f''(u_{i})|\nabla u_{i}|^2 - 2q f'(u_{i})N_{i}). \]
	By \cite[Lem.~2.4]{ricciardiAsymptoticsNonlinearElliptic2003}, there is a constant \(C_{k} > 0\), independent of \(h_{i}\) and \(N_{i}\), and an  \(\epsilon_{0} > 0\) such that, for \(\epsilon \leq \epsilon_{0}\), 
	\begin{equation}
		\|w_{i}\|_{H^k} \leq C_{k}  \epsilon_{i} \|f''(u_{i})|\nabla u_{i}|^2 - 2q f'(u_{i})N_{i}\|_{H^k}, \qquad k \geq 2.  \label{eq:ricciardi-trick}
	\end{equation}
	Whence, \(\lim_{i\to \infty}\|N_{i} - f(u_{i})\|_{H^2} = 0\). It follows that
	\[ \lim_{i \to \infty} \|N_{i} - f(c + v)\|_{H^2} = 0. \]
	Assume~\eqref{eq:induction-lim} holds for some \(k \geq 2\). Then, \(\{h_{i}\}\) and \(\{N_{i}\}\) are bounded in \(H^k\). This implies \(f''(u_{i})|\nabla u_{i}|^2\) and \(f'(u_{i})\) are \(H^k\) bounded by Proposition~\ref{lem:fpp-bounds} and Lemma~\ref{lem:f-fp-hk-bound}. Equation~\eqref{eq:ricciardi-trick} implies \(\lim_{i \to \infty} \|N_{i} - f(u_{i})\|_{H^k} = 0\); hence, 
	\[ \lim_{i\to \infty} \|h_{i} - c\|_{H^{k+1}} \leq  \lim_{i\to \infty} \|h_{i} - c\|_{H^{k+2}} = 0 \]
	by Shauder's estimates applied to~\eqref{eq:elliptic-sys}. By continuity,
	\begin{equation}
		\lim_{i \to \infty} \|f(u_{i}) - f(c + v)\|_{H^{k+1}} = 0.  \label{eq:lim-fui}
	\end{equation}
	Note that the equation for \(N\) in~\eqref{eq:elliptic-sys} can be rearranged as 
	\[ (-\epsilon \nabla^2 + 1 + 2q \epsilon f'(u_{i})) N_{i} = f(u_{i}), \]
	whence, we can suppose that for \(\epsilon_{0}\) defined previously,
	\[ \|N_{i}\|_{H^k} \leq C_{k} \|f(u_{i})\|_{H^k}, \qquad k \geq 2, \]
	provided \(\epsilon_{i} \leq \epsilon_{0}\). Therefore, \(N_{i}\) is bounded in \(H^{k+1}\) as well as \(f'(u_{i}) N_{i}\) since \(H^{k+1}\) is an algebra. Equation~\eqref{eq:ricciardi-trick} implies \(\lim_{i\to \infty} \|N_{i} - f(u_{i})\|_{H^{k+1}} = 0\). This limit, together with~\eqref{eq:lim-fui} imply 
	\[ \lim_{i \to \infty} \|N_{i} - f(c + v)\|_{H^{k+1}} = 0. \]
	This concludes the induction. Since \(A = 0\), equation~\eqref{eq:int-fp-e-n} and the uniform convergence \(h_{i} \to c\), \(N_{i} \to f(c + v)\) imply
	\[ \int_{\Sigma} f'(c + v)f(c + v) = \lim_{i\to\infty} \int_{\Sigma} f'(u_{i})N_i = 0. \]
\end{proof}

\subsection*{Proof of Theorem~\ref{thm:limit-infty}}

The first part of the theorem is a direct consequence of Proposition~\ref{prop:existence-sols-a-0}. Given a sequence \((\kappa_{i}, \phi_{i}, \mathcal{A}_{i}, \mathcal{N}_{i}) \) of solutions to the Bogomol'nyi equations, it determines a sequence \((\kappa_{i}, h_{i}, N_{i}) \) of solutions to the elliptic system~\ref{eq:elliptic-sys}. Assume without loss of generality \(\kappa_{i} \to \infty\) as \(i \to \infty\). Let \(U\) be an open neighbourhood of the core set. Recall \(N = -\kappa \mathcal{N}\), \(f_{\max} = 1 + \tau\) and \(f_{\min} = -1 + \tau \). 
The limits in Theorem~\ref{thm:limit-infty} are restatements of the limits in Proposition~\ref{prop:lim-infty}. 
If \(\bar{h}_{i} \to \pm \infty \), 
\begin{equation*}
	\lim_{i \to \infty}\|(\phi_{3})_{i} \pm 1\|_{C^1(\Sigma \setminus \bar{U})} = 0
\end{equation*}
as in the proof of Theorem~\ref{thm:existence}, (e.g. equation~\eqref{eq:lim-phi3-1}). 
 If \(\bar{h}_{i}\) is bounded, the third alternative in Proposition~\ref{prop:lim-infty} implies for any \(k \geq 0\),
\[ \|h_{i} - c\|_{H^{k+2}} \to 0, \qquad \text{as } i \to \infty. \]
Equation~\eqref{eq:phi3-as-fn-h} shows that as a function \(H^{k+2} \to H^{k+2}\), \(h \mapsto \phi_{3}(h)\) is continuous. By Sobolev's embedding,
\[ \lim_{i \to \infty}\|(\phi_{3})_{i} - \phi_{3}(c)\|_{C^k} \leq C_{Sob} \lim_{i \to \infty}\|(\phi_{3})_{i} - \phi_{3}(c)\|_{H^{k+2}} = 0. \]

\section{Symmetric deformations on the
  sphere}\label{sec:symm-deform-sphere}

In this section we study Chern-Simons deformations on the sphere. 
Assume vortices are located at the north pole
of the round sphere, whereas antivortices are located at the south pole. 
Choose trivialisations $\phi_{\pm}: U_{\pm} \to \mathbb{S}^2$, where 
$U_{\pm} = \mathbb{S}^2 \setminus \{(0, 0, \mp 1)\}$, and stereographic projection charts from the south or north pole respectively,
$\varphi_{\pm}:U_{\pm} \to \mathbb{C}$. These projections are
related by a gauge transformation, which by spherical symmetry is,
\begin{equation*}
	\varphi_+ = \frac{e^{in\theta}}{\varphi_-}, \qquad x \in U_+\cap U_-.
\end{equation*}
If the connection is represented locally by $a_{\pm} \in \Omega^1(U_{\pm})$, then 
\begin{equation*}
	a_+ = a_- + n d\theta, \qquad x \in U_+ \cap U_-.
\end{equation*}

Stereographic coordinates will be denoted
 $x_{\pm} = r_{\pm} e^{i\theta_{\pm}}$. Hence, in $U_+\cap U_-$, 
 $x_+x_- = 1$ and $\theta_+ =-\theta_-$. We choose the ansatz,
\begin{align*}
	\varphi_{\pm} & = f_{\pm}(r_{\pm})e^{i k_{\pm}\theta_{\pm}}, &
	a_{\pm}       & = a_{\pm}(r_{\pm})\,d\theta_{\pm},
\end{align*}
which is justified by rotational symmetry. Compatibility of the fields then requires $n = k_+ -
	k_-$. The Bogomol'nyi equations reduce to a system of ODEs,
\begin{align*}
	f_{\pm}'  & = \frac{1}{r} (k_{\pm} \mp a_{\pm})\, f_{\pm},                                                                 \\
	a_{\pm}'  & = r \Omega(r) B_{\pm},                                                                                         \\
	N_{\pm}'' & = -\Omega(r) \left(\kappa B_{\pm} - \frac{4 f_{\pm}^2N_{\pm}}{(1 + f_{\pm}^2)^2}\right) - \frac{1}{r}N'_{\pm},
\end{align*}
where,
\begin{equation*}
	\Omega(r) = \frac{4R^2}{(1 + r^2)^2}, \qquad
	B_{\pm}   = - \left(\kappa N_{\pm} + \tau \pm 1 \mp \frac{2}{1 + f_{\pm}^2}\right).
\end{equation*}

We solved the Bogomol'nyi equations in the punctured disk
$\mathbb{D}_1(0)\setminus\{0\}$ adding the compatibility conditions,
\begin{align*}
	f_+(1)f_-(1)    & = 1,         &
	a_+(1) + a_-(1) & = n,         &
	N_+(1)          & = N_-(1),    &
	N_+'(1)         & = - N_-'(1),
\end{align*}
together with the lowest order approximation to the fields at $r = 0$,
\begin{align*}
	f_{\pm} & = q_{\pm}r^{k_{\pm}} + \mathcal{O}\left(r^{k_{\pm} + 1}\right),
	\\
	B_{\pm} & =
	\begin{cases}
		-\left(\kappa p_{\pm} + \tau \pm 1 \mp \frac{2}{1 + q_{\pm}^2}\right) + \mathcal{O}(r), &
		k_{\pm} = 0,                                                                              \\
		-(\kappa p_{\pm} + \tau \pm 1 \mp 2) + \mathcal{O}(r),                                  &
		k_{\pm} \neq 0,                                                                           \\
	\end{cases}
	\\
	a_{\pm} & = 2 B_{\pm}R^2r^2 + \mathcal{O}(r^3),
	\\
	N_{\pm} & =
	\begin{cases}
		p_{\pm} + \left(
		-\kappa B_{\pm} + \frac{4q_{\pm}^2p_{\pm}}{(1 + q_{\pm}^2)^2}
		\right)\,R^2r^2 + \mathcal{O}(r^3),
		 & k_{\pm} = 0,    \\
		p_{\pm} -
		\kappa B_{\pm}
		\, R^2r^2 + \mathcal{O}(r^3),
		 & k_{\pm} \neq 0.
	\end{cases}
\end{align*}

To find the initial stable solution, we used a shooting method in
the interval $[\delta, 1]$ for a small value $\delta > 0$. Given
initial conditions for the parameters, $Z =
(q_+, q_-, p_+, p_-)$, we solved the Bogomol'nyi
equations on \(\mathbb{D}_{1}(0)\) and defined a map $M: \mathbb{R}^4 \to \mathbb{R}^4$,
\begin{align*}
	Z \mapsto (f_+(1)f_-(1) - 1, a_+(1) + a_-(1) - n,
	N_+(1) - N_-(1), N'_+(1) + N'_-(1)),
\end{align*}
whose zero determines suitable initial conditions. In order to obtain a family of Chern-Simons deformations, we
applied the pseudo-arclength continuation method described in
\cite{flood2018chern} for deformations of the Abelian-Higgs model. Given initial data $(\kappa_0, Z_0) \in \mathbb{R}^5$, we sought a nearby point $(\kappa, Z)$ such that,
\begin{equation*}
	\dot Z_0 \cdot (Z - Z_0) + \dot \kappa_0\,(\kappa - \kappa_0) =
	\delta s,
\end{equation*}
for a small positive constant $\delta s$. We restricted to
$\kappa > 0$ and solved the Bogomol'nyi equations for \(\kappa_{+} = 2, \kappa_{-} = 0\) in a sphere of radius \(R = 2\) and for \(\kappa_{+} = 1, \kappa_{-} = 1\) in a sphere of radius \(R = 1.5\). 

Figure~\ref{fig:sols20} shows field evolution in the $(2, 0)$ case. The fields converge according to Theorem~\ref{thm:existence} as expected. We remark that the theorem claims \(dN_{i} = \kappa_{i} d\mathcal{N}_{i} \to 0\) as \(i \to \infty\); however, we cannot rule out the possibility that \(d\mathcal{N}_{i}\) diverges.  
Since \(e = d\mathcal{N}\) by~\eqref{eq:e}, data suggests  \(d\mathcal{N}_{i}\) is bounded.  
Figure~\ref{fig:sols11} shows field evolution in the $(1, 1)$ case. 
Data suggests that in this 
case for any $\kappa > 0$ there is exactly one solution to the Bogomol'nyi equations. We reduced the radius of the sphere to keep pseudo arclength continuation convergent. In this case, Theorem~\ref{thm:limit-infty} gives three possible limits for the fields. In the sphere, Green functions are of the form 
\[ G(x,y) = \frac{1}{4\pi}\log (1 - x\cdot y) + c \]
for some constant \(c \in \mathbb{R}\). A direct calculation shows that condition~\eqref{eq:c-cond} holds for \(c = 0\). 
Data supports that the convergence in this case is to the limit given by alternative 2 of the theorem.

\begin{figure}
	\centering
	\includegraphics[width=\columnwidth]{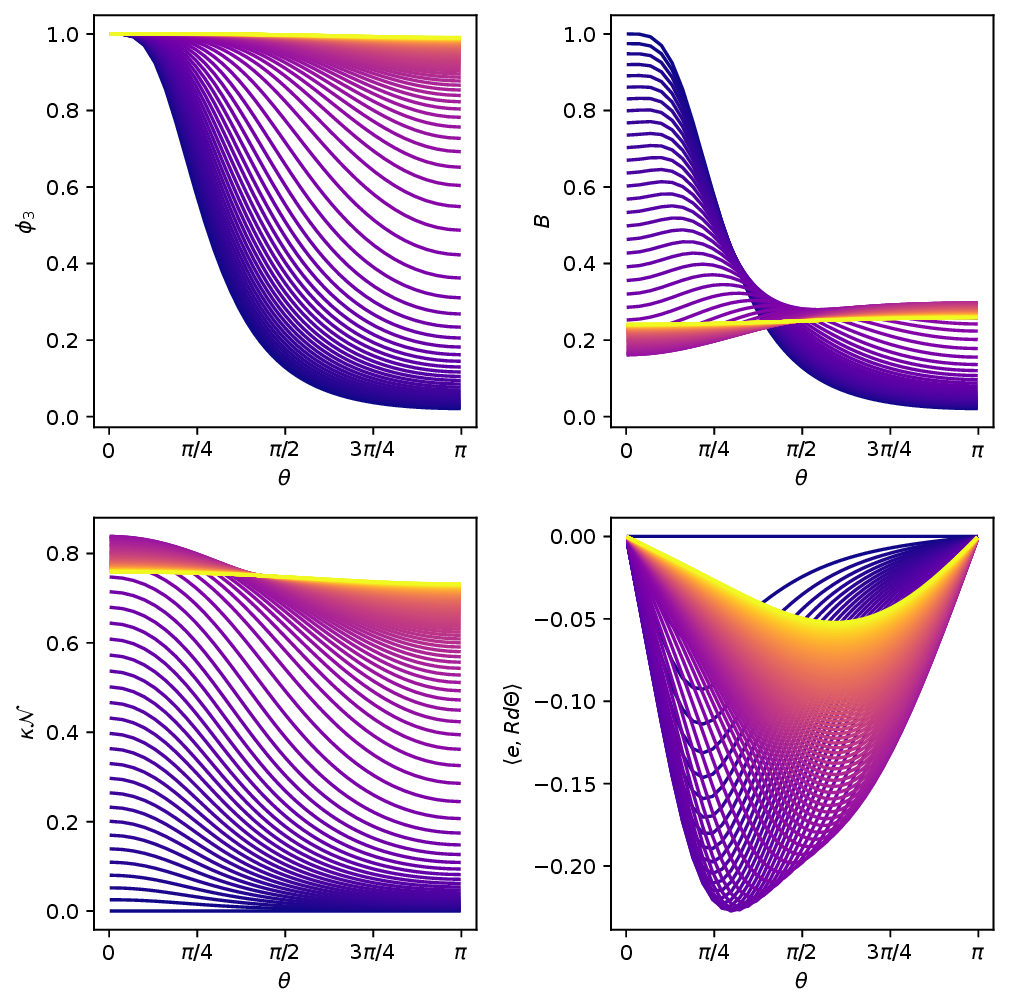}
	\caption{Family of solutions to the Bogomol'nyi equations on a 
		sphere of radius 2 along declination angle $\theta$. 
		Two vortices are located at north pole with no antivortices. \textbf{Top row}. Higgs field north pole projection \(\phi_{3}\) and magnetic field \(B\). \textbf{Bottom row}. Scalar field rescaling by the Chern-Simons constant and electric field in the declination angle direction. 
		The asymmetry parameter was set to $\tau = 0$ and $q = 1$. 
		Pseudo-arclength continuation advances towards lighter colour.}
	\label{fig:sols20}
\end{figure}

\begin{figure}
	\centering
	\includegraphics[width=\columnwidth]{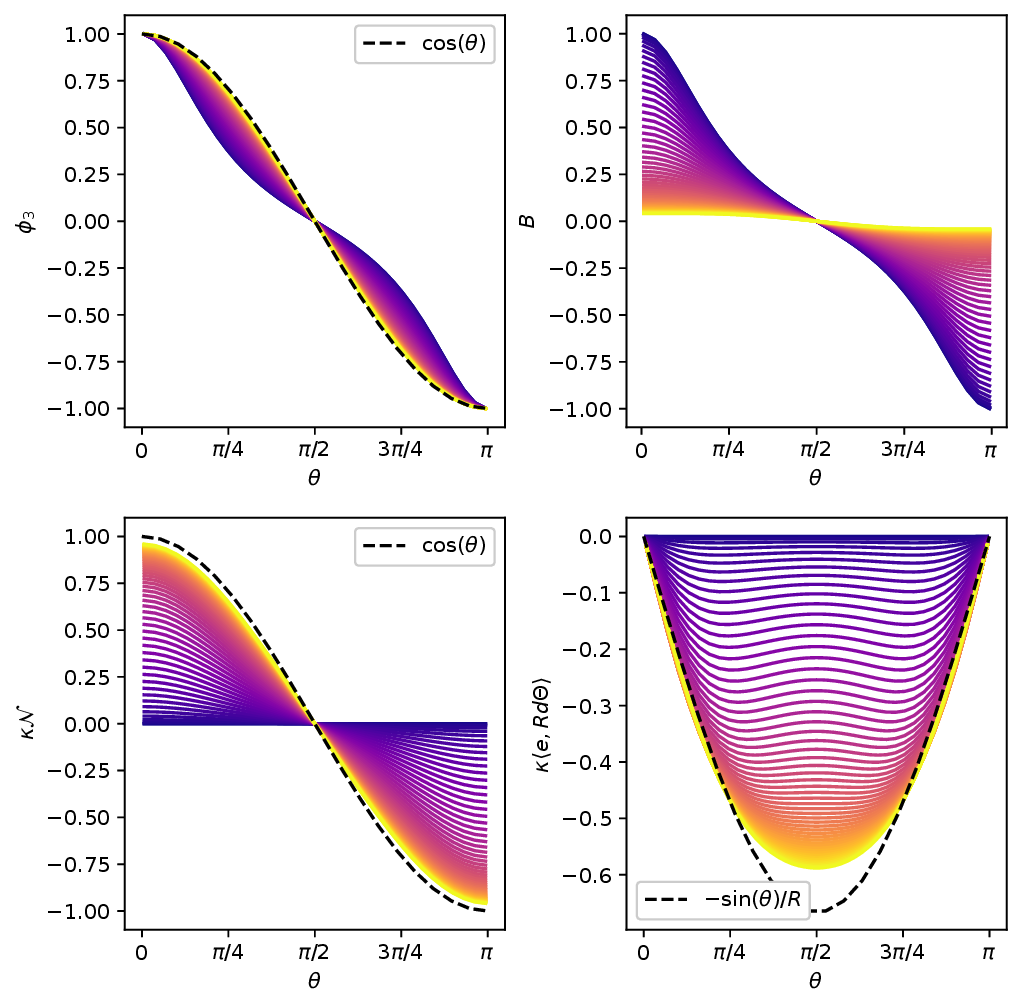}
		
	\caption{Family of solutions to the Bogomol'nyi equations on a 
		sphere of radius $1.5$ along declination angle $\theta$. 
		A vortex is located at north pole and an antivortex at south pole. 
		\textbf{Top row}. Higgs field north pole projection \(\phi_{3}\) and magnetic field \(B\). \textbf{Bottom row}. Scalar field and electric field in the declination angle direction, both rescaled by the Chern-Simons deformation constant \(\kappa\). 
		The asymmetry parameter was set to $\tau = 0$ and $q = 1$. 
		Dashed lines represent the non-trivial limits as $\kappa \to \infty$. 
		Pseudo-arclength continuation advances towards lighter colour.}
	\label{fig:sols11}
\end{figure}

\section*{\large{Data Availability}}

Data sets generated for figures~\ref{fig:sols20} and~\ref{fig:sols11} are available from the corresponding author on reasonable request.

\section*{\large{Acknowledgements}}

René Israel García Lara acknowledges the support by the UNAM Postdoctoral Program (POSDOC) and by SECIHTI-SNI 219537.

\bibliographystyle{plain}
\bibliography{references}

\end{document}